\documentclass{article}
\usepackage{amssymb,amsthm,a4wide,longtable}
\usepackage{tikz}

\def\H{\mathop{\hbox{\bf H}}}
\def\HN{\bar\Gamma^*_N}
\def\I{\mathop{\hbox{$I$}}}
\def\R{\mathbb{R}}

\newtheorem{theorem}{Theorem}
\newenvironment{optproblem}{\trivlist
   \item[\hskip \labelsep{\bfseries Optimization Problem}]\itshape}{\endtrivlist}

\newtheorem{proposition}[theorem]{Proposition}

\title{\bf Using multiobjective optimization 
to map the entropy region of four random variables}
\author{L\'aszl\'o Csirmaz
\thanks{Central European University and
R\'enyi Institute, Budapest}
\thanks{e-mail:~csirmaz@renyi.hu }
\thanks{Supported by TAMOP-4.2.2.C-11/1/KONV-2012-0001 and the Lendulet
program}}
\date{}

\begin{document}
\maketitle

\begin{abstract}

Mapping the structure of the entropy region in higher dimensions is an
important open problem, as even partial knowledge about this region has far
reaching consequences in other areas in mathematics like information theory,
cryptography, probability theory and combinatorics. Presently the only known
method of exploring the entropy region is the one of Zhang and Yeung from
1998. Using some non-trivial properties of the entropy function, their
method is transformed to solving high dimensional linear multiobjective
optimization problems.

Beson's outer approximation algorithm is a fundamental tool for solving such
optimization problems. An improved version of Benson's algorithm is
presented which requires solving one scalar linear program in each iteration
rather than two or three as in previous versions. During the algorithm
design special care was taken for numerical stability. The implemented
algorithm was used to verify previous statements about the entropy region,
as well as to explore it further. Experimental results demonstrate the
viability of the improved Benson's algorithm for determining the extremal
set of medium-sized numerically ill-posed optimization problems. With larger
problem sizes, two limitations of Benson's algorithm were observed: the
inefficiency of the scalar LP solver, and the unexpectedly large number of
intermediate vertices.

\bigskip\noindent
{\bf Keywords:} multiobjective programming, effective solutions, entropy
region, Benson's algorithm.

\noindent
{\bf AMC numbers:} 90C60, 90C05, 94A17, 90C29

\end{abstract}

\section{Introduction}\label{sec:intro}

Exploring the 15 dimensional entropy region formed by the entropies of the
non-empty subsets of four random variables is an intriguing research
problem. 
The entropy function maps the nonempty subsets of a finite set of random
variables into the Shannon entropies of the marginal distributions. The
range of the entropy function is the {\em entropy region}; it is a subset of
a high-dimensional Euclidean space indexed by the non-empty subsets of the
random variables.
Inequalities that hold for the points of the region are
called {\em information theoretic}.  The entropy region is bounded
by hyperplanes corresponding to the well-known Shannon information
inequalities. 

Presently the only available method which goes beyond the standard
Shannon inequalities is the one of Zhang and Yeung from 1998.
The method starts with a description of ``copy
steps,'' which determine a (usually very) high dimensional linearly
constrained region. The projection of this polytope onto the 15 dimensional
space of the original entropies contains the entropy region, and, quite
frequently, its facets yield new (linear) entropy inequalities. Using some
non-trivial properties of the entropy region, this problem is transformed
into the problem of finding all extremal vertices of a 10-dimensional
projection of a high dimensional polytope -- which problem lies in the realm
of {\em linear multiobjective optimization}.

Benson's outer approximation algorithm 
is a fundamental tool
for solving multiobjective linear optimization problems. Compared to the
original version and its refinements, 
we introduce a modification which leads to a significant
improvement. The improvement is based on the observation that
the scalar LP instance, which is used to decide
whether an objective point is on the boundary of the projection or not, can
also provide a separating facet when the point is outside the facet. In all
earlier published versions of the algorithm, a separate LP instance was used 
to find such a facet.\footnote{A.~H.~Hamel, A.~L\"ohne and B.~Rudloff in 
\cite{hamel-loehne-rudloff} 
have observed the same improvement independently.} This improvement is of
independent interest as it applies to all versions of Benson's algorithm.

The implemented algorithm was used successfully to check earlier results on
the entropy region. It also generated hundreds of new entropy inequalities, and was
essential in formulating a general conjecture about the limits of the
Zhang--Yeung method. The experiments indicated the shortcomings of the
implemented variant of the algorithm, and raised an interesting theoretical
question about the structure of high dimensional polytopes.

\subsection{The entropy region}

The Shannon-inequalities bound the $2^N-1$-dimensional entropy region;
this bounding polytope is known as the {\em Shannon bound}. 
If $N=2$, then the entropy region and the Shannon bound coincide; 
if $N=3$, then the Shannon bound
is the closure of the entropy region, and there are missing points on the
boundary. (In fact, the boundary looks like a fractal, its exact structure
is unknown.) In the case of $N\ge 4$, the Shannon bound strictly exceeds 
the closure of the entropy region \cite{Yeung-course}.

To map the structure of the entropy region for $N\ge4$ is an intriguing open
problem. Even partial knowledge about this region has important consequences
in several
mathematical and engineering disciplines. N.~Pippenger argued in
\cite{pippenger} that linear information inequalities encode the fundamental
laws of Information Theory, which determine the limits of information
transmission and data compression. In communication networks the capacity
region of any multi-source network coding can be expressed in terms of the
entropy region, see the thorough review on network coding in \cite{BMRST}.
Information inequalities have a direct impact on the converse setting with
multiple receivers \cite{Yeung-course}. In cryptography, such inequalities
are used to establish bounds on the complexity of secret sharing schemes
\cite{Beimel-survey}. In probability theory, the implication problem of
conditional independence among subvectors of a random vector can be
rephrased as the investigation of the lower dimensional faces of the entropy
region \cite{studeny, matus-studeny}. Guessing number of games on directed
graphs are related to network coding, where new bounds on the entropy region
provide sharper bounds \cite{baber-etal}. Information theoretic
inequalities surface in additive combinatorics \cite{madiman-etal}, and are
intimately related to Kolmogorov complexity \cite{MMRV}, determinant
inequalities and group-theoretic inequalities \cite{Chan}.

The very first information theoretic inequality which showed that the
entropy region is strictly contained in the Shannon bound was found by Zhang
and Yeung in 1998 \cite{zhang-yeung}. Since then, many other inequalities
have been found based on their idea \cite{dougherty-etal, matus-infinite}. 
Until now
this is the only technique at our disposal: all other proposed techniques
were shown to be equivalent to the Zhang-Yeung method \cite{kaced}.

\subsection{Mapping the entropy region is an optimization problem}

Section \ref{sec:problem} outlines why the technique of Zhang and Yeung 
sketched above is 
equivalent to solving a multiobjective linear optimization problem,
with the main 
focus on describing
the general form of these problems.
Using some non-trivial properties of the entropy region, in the case of $N=4$ 
random variables the {\em objective space} of the optimization can be
reduced to be 10 dimensional.
The exact details of how to generate the optimization problem 
are given in the Appendix. 
Solving these linear optimization problems are especially challenging as
\begin{itemize}
\item[a)] the size of the problem grows exponentially, and becomes 
   prohibitively large very soon;
\item[b)] while the linear constraints form a sparse matrix, there are many
non-trivial linear combinations among them; consequently
\item[c)] the whole system is numerically ill-posed.
\end{itemize}
When $N\ge 5$, the corresponding optimization problems have 
less structure, are two order of magnitude larger,
and even in the simplest case no existing optimization technique seems to be 
able to handle them.

\subsection{Benson's algorithm revisited}

Benson's outer approximation algorithm \cite{benson} is a fundamental tool
for solving linear multiobjective optimization problems. It works in the 
low dimensional
{\em objective} space, which in the $N=4$ case has 10 dimensions, rather in
the
much larger (several hundred dimensional) problem space.  It was a natural
choice to use Benson's algorithm for solving the optimization problem
described in Section \ref{sec:problem}, and in Section \ref{sec:generating}
we describe an improved version of Benson's
algorithm. Material of Section \ref{sec:problem} is of independent
interest as the improvement applies to other variants of Benson's algorithm 
as well. The
original version \cite{benson} and other published variants
\cite{projective, ehrgott, heyde-loehne} use two scalar LP instances in each
iteration step, while this version requires solving a single LP instance of
the same size in each iteration. In typical applications of Benson's
algorithm the computation time outside the LP solver is negligible, thus the
total running time of the algorithm is reduced considerably. The same
improvement can be applied to the ``dual'' optimization problem as defined
in \cite{heyde-loehne}, thus both the primal and the dual problem require
solving the same number of scalar LP instances. As a consequence of this
optimization, the discrepancy between the running times of the primal and
dual variants, as observed in \cite{ehrgott}, vanishes. The same improvement
of Benson's algorithm has been observed independently, and put into a wider
context by Hamel, L\"ohne, and Rudloff in \cite{hamel-loehne-rudloff}.

After sketching the general idea behind the improved version, we prove its
correctness in Section \ref{subsec:correct}. The
termination of the algorithm is immediate from the facts that the extremal
polytope has finitely many facets and finitely many vertices, and that each
iteration generates either a new vertex or a new facet of the extremal
polytope.

The modified algorithm is detailed in Section \ref{subsec:details}. It uses
the {\em double description method} \cite{fukuda-prodon} for vertex
enumeration. 
Section \ref{subsec:LP} discusses the
modifications of the simplex-based LP solver we employed which
improves efficiency and numerical stability.

\subsection{The results}

Section \ref{sec:experimental} describes the experimental results. The
algorithm was run successfully for all 133 copy strings described in
\cite{dougherty-etal}. For each of those strings, all extremal solutions of
the corresponding multiobjective linear optimization problem was generated.
The 10 dimensional extremal solutions corresponded to the ``strongest''
entropy inequalities this copy string could yield.

Copy strings leading to larger optimization problems were also considered.
As the result, the total number of known computer-generated information-theoretic
entropy inequalities grew from 214 in \cite{dougherty-etal} to more than 470.

We also ran the algorithm successfully on a couple of significantly larger
problems, where the symmetry of the problem allowed to reduce the dimension
of the objective space (the dimension of the extremal polytope) from 10 to three.
Results achieved here were essential in formulating a general conjecture
about the limits of the Zhang--Yeung method \cite{book-conjecture}.

Section \ref{sec:conclusion} concludes the paper where we also discuss the
shortcomings of the described variant of Benson's algorithm. The double
description method, which is used to enumerate the intermediate vertices and
facets, seems to be the bottleneck when the extremal polytope has several
thousand vertices. It is an interesting open question how many extra
vertices the intermediate polytopes might have compared to the final number
of vertices and facets. We settle this question
(at least up to constant multipliers)
when the dimension of the
objective space is three, but in
higher dimensions, there is a huge discrepancy between the lower and upper
bounds.

\section{How mapping the entropy region leads to multiobjective
optimization}\label{sec:problem}

The entropy region is a subset of the non-negative orthant of the
$2^N-1$-dimensional Euclidean space,
and its closure (in the usual Euclidean topology) is denoted by $\HN$
\cite{Yeung-course}. It is known that $\HN$ is a convex, closed, pointed
cone, and the entropy region misses only boundary points, see \cite{matus}.

The region $\HN$ is bounded by linear facets corresponding to the so-called
Shannon entropy inequalities. If $N\le3$, $\HN$ is exactly the polytope
determined by these Shannon inequalities, while for $N\ge 4$, it is a proper
subset. Hyperplanes that cut into the Shannon polytope and contain all
entropic points on one side are called {\em non-Shannon (entropy)
inequalities}. The first such inequality was found by Zhang and Yeung in
1998 \cite{zhang-yeung}. Since then, many other inequalities have been
found. For a thorough discussion and new results, see
\cite{dougherty-etal, matus-infinite, kaced}. The Zhang--Yeung method
can be outlined as follows \cite{dougherty-etal, zhang-yeung, kaced}. 
The process starts with four
random variables $\xi_1$, $\xi_2$, $\xi_3$, and $\xi_4$. They are split into
two groups. An independent copy of the first group is created over the
second group, and these new {\em auxiliary} random variables are added to
the pool of existing (random) variables. Due to independence, several 
linear equations hold for their entropies.
This copy step is then repeated as described by the {\em copy string}.
The process
yields an extension of the original set of random variables and a list of
linear dependencies among their entropies. Then
all Shannon inequalities
for the entropies of this extended set are collected, and 
all linear dependencies are added. As
the Shannon inequalities are linear, this results in a large set of
(homogeneous) linear inequalities among the entropies of the subsets of the
original and the auxiliary random variables. Finally, it is checked whether
this set of homogeneous linear inequalities has any (new) consequences on
the fifteen entropies of the original four variables, $\xi_1$, $\xi_2$,
$\xi_3$, and $\xi_4$.

Taking the dual view, consequences of a set of homogeneous linear
inequalities are their non-negative linear combinations. From all the
combinations, we are only interested in those where all coefficients are
zero, except for the 15 non-empty subsets of the original four random
variables. The collection of these latter combinations is a convex, closed
15-dimensional pointed polyhedral cone yielding all required consequences.
The result we would like to get is the {\em extremal rays} of this cone,
that is, the best possible entropy inequalities, which are not superseded by
any linear combination of others. This view leads naturally to {\em
multiobjective optimization} as follows. In our case the {\em problem space}
consists of the (non-negative) combining coefficients (denoted by $x$) of
the Shannon inequalities, that is, the non-negative orthant of $\R^n$ where
$n$ is the number of inequalities. The {\em constraints} are the (linear)
conditions on $x\in\R^n$ expressing the fact that in the combined
inequality, all entropies containing any auxiliary random variable should
vanish. That is,
$$
         A'x=0, ~~~\mbox{where $x\ge 0$,  $x\in\R^n$}
$$
for an $m$ by $n$ matrix $A'$ compiled from the Shannon inequalities
and linear dependencies. The {\em objective space} is $\R^{15}$,
which corresponds to the 15 entropies of the original four random variables. 
The {\em objective} of the optimization is the vector of the coefficients of
these entropies in the resulting combined inequality, thus it can be written
as $P'x$, where $P'$ is a $15\times n$ matrix. The region whose extremal
rays provide the solution for our problem is the pointed polyhedral cone
$$
    \mathcal C' = \{ P'x :\, A'x = 0,~ x\ge 0 \}.
$$
Using information-theoretic considerations as discussed in
\cite{matus-studeny, balanced,
matus-personal, ww}, it is more convenient to look at
$\mathcal C'$ in a different coordinate system.  Let us consider the cone
$$
   \mathcal C = U\mathcal C' = \{ UP'x :\, A'x=0,~ x\ge 0 \},
$$
where $U$ is a $15\times15$ unimodular matrix (for details, please see the
Appendix). There are many advantages of considering $\mathcal C$ instead of
$\mathcal C'$. First, we can set one $U$-coordinate -- the so-called
Ingleton coordinate -- to 1, cutting the pointed cone $\mathcal C$ to a
polytope $\mathcal P$. Rather than searching for the extreme rays in
$\mathcal C$, now we can search for the vertices of $\mathcal P$.

Second, the other 14 $U$-coordinates are all non-negative entropy
expressions. Consequently, the coordinates of $y\in\mathcal P$ in these
directions are necessarily non-negative, that is, $\mathcal P$ lies in the
non-negative orthant of $\R^{14}$. Furthermore, if $y\in\mathcal P$ and
$y'\ge y$ coordinatewise, then $y'\in \mathcal P$. Thus the vertices of
$\mathcal P$ are exactly the {\em extremal} points of $\mathcal P$, where
$y\in\mathcal P$ is extremal if no $y'\le y$, different from $y$, is in
$\mathcal P$.

Third, the polytope $\mathcal P$ is known to be the direct product of a
10-dimensional polytope $\mathcal Q$ and the non-negative orthant of $\R^4$,
see \cite{balanced}. Therefore, the extremal vertices of $\mathcal P$ are
the extremal vertices of $\mathcal Q$ with four zero coordinates added.
One can get the points of $\mathcal Q$ directly by merging these five
additional constraints on $x$ (that the Ingleton coordinate in $UP'x$ should
be $1$, and the four additional coordinates in $UP'x$ should be zero) to the
original constraints $A'x = 0$.

The problem of finding the minimal set of entropy inequalities which
generate (via non-negative linear combinations) all other entropy
inequalities resulting from a given copy string has thus
been transformed into the following multiobjective linear optimization
problem.

\begin{optproblem}
Given the $m\times n$ matrix $A$, the $p\times n$ matrix $P$ with
$p=10$ generated from the copy string,
find the extremal vertices of the $p$-dimensional polytope
\begin{equation}\label{eq:2}
    \mathcal Q = \{ Px : Ax=b, \, x\ge 0 \},
\end{equation}
knowing that $\mathcal Q$ is in the non-negative orthant of $\R^p$, and the
column vector $b$ contains $1$ in the Ingleton row, and zero elsewhere.
\end{optproblem}

\noindent
Indeed, given the linear constraints $Ax=b$, $x\ge 0$ in the $n$-dimensional
problem space $\R^n$, one has to simultaneously minimize the $p$ linear 
objectives given by the matrix $Px$. The solution of the optimization
problem is the complete list of the extremal points of $\mathcal Q$, that
is, those $p$-tuples from $\mathcal Q$ where no coordinate can be decreased
without increasing some other coordinate and still remain in $\mathcal Q$.
This list gives the coefficients of the minimal set of entropy inequalities 
generating all consequences of the copy string.

\section{Improved variant of Benson's outer algorithm}\label{sec:generating}

Benson's algorithm \cite{benson} solves the Optimization Problem defined in
Section \ref{sec:problem} working in the $p$-dimensional objective space,
the range of $Px$, rather than in its much larger domain $\R^n$. The outline
of this algorithm is as follows. It starts from a polytope $S_0$ containing
$\mathcal Q$. In each iteration the algorithm maintains a convex bounding
polytope $S_n$ by listing all of its vertices and all of its facets. If all
vertices of $S_n$ are on the boundary of $\mathcal Q$, then we are done, as
the extremal vertices of $\mathcal Q$ are among the vertices of $S_n$.
Otherwise, we select a vertex of $S_n$ that is not in $\mathcal Q$, connect
it to an internal point of $\mathcal Q$, and find the intersection of this
line with the boundary of $\mathcal Q$. Let the intersection point be $\hat
y_n$. Then, we find a facet of $\mathcal Q$ which is adjacent to $\hat y_n$.
We add this facet to $S_n$ to get $S_{n+1}$, determine the new vertices of
$S_{n+1}$, and iterate the method. The algorithm always terminates after
finitely many iterations.

Several improvements have been suggested to the original algorithm: see,
among others, \cite{projective, ehrgott, heyde-loehne}. The first paper
discusses how $S_0$ is chosen, which may have a heavy impact on the
performance of the algorithm. Other papers suggest improvements related to
the steps where we need to find the new vertices of $S_{n+1}$ and decide
whether a new vertex is on the boundary of $\mathcal Q$.

The improvement of Benson's algorithm this paper describes is achieved by
merging the steps of finding a boundary point and finding the adjacent facet
of $\mathcal Q$. We note that the same improvement was found independently
in \cite{hamel-loehne-rudloff}.

\subsection{Extremal points}
\newcommand\QT{\mathcal Q^{+}}

The transpose of matrix $M$ is denoted by $M^T$. Vectors are
usually denoted by small letters, and are considered single column
matrices. For two vectors $x$ and $y$ of the same dimension, $xy$ denotes
their inner product, which is the same as the matrix product $x^Ty$.

\medskip

Recall that $A$ is an $m\times n$ matrix mapping $\R^n$, the {\em problem
space}, to $\R^m$, and $P$ is a $p\times n$ matrix mapping the problem space
into $\mathbb R^p$, the {\em objective space}. (In our case $n$ is the
number of Shannon inequalities generated from the copy string, and $p$ is
10, see Section \ref{sec:problem}.) Let $\mathcal A$ be the polytope
$$
    \mathcal A = \{ x\in \mathbb R^n: Ax=b,\, x\ge 0\,\}
$$
for the fixed vector $b$. Then
$$
    \mathcal Q = \{ Px : \, x \in\mathcal A \} .
$$
For better clarity, $y$ will denote points of the objective space $\R^p$,
while points 
in the problem space will be denoted by $x$. The point $y\in\mathcal Q$ is
{\em extremal} if no other point in $\mathcal Q$ ``supersedes'' it, that is,
from $y'\le y$, $y'\in\mathcal Q$ it follows that $y'=y$. The point $y$ is
{\em weakly extremal} if there is no $y'<y$ in $\mathcal Q$ (i.e. when all
coordinates of $y'$ are smaller than the corresponding coordinate in $y$).

We want to generate the set of extremal vertices of $\mathcal Q$, but instead of
$\mathcal Q$, we consider another polytope, $\QT$, which has the same set of
extremal points, and is easier to handle \cite{benson}. This polytope is
defined as
\begin{equation}\label{eq:qtdef}
  \QT = \{ y \in \mathbb R^p : y\ge y'\,  \mbox{ for some $y'\in\mathcal Q$}
       \,\}.
\end{equation}
In fact, $\QT$ is the Minkowski sum of $\mathcal Q$ and the non-negative
orthant of $\mathbb R^p$, thus it is a convex closed polytope. The following
facts can be found, e.g., in \cite[Proposition 4.3]{ehrgott}.

\begin{proposition}\label{claim:basicQT}
a) The extremal points of $\QT$	 and $\mathcal Q$ are the same.

b) The weakly extremal points of $\QT$ are exactly its boundary points. \qed
\end{proposition}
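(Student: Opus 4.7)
The plan is to prove each part by unwinding the definition of $\QT$ as the Minkowski sum $\mathcal Q + \R^p_{\ge 0}$, exploiting the fact that $\QT$ is upward closed under the coordinatewise order: if $y\in\QT$ and $z\ge y$, then $z\in\QT$.

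For part (a), I would show the two inclusions separately. If $y$ is extremal in $\mathcal Q$ and $y'\in\QT$ satisfies $y'\le y$, the defining property of $\QT$ furnishes some $y''\in\mathcal Q$ with $y''\le y'\le y$; extremality of $y$ in $\mathcal Q$ forces $y''=y$, which sandwiches $y'=y$, so $y$ is extremal in $\QT$. Conversely, if $y$ is extremal in $\QT$, pick any witness $y'\in\mathcal Q$ with $y'\le y$; since $y'\in\mathcal Q\subseteq\QT$, extremality in $\QT$ yields $y'=y$, and in particular $y\in\mathcal Q$. Any further $y''\in\mathcal Q$ with $y''\le y$ also lies in $\QT$, so extremality in $\QT$ again gives $y''=y$, establishing extremality in $\mathcal Q$.

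For part (b), upward-closedness is again what does the work. If $y$ is not weakly extremal, there is some $y'\in\QT$ with $y'<y$ strictly in every coordinate; setting $\delta=\min_i(y_i-y'_i)>0$, any point $z$ with $\|z-y\|_\infty<\delta$ satisfies $z\ge y'$ coordinatewise and hence lies in $\QT$, making $y$ an interior point. Conversely, if $y$ is weakly extremal, then for every $\epsilon>0$ the point $y-\epsilon\mathbf{1}$ is strictly below $y$ and so is not in $\QT$; this produces external points arbitrarily close to $y\in\QT$, so $y$ is a boundary point. The argument is essentially bookkeeping and I do not foresee a real obstacle; the only technical point worth recording is that $\QT$ is closed (so that ``boundary'' is unambiguous), which is already noted in the preceding text since $\mathcal Q$ is a bounded polytope and $\R^p_{\ge 0}$ is a closed convex cone.
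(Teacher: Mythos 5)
Your proof is correct. The paper itself does not prove this proposition at all: it simply cites Proposition 4.3 of the Ehrgott--L\"ohne--Shao reference and closes with \(\sq\), so your argument is a self-contained rederivation rather than a variant of an in-text proof. Both of your parts are sound: the two inclusions in (a) are exactly the right sandwiching via a witness \(y''\in\mathcal Q\) below a comparable point of \(\mathcal Q^{+}\), and in (b) the upward-closedness of \(\mathcal Q^{+}=\mathcal Q+\R^p_{\ge0}\) correctly turns ``not weakly extremal'' into ``interior'' (the \(\ell_\infty\)-ball of radius \(\delta=\min_i(y_i-y'_i)\) sits above \(y'\)), while weak extremality forces the points \(y-\epsilon\mathbf{1}\notin\mathcal Q^{+}\) arbitrarily close to \(y\), so \(y\) is a boundary point; together with \(y\in\mathcal Q^{+}\) this gives the equivalence. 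The only remark worth making is your closing justification of closedness: \(\mathcal Q\) is the linear image of the feasible region \(\{x:Ax=b,\ x\ge0\}\), which need not be bounded, so ``compact plus closed cone'' is not quite the right reason; the clean argument is that \(\mathcal Q^{+}\) is a Minkowski sum of two polyhedra and hence itself a polyhedron, which is closed --- and this is precisely what the paper asserts just before the proposition. Since you use closedness only to make the word ``boundary'' unambiguous, this does not affect the validity of (a) or (b).
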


\subsection{Finding a boundary point and a supporting
hyperplane at the same time}\label{subsec:correct}

The crucial step in Benson's algorithm is to find a boundary point of $\QT$,
and then to find the supporting hyperplane at that point. In the original
version, it is achieved by solving two appropriately chosen scalar LP
problems \cite{benson, ehrgott, heyde-loehne}. In our version, we need to
solve only one scalar LP instance to do both jobs.

To describe the procedure, let $q\in\mathbb R^p$ be an internal point of
$\QT$, and let $d\in\mathbb R^p$ be some direction in the $p$-dimensional
space. Consider the ray starting at $q$ with the direction $-d$, that is,
points of the form $q-\lambda d\in\mathbb R^p$ where $\lambda\ge 0$ is a
non-negative real number. If not the whole ray is in $\QT$, then there is a
$\hat\lambda >0$ so that $q-\lambda d\in\QT$ if and only if
$\lambda\le\hat\lambda$. By the next proposition, this threshold
$\hat\lambda$ can be found by solving a scalar LP problem.

\begin{proposition}\label{claim:point+support}
Suppose $q\in\QT$ is an internal point, and not the whole ray $\{ q-\lambda d
:\,\lambda\ge 0\}$
is in $\QT$. Let $\hat\lambda$ be the solution of the LP problem
$$
\max\nolimits_{\lambda,x} \{\,\lambda:\, q-\lambda d \ge Px,\,Ax=b,\,x\ge
0\,\} .
\eqno\hbox{$\mathrm{P}(q,d)$}
$$
Then $\hat y = q-\hat\lambda d$ is a boundary point of $\QT$. Let moreover
$(\hat u,\hat v)$ be a place where the dual problem
$$
\min\nolimits_{u,v} \{\,
 bu+qv :\, A^Tu+P^Tv\ge 0,\, dv=1,\, v\ge 0\,\}
\eqno\hbox{$\mathrm{D}(q,d)$}
$$
takes the same extremal value. Then $\{y\in\mathbb R^p: y\hat v = \hat
y\hat v\}$ is a supporting hyperplane to $\QT$ at $\hat y$.
\end{proposition}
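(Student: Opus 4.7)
The plan is first to show that $\mathrm{P}(q,d)$ is feasible and bounded so that $\hat\lambda$ is attained and $\hat y=q-\hat\lambda d$ is a boundary point of $\QT$, then to derive $\mathrm{D}(q,d)$ by standard LP duality, and finally to combine strong duality with the sign constraints on $\hat v$ and $x$ to show that the hyperplane determined by $\hat v$ supports $\QT$ at $\hat y$.

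For the first step, $\mathrm{P}(q,d)$ is feasible because $q\in\QT$ yields $x_0\ge 0$ with $Ax_0=b$ and $q\ge Px_0$, so $(\lambda,x)=(0,x_0)$ is feasible. The hypothesis that the ray is not entirely in $\QT$ rules out unboundedness, because $q-\lambda d\in\QT$ for arbitrarily large $\lambda$ would force the whole ray into the closed set $\QT$. Hence $\hat\lambda$ is attained, and for the optimal $x$ we have $\hat y=q-\hat\lambda d\ge Px$, giving $\hat y\in\QT$. For every $\varepsilon>0$, $q-(\hat\lambda+\varepsilon)d\notin\QT$, so every neighborhood of $\hat y$ meets the complement of $\QT$, placing $\hat y$ on the boundary.

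For the second step, I would derive $\mathrm{D}(q,d)$ by the standard recipe: writing $\mathrm{P}(q,d)$ as $\max 1\cdot\lambda+0\cdot x$ subject to $Px+\lambda d\le q$ (dual multiplier $v\ge 0$) and $Ax=b$ (free dual $u$), with $\lambda$ free and $x\ge 0$, the dual is $\min bu+qv$ subject to $dv=1$ (from $\lambda$ free), $A^Tu+P^Tv\ge 0$ (from $x\ge 0$), and $v\ge 0$. This is exactly $\mathrm{D}(q,d)$, and strong LP duality gives $\hat\lambda=b\hat u+q\hat v$.

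For the supporting hyperplane claim, $d\hat v=1$ forces $\hat v\ne 0$, so $H:=\{y:y\hat v=\hat y\hat v\}$ is a genuine hyperplane. For any $y\in\QT$ pick $x\ge 0$ with $Ax=b$ and $y\ge Px$; using in turn $\hat v\ge 0$ with $y\ge Px$, the dual inequality $P^T\hat v\ge -A^T\hat u$ with $x\ge 0$, and $Ax=b$,
$$
 y\hat v\ \ge\ (Px)\hat v\ =\ x(P^T\hat v)\ \ge\ -x(A^T\hat u)\ =\ -(Ax)\hat u\ =\ -b\hat u .
$$
Using $d\hat v=1$ and strong duality, $\hat y\hat v=(q-\hat\lambda d)\hat v=q\hat v-\hat\lambda=-b\hat u$, so the chain reads $y\hat v\ge\hat y\hat v$ for every $y\in\QT$. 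Since $\hat y\in\QT\cap H$, the hyperplane $H$ supports $\QT$ at $\hat y$. The main obstacle I expect is the careful bookkeeping of LP duality — matching signs, free versus non-negative variables, and equality versus inequality constraints — to confirm the derived dual is precisely $\mathrm{D}(q,d)$; once that is in place, the supporting-hyperplane inequality is the short chain above driven by $\hat v\ge 0$, primal feasibility, and the dual constraint $A^T\hat u+P^T\hat v\ge 0$.
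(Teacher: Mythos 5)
Your proof is correct and follows essentially the same route as the paper: identify the feasible $\lambda$'s with the ray condition to obtain the boundary point $\hat y$, then combine strong duality for the pair $\mathrm{P}(q,d)$/$\mathrm{D}(q,d)$ with $\hat v\ge 0$ and the representation of points of $\mathcal Q^{+}$ as $y\ge Px$ to get $y\hat v\ge \hat y\hat v$. The only cosmetic differences are that where the paper fixes $\hat v$ and dualizes once more over $u$ to obtain $\min_x\{x(P^T\hat v):Ax=b,\,x\ge0\}=q\hat v-\hat\lambda$, you read the same lower bound $x(P^T\hat v)\ge -b\hat u$ directly from dual feasibility and identify $-b\hat u=q\hat v-\hat\lambda$ via strong duality; and note that ruling out unboundedness of $\mathrm{P}(q,d)$ really uses the convexity of $\mathcal Q^{+}$ together with $q\in\mathcal Q^{+}$, not its closedness.
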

\begin{proof}
The first part of the proposition is clear: $q-\lambda d$ is in $\QT$ if it
is ${}\ge Px$ for some $x\ge 0$ with $Ax=b$. The LP problem
$\mathrm{P}(q,d)$ simply searches for the largest $\lambda$ with this
property. From this, it also follows that $\hat y$ is a boundary point of
$\QT$: there are points arbitrarily close to $\hat y$ which are not in
$\QT$.

The dual problem $\mathrm{D}(q,d)$ has the same optimal value as the primal
one, thus
\begin{equation}\label{eq:claim-dual}
\min\nolimits_{u,v} \{\,
 bu+qv :\, A^Tu+P^Tv\ge 0,\, dv=1,\, v\ge 0\,\} = \hat\lambda.
\end{equation}
Fixing the $\hat v$ part of the solution, the minimum is still $\hat\lambda$
as $u$ runs over its domain $\mathbb R^m$, and therefore
$$
    \max\nolimits_{u} \{ -bu :\, A^T(-u) \le P^T\hat v\, \} = q \hat v -
\hat\lambda.
$$
Consequently its dual also has the same optimal value:
$$
    \min\nolimits_{x}\{ x(P^T\hat v) : \, Ax=b, \, x\ge 0 \} = q \hat v -
\hat\lambda.
$$
Now $\hat v\ge 0$ by (\ref{eq:claim-dual}), and an arbitrary point $y$ of
$\QT$ can be written as $z+Px$ where $z\ge 0$, $z\in\mathbb R^p$ and $x\ge
0$, $Ax=b$, $x\in\mathbb R^n$, and so
$$
  (z+Px)^T\hat v = z\hat v + x(P^T\hat v) \ge 0 + q\hat v - \hat \lambda.
$$
On the other hand, $d\hat v =1$ by (\ref{eq:claim-dual}) again, thus
$$
   q\hat v - \hat\lambda = q\hat v -\hat\lambda(d\hat v) =
       (q - \hat\lambda d)\hat v = \hat y \hat v.
$$
This means that for any point $y$ of $\QT$, we have $y\hat v \ge \hat y\hat
v$, furthermore $\hat y$ is in $\QT$. Thus $\{y\in\mathbb R^p:
y\hat v = \hat y\hat v\}$ is a supporting hyperplane to $\QT$, as was
claimed.
\end{proof}

We remark that if $\{y\in\mathbb R^p: y w = \hat y w\}$ is a supporting
hyperplane to $\QT$ at $\hat y\in\QT$, then, necessarily, $w\ge 0$.  Indeed,
if $w_i$ were negative, then letting $e_i$ be the unit vector with 1 at the
$i$'th coordinate, $\hat y + e_i \in\QT$, and $(\hat y+e_i)w = \hat y w +
w_i < \hat y w$, which is a contradiction.

\subsection{Initial bounding polytope}\label{subsec:initial-polytope}

Following the ideas of Burton and Ozlen in \cite{projective}, we extend the
objective space by {\em positive ideal} elements. As they showed, this
extension does not restrict the applicability of the algorithm, but
simplifies the intermediate polytopes.

First of all, we know that points of $\mathcal Q$ have non-negative
coordinates. Thus $\QT$ is included in the non-negative orthant of $\mathbb
R^p$, that is, it is part of the ``ideal'' $p$-dimensional simplex with the
the origin and the $p$ ``positive endpoints'' of the coordinate axes of
$\mathbb R^p$ as its vertices.

During the algorithm we will be dealing with two kinds of ``objective''
points: ordinary (finite) points $y=\langle y_1,\dots,y_p\rangle$ lying in
the non-negative orthant (that is, $y_i\ge 0$), and {\em ideal points} at
the positive endpoints of the rays $m=\langle m_1,\dots, m_p\rangle$ in the
non-negative orthant (that is, $m_i\ge 0$ for all $i$). Using homogeneous
coordinates as suggested in \cite{projective} we can handle both types
uniformly: ordinary points have coordinates $(y,1)=\langle
y_1,\dots,y_p,1\rangle$, while ideal points can be conveniently written as
$(m,0)$, indicating that rays are invariant under multiplication by a
(positive) scalar. A {\em hyperplane} is a $p+1$-tuple $(w,d)=\langle
w_1,\dots,w_p,d\rangle$. A point $(z,j)$ is on a hyperplane if
$(z,j)^T\cdot(w,d)=\allowbreak zw + jd= 0$, and it is on its {\em
non-negative side} if $(z,j)^T\cdot(w,d)\ge 0$. All ideal points are
on the non-negative side of the hyperplane $(w,d)$ if and only if $w_i\ge 0$
for each $i$. The line connecting the points $(y,1)$ and $(m,0)$ intersects
the hyperplane $(w,d)$ in the (ordinary) point $(y+\lambda m,1)$, where the
scalar $\lambda$ is determined by the condition
$$
  (y+\lambda m, 1 )^T\cdot(w,d) = y^Tw + \lambda m^Tw + d = 0 ,
$$
where $m^Tw \not=0$, as otherwise $(m,0)$ is on the hyperplane.

\bigskip

Benson's algorithm starts with an {\em internal point $q\in\QT$}, and an
{\em initial bounding polytope $S_0 \supseteq\QT$}. In the course of the
algorithm, $q$ will be connected to the vertices of the polytope $S_n$, and
these lines will ideally intersect the boundary of $\QT$ in the relative
interior of some $p-1$-dimensional facet. If this happens, then the
supporting hyperplane to $\QT$ at the intersection point is uniquely
determined, and is, in fact, a facet of $\QT$. This preferable event occurs
if the internal point $q$ is in {\em general position}, that is, not in any
hyperplane determined by any $p$ points among the union of the vertices of
$\QT$ and $S_0$, $\dots$, $S_n$. As points of the objective space not in
general position have measure zero, choosing $q$ randomly from a set with
positive measure will ensure that the above event will happen with
probability 1. We will discuss the choice of $q$ in detail later.

The choice of the initial surrounding polytope $S_0$ is quite natural. It is
the $p$-dimensional ideal simplex with vertices $(0,\dots,0,1)$ (the
origin), and the ideal points $(e_i,0)$, where the coordinates of the ray
$e_i$ are zero except for the $i$-th coordinate. $p$ facets of this simplex
are the coordinate hyperplanes with the equation $(e_i,0)$. The $p+1$-st
facet is the {\em ideal plane} containing all ideal points. As $\QT$ is not
empty, this facet is, in fact, part of $\QT$, thus the ideal points
$(e_i,0)$ are {\em boundary points of $\QT$}. Moreover, it follows from the
remark after the proof of Proposition \ref{claim:basicQT} that ideal points
are on the non-negative side of any supporting hyperplane of $\QT$.

\subsection{Details of Benson's algorithm}\label{subsec:details}

Benson's algorithm constructs a sequence $S_n$ of bounding polytopes. We
discussed in Section \ref{subsec:initial-polytope} how to select the initial
polytope $S_0$. We also have an internal point $q\in\QT$ {\em in general
position}; we will return later to how it can be generated. With each
polytope $S_n$ we also maintain two sets: a set of hyperplanes such that the
intersections of the non-negative sides of these hyperplanes is exactly
$S_n$, and the list of vertices of $S_n$, indicating whether each vertex is
known to be a boundary point of $\QT$ or not.

Suppose we have obtained $S_n$, $n\ge 0$, and we proceed to generate
$S_{n+1}$.
\begin{enumerate}
\item\label{step1}
Let us look at the vertices of $S_n$, and select one which is not marked as a
boundary point of $\QT$. If no such vertex can be found, then the algorithm
terminates: according to Proposition \ref{claim:basicQT}, these vertices are the 
extremal vertices of $\QT$, thus the extremal 
vertices of $\mathcal Q$.

\item\label{step2}
Let $y$ be the vertex selected. Then, the boundary point of $\QT$ needs to
be found on the line segment $y$---$q$ by solving the scalar LP problem
$\mathrm{P}(q,q-y)$. The solution is denoted by $\hat\lambda$. According to
Proposition \ref{claim:point+support}, if $\hat\lambda=1$, then $q-(q-y)=y$
is on the boundary of $\QT$. If so, it is marked as such, and return
to step \ref{step1}.

\item\label{step3}
Otherwise, $0<\hat\lambda < 1$. Let us compute $\hat y = q -
\hat\lambda(y-q)$, which point is on the boundary of $\QT$. By the second
part of Proposition \ref{claim:point+support}, if $(\hat u,\hat v)$ is the
solution to the dual problem $\mathrm{D}(q,q-y)$, then the hyperplane
$h=(\hat v,-\hat y^T\hat v)$ written in homogeneous coordinates is a
supporting hyperplane to $\QT$ at $\hat y$, and $\QT$ is on its non-negative
side. Also, as $q$ is in general position, $h$ is a facet of $\QT$. We can
therefore add $h$ to the hyperplanes of $S_n$ to create the polytope
$S_{n+1}$.

\item\label{step4} 
Next, the vertices of $S_{n+1}$ need to be computed using the double
description method, see \cite{fukuda-prodon} as follows. Vertices of $S_n$ on the
non-negative side of $h$ remain vertices of $S_{n+1}$. All other vertices of
$S_{n+1}$ are the intersection points of the relative interiors of some
edges (two dimensional face) of $S_n$ and $h$.
\end{enumerate}
Please note that in Step \ref{step4}, all considered edges of $S_n$
intersect $h$ at different vertices of $S_{n+1}$, so there is no need to
check for equality between them. This leaves us to determine whether a pair
of vertices $(v_1,v_2)$ of $S_n$ is an edge of $S_n$.  To this end we use
the following observation from \cite{projective}.
\begin{proposition}\label{claim:edgetest}
Vertices $v_1$ and $v_2$ of $S_n$ are connected by an edge if and only if
every other vertex is missed by some facet of $S_n$ containing both $v_1$
and $v_2$. \qed
\end{proposition}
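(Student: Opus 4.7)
The plan is to prove both directions by exploiting the standard fact that every proper face of a convex polytope equals the intersection of the facets containing it, so the smallest face containing a given set of vertices is determined by the facets through those vertices.

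For the forward direction, assume $v_1$ and $v_2$ are the endpoints of an edge $e$ of $S_n$. Since $e$ is a face of $S_n$, it is the intersection of the facets of $S_n$ that contain it, and these are exactly the facets containing both $v_1$ and $v_2$ (any facet through $v_1$ and $v_2$ must contain the segment joining them, hence $e$). Now pick any vertex $v$ different from $v_1$ and $v_2$. Because $v \notin e$ and $e$ equals the intersection of the facets through $e$, at least one such facet must exclude $v$. That facet contains both $v_1$ and $v_2$ and misses $v$, which is what we need.

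For the converse, suppose every other vertex of $S_n$ is missed by some facet containing both $v_1$ and $v_2$. Let $F$ be the intersection of all facets containing $v_1$ and $v_2$; this is a face of $S_n$ containing $\{v_1,v_2\}$. By hypothesis, no third vertex of $S_n$ lies in all such facets, so no third vertex lies in $F$. Consequently $F$ is a face of $S_n$ whose vertex set is exactly $\{v_1,v_2\}$. Since $v_1 \ne v_2$ we have $\dim F \ge 1$, and since in the projectivized setting of Section~\ref{subsec:initial-polytope} the polytope $S_n$ has finitely many (possibly ideal) vertices and every face is the convex hull of its vertices, a face with exactly two vertices must be one-dimensional, i.e.\ an edge. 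Hence $v_1$ and $v_2$ are connected by an edge.

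The only point requiring care will be the last step of the converse: ensuring that a face of $S_n$ whose vertex set has exactly two elements really is a $1$-dimensional face. Treated in affine coordinates this is immediate for bounded polytopes, but $S_n$ may contain ideal points; working in homogeneous coordinates as in Section~\ref{subsec:initial-polytope}, $S_n$ corresponds to a bounded polytope (the intersection with the simplex $(e_i,0), (0,\dots,0,1)$), so the argument goes through uniformly for ordinary and ideal vertices. I would simply invoke this projective closure to avoid a separate case analysis for rays. Everything else is a direct consequence of the face lattice of a polytope, and no further machinery is needed.
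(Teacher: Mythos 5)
Your proof is correct: both directions follow, as you argue, from the standard fact that every proper face of a polytope is the intersection of the facets containing it, and your projective-closure remark properly disposes of the ideal vertices introduced in Section~\ref{subsec:initial-polytope}. Note that the paper itself supplies no proof of Proposition~\ref{claim:edgetest} -- it is quoted as an observation from \cite{projective} -- so your face-lattice argument simply fills in the standard justification that the cited source relies on, and it is consistent with how the proposition is used in Step~\ref{step4} of the algorithm.
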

Proposition \ref{claim:edgetest} suggests the following fast combinatorial
test: if one considers all facets of $S_n$ adjacent to both $v_1$ and $v_2$,
and takes the intersection of the adjacency lists of these faces, then if
this intersection contains $v_1$ and $v_2$ only, they form an edge of $S_n$,
otherwise, they do not. To perform this test we also need to maintain the
adjacency lists of vertices and facets:
\begin{enumerate}\setcounter{enumi}{4}
\item\label{step5} Adjust the adjacency lists of vertices and facets of 
$S_{n+1}$.
\end{enumerate}

While the algorithm works with a mixture of ordinary and ideal points,
fortunately, the initial ideal points are on the boundary of the polytope
$\QT$, and because of this, the algorithm introduces no other ideal points.
Ideal points may occur in steps \ref{step4} and \ref{step5} only, when
calculating the intersection of the edge $v_1$ --- $v_2$ of $S_n$ with $h$;
here $v_1$ or $v_2$, but not both may be one of the ideal points. Also, some
of the ideal vertices might be adjacent to the new facet $h$, thus they may
appear on the adjacency list of $h$ (and {\em vice versa}).

\bigskip

The easiest way of finding a {\em random internal point $q\in\QT$} is to get
any feasible $x\in\mathcal A$ (that is, $x\ge 0$, $Ax=b$), and then
increasing all coordinates of $Px$ by some positive random amount. To find
such a feasible $x$ requires solving an LP problem, but we can improve the
algorithm further
if we postpone finding $q$ until it is actually needed, which is when
we determine the first new facet of $S_1$ in Step \ref{step2}. As the only
vertex of
$S_0$ that is not a boundary point of $\QT$ is the origin, we can direct a
ray from the origin in a random direction $d>0$, find the intersection point
of the ray and $\QT$ by solving the LP problem $\mathrm{P}(0,-d)$, and then
set $q$ on that ray at some random distance behind the intersection point.
This way, we not only get the internal point $q$, but also the supporting
hyperplane at the intersection of the segment $0$---$q$ and $\QT$.

If $\QT$ is not empty, then any ray $0+\lambda d$ with $d>0$ cuts into it.
Therefore, for example, we can choose $d$ so that every coordinate is a
uniform random value between 1 and 2. If the LP problem $\mathrm{P}(0,-d)$
has no solution, then $\QT$ is empty; otherwise, let $\hat\lambda>0$ be the
solution, $r$ be a uniform random value between 1 and 2, and we can set
$q=(r+\hat\lambda)d$, having $\hat y = \hat\lambda d$ as the point at the
boundary of $\QT$.

\subsection{The scalar LP problems}\label{subsec:LP}

The scalar LP problems which are to be solved repeatedly during the
algorithm have very similar structures. Figure
\ref{table:P-problem} shows their general form.
\begin{figure}[h!tb]
\begin{center}\newcommand\rl{\rule[-1.5em]{0pt}{4.0em}}%
\newcommand\rll{\rule[-0.8em]{0pt}{2.4em}}%
\newcommand\st{\rule[-0.2ex]{0pt}{2.1ex}}%
\makeatletter\def\framenotop#1#2{%
  \leavevmode
  \@begin@tempboxa\hbox{#2}%
   \setlength\@tempdima{#1}%
   \setbox\@tempboxa\hb@xt@\@tempdima
    {\kern\fboxsep\csname bm@c\endcsname\kern\fboxsep}%
   \@framenotop%
  \@end@tempboxa}%
\def\@framenotop{%
  \@tempdima\fboxrule
  \advance\@tempdima\fboxsep
  \advance\@tempdima\dp\@tempboxa
  \hbox{%
    \lower\@tempdima\hbox{%
      \vbox{\hbox{\vrule\@width\fboxrule\kern-\fboxrule\vbox{\vskip\fboxsep
       \box\@tempboxa\vskip\fboxsep}\kern-\fboxrule\vrule\@width\fboxrule}%
      \hrule\@height\fboxrule}}}%
}\makeatother%
\begin{tabular}{rc@{~}ccc}
&    $x\ge0$      &  $\lambda$ & {} & $b$ \\
\raisebox{-1.6ex}[0pt][0pt]{$u
$}&\framebox[10em]{\rl$A$} & \framebox[1.5em]{\rl$0$} &
\raisebox{-1.6ex}{$=$}
  & \framebox[1em]{\rl$0$}\\
&\framenotop{10em}{\st Ingleton} &\framenotop{1.5em}{\st $0$} &
 {} &\framenotop{1em}{\st$1$} \\\\[-8pt]
$v
$&\framebox[10em]{\rll$P$} & \framebox[1.5em]{\rll$d$} &
 $\le$ & \framebox[1em]{\rll$q$}
\\\\[-8pt]
max:& \framebox[10em]{\st$0$} & \st 1 & &
\end{tabular}\end{center}
\caption{The structure of the $\mathrm{P}(q,d)$ problem}\label{table:P-problem}
\end{figure}%
We indicated that there is only one row in the matrix $A$ where $b$ is not
zero. The internal point $q$ is fixed throughout the algorithm; only the
direction $d$ changes in each iteration. The uniformity of the LP problems
can be used to speed up the initializations required by the LP solver.

\subsection{Tweaks and modifications}\label{sec:tweaks}

Due to the nature of the original combinatorial problem, the optimization
problem is ill-posed, and special care needs to be taken to maintain
numerical stability. We dismissed using integer arithmetic as being
prohibitively expensive both in the LP solver and during computing the
vertices of the approximating polytopes. Instead, the LP solver and vertex
computation were carefully modified to achieve better numerical stability.
 
The applied scalar LP solver is a standard but finely tuned simplex method.
During the execution of the simplex method, one walks through the vertices
of a large dimensional polytope while maintaining different descriptions of
the polytope. Each vertex during the walk is determined by a set of the
original facets (equations) whose intersection the vertex is. In
simplex terminology, this set of facets is known as a {\em base}. Knowing
the base is sufficient to regenerate the internal state at any step directly
from the original facet equations. Therefore, after a predetermined number
of steps we do two things: we save the actual base, and recalculate the
description of the polytope. When we discover any problem caused by
accumulating numerical errors, we return to the last saved base, and
continue the method from that point on. We decided to save the base after
each 60 steps, which seemed to be a good choice for the size of the problems
to be solved.

As described in Section \ref{subsec:details}, Benson's algorithm advances by
computing the vertices of the new approximating polytope $S_{n+1}$ from the
vertices (and facets) of $S_n$, and from a new facet of $\QT$. The input for
this computation comes from the LP solver which gives the equation of the
facet. In our case, all facets had rational coefficients with small
denominators, and so they could be represented either exactly or with extremely
small error.  When computing the new vertices of $S_{n+1}$, we lose some of
this exactness as the new vertices are computed from the intersections of an
old edge and a new facet, and the angle between the edge and the facet can be
very small. When this happens repeatedly, the coordinates of a vertex can
be very far from their exact values. Therefore, rather than carrying these
errors forward, we decided to compute the coordinates of each new vertex
directly from the facets they are incident to.

Last, but not least, the modified Benson's algorithm requires a solution
$\hat v$ of the dual LP problem $\mathrm{D}(q,d)$ only when the optimum
$\hat\lambda$ is not one. This means that we can abort the LP solver as soon
as it finds a feasible solution with $\lambda=1$, further improving the
speed of the algorithm.

\section{Experimental results}\label{sec:experimental}

The modified Benson's algorithm as described in the previous section has
been successfully used to investigate the entropy region $\HN$ by generating
hundreds of new entropy inequalities. We show some of the results for three
different data sets. In the first two of the sets, the objective space had 
10 dimensions,
while in the third one, this dimension was reduced to 3 using the internal
symmetry of the original problem.

The algorithm was coded in C with an embedded scalar LP solver,
a school-book simplex method with the tweaks discussed in
Section \ref{sec:tweaks}. The compiled code was run on a dedicated personal
computer with a 1GHz AMD Athlon 64 X2 Dual Core processor and 4 gigabytes
of RAM. In fact, the program used only a single core (thus two instances
could be run simultaneously on the dual-core machine without affecting the
running time), and the combined memory usage was never above 2 gigabytes.

\subsection{Results for $p=10$}

The paper of Dougherty et al \cite{dougherty-etal} lists all extremal
non-Shannon entropy inequalities which are consequences of at most three
copy steps using no more than four auxiliary variables.  They used 133
different copy strings to determine 214 new entropy inequalities.
The modified Benson algorithm, as outlined in Section \ref{sec:generating},
was used to generate all extremal vertices of the polyhedrons determined by 
these copy strings. The results confirmed their report, and did not find any 
new
entropy inequalities which were not consequences of the ones on their list.

Table \ref{table:results} gives some representative data. {\em
Size} is the size of the matrix $A$ (columns and rows), followed by
\begin{table}[htb]
\newcommand\st[1]{{\small\tt#1}}
\begin{center}\begin{tabular}{lr@{$\times$}lrrr}
\multicolumn{1}{c}{\small Copy string} &\multicolumn{2}{c}{\small Size} & \small Vertices &
\small Facets & \small Time  \\
\st{r=c:ab;s=r:ac;t=r:ad}&561&80&5&20&0:01 \\ 
\st{rs=cd:ab;t=r:ad;u=s:adt}&1509&172&40&132&6:19\\ 
\st{rs=cd:ab;t=a:bcs;u=(cs):abrt}&1569&178&47&76&6:51\\
\st{rs=cd:ab;t=a:bcs;u=b:adst}&1512&178&177&261&17:40 \\ 
\st{rs=cd:ab;t=a:bcs;u=t:acr}&1532&178&85&134&18:27\\ 
\st{rs=cd:ab;t=(cr):ab;u=t:acs}&1522&172&181&245&22:58\\ 
\st{r=c:ab;st=cd:abr;u=a:bcrt}&1346&161&209&436&29:18\\
\st{rs=cd:ab;t=a:bcs;u=c:abrst}&1369&166&355&591&38:59 \\ 
\st{rs=cd:ab;t=a:bcs;u=c:abrt}&1511&178&363&599&1:04:32\\ 
\st{rs=cd:ab;t=a:bcs;u=s:abcdt}&1369&166&355&591&1:07:01\\
\st{rs=cd:ab;t=a:bcs;u=(at):bcs}&1555&177&484&676&1:39:30\\
\st{rs=cd:ab;t=a:bcs;u=a:bcst}&1509&177&880&1238&4:30:26\\
\st{rs=cd:ab;t=a:bcs;u=a:bdrt}&1513&177&2506&2708&5:11:25\\
\end{tabular}
\end{center}
\caption{Representative results for the Dougherty et al list 
   ($p=10$)}\label{table:results}
\end{table}
the number of vertices and facets. {\em Time} is the running time of the
algorithm in hours, minutes and seconds. The running time is included here
to indicate how it changes with the size of the problem, and does not
indicate any comparison with other implementations of Benson's algorithm. The
typical number of the extremal vertices is around a couple of hundreds, with
the
only exception shown in the last row of Table \ref{table:results}, where
the number of extremal vertices is 2506. Also, it might be worth mentioning
that in this case, one of the intermediate polytopes had more than 22,000
vertices, which is an about 10-fold increase.

As the size of the matrix $A$ does not vary too much from case to case, we
expected the running time of the algorithm to depend only on the number of
iterations, that is, on the sum of the number of vertices and facets.
\begin{figure}[htb]
\begin{center}%
\def\p#1,#2;{\put(#1,#2){\makebox(0,0){$\circ$}}}
\setlength\unitlength{0.008\textwidth}
\begin{picture}(70,65)(0,-5)
\put(-5,0){\vector(1,0){75}} \put(0,-5){\vector(0,1){65}}
\put(79,0){\makebox(0,0){\small Iterations}}
\put(20,-1){\line(0,1)2}\put(20,-4){\makebox(0,0){\small500}}
\put(40,-1){\line(0,1)2}\put(40,-4){\makebox(0,0){\small1000}}
\put(60,-1){\line(0,1)2}\put(60,-4){\makebox(0,0){\small1500}}
\put(-14,50){\makebox(0,0){\small Time}}
\put(-14,46.5){\makebox(0,0){\small (minutes)}}
\put(-1,6){\line(1,0)2}\put(-5,6){\makebox(0,0){\small 10}}
\put(-1,18){\line(1,0)2}\put(-5,18){\makebox(0,0){\small 30}}
\put(-1,36){\line(1,0)2}\put(-5,36){\makebox(0,0){\small 60}}
\put(-1,54){\line(1,0)2}\put(-5,54){\makebox(0,0){\small 90}}
\input plotdata
\end{picture}
\end{center}
\kern -10pt
\caption{Running time vs.~problem size}\label{fig:running-time}
\end{figure}
The plot on Figure \ref{fig:running-time} confirms this expectation,
indicating a linear dependence. The variance is due to the fact that
occasionally, the LP solver failed and had to resume the work from an earlier
stage, as it was discussed in Section \ref{sec:tweaks}.

There appears to be no easy way to predict the number of iterations, and
thus the expected running time. Similar copy strings require a widely
varying number of iterations, and have a varying number of extremal vertices.
Let us also mention that in each of the 133 cases, the polytope $\QT$
factors to an 8-dimensional polytope and the non-negative quadrant of
$\R^2$. The fact that the objective space is practically 8 dimensional
rather than 10 might have an impact on the observed speed of the algorithm.

\smallskip

The implemented algorithm was used to check consequences of copy strings 
beyond the ones considered in \cite{dougherty-etal}. This resulted in
increasing the
total number of known computer-generated entropy inequalities from 214
in \cite{dougherty-etal} to more than 470.
Some representatives are shown in Table \ref{table:new-copy-results}.
\begin{table}[htb]
\newcommand\st[1]{{\small\tt#1}}
\begin{center}\begin{tabular}{lr@{$\times$}lrrr}
\multicolumn{1}{c}{\small Copy string} &\multicolumn{2}{c}{\small Size} & \small Vertices &
\small Facets & \small Time  \\
\st{rs=cd:ab;tu=cr:ab;v=(cs):abtu}&4055&370&19&58&1:10:10 \\ 
\st{rs=ad:bc;tu=ar:bc;v=r:abst}&4009&370&40&103&3:24:37\\ 
\st{rs=cd:ab;t=(cr):ab;uv=cs:abt}&3891&358&30&102&3:34:31\\ 
\st{rs=cd:ab;tu=cr:ab;v=t:adr}&3963&362&167&235&9:20:19\\
\st{rs=cd:ab;tu=dr:ab;v=b:adsu}&4007&370&318&356&13:20:08 \\ 
\st{rs=cd:ab;tv=dr:ab;u=a:bcrt}&4007&370&318&356&14:34:42\\
\st{rs=cd:ab;tu=cs:ab;v=a:bcrt}&4007&370&297&648&22:02:39\\ 
\st{rs=cd:ab;t=a:bcs;uv=bt:acr}&3913&362&779&1269&37:15:33 \\ 
\st{rs=cd:ab;tu=cr:ab;v=a:bcstu}&3987&362&4510&7966&427:43:30 \\ 
\st{rs=cd:ab;tu=cs:ab;v=a:bcrtu}&3893&362&10387&13397&716:36:32\\ 
\end{tabular}
\end{center}
\caption{Some extended copy strings with $p=10$}\label{table:new-copy-results}
\end{table}
As the copy string becomes more involved, the generated problem becomes
larger, and we have also observed a significant increase in the number of vertices of
the intermediate polytopes. This unexpected increase makes
the double description method highly inefficient, and it actually becomes the
bottleneck in the algorithm. In most of the cases enlisted in Table 
\ref{table:new-copy-results},  we had to set the bound
on the number of facets and vertices
to $2^{16}$, as smaller values were exhausted very fast.
This in turn meant that the
size of the incidence matrix of vertices and facets was $2^{16}\times 2^{16}$, and
that executing the combinatorial test of Proposition
\ref{claim:edgetest} for each pair $v_1$ and $v_2$ of several thousands of 
vertices took a considerable amount of time, 
and became comparable to the time taken by the LP solver.
In about 20\% of the extended cases investigated, even this limit was 
too small, and the algorithm aborted.

\subsection{Results for $p=3$}
Table \ref{table:bigresults} shows statistics for a set of
even larger problems. In this case, the optimization
problem was determined by the
total number of random variables employed during the copy steps.
Relying on the
high symmetry in the chosen problems and with some careful preprocessing,
we were able to
\begin{table}[htb]
\begin{center}\begin{tabular}{cr@{$\times$}lrrr}
\multicolumn{1}{c}{\small Random variables} &\multicolumn{2}{c}{\small Size} & \small Vertices &
\small Facets & \small Running time  \\
10&692&99&11&13&0:01 \\ 
12&1298&150&26&24&0:55\\ 
14&2175&233&53&43&9:38\\
16&3373&338&100&78&2:18:45 \\ 
18&4942&474&171&129&6:55:40\\ 
20&5772&635&278&208&23:20:17\\ 
\end{tabular}
\end{center}
\caption{More random variables with $p=3$}\label{table:bigresults}
\end{table}
reduce the problem size significantly: 20 random variables had more than
$10^6$ entropies, and more than $4.8\cdot10^7$ Shannon inequalities. The
symmetry in the problems also made it possible to reduce the dimensions of
the objective space from $10$ to $3$.

Problems listed in this section had their running time determined almost
exclusively by the LP solver. 
Solving the numerous numerically ill-posed scalar LP problems of
such sizes appears to be at the limit of the simplex-based LP solver
implemented in our software. It would be an interesting research problem to
investigate which LP solvers can handle problems of this size efficiently
and with sufficient accuracy.

\section{Conclusion}\label{sec:conclusion}

Exploring the 15-dimensional entropy region formed by the entropies of the
non-empty subsets of four random variables is an intriguing research
problem. Presently the only available method which goes beyond the standard
Shannon inequalities is the one of Zhang and Yeung from 1998
\cite{zhang-yeung}. Using non-trivial properties of the entropy region, the
Zhang--Yeung method was transformed into the problem of finding all extremal
vertices of a 10-dimensional projection of a high dimensional polytope.
A modified variant of Benson's algorithm was used to solve this linear
multiobjective optimization problem. As opposed to the original algorithm
\cite{benson} and its refinements \cite{projective, ehrgott, heyde-loehne},
where each iteration requires solving two scalar LP problems, this variant 
uses only one scalar LP instance in each iteration. This improvement,
applicable to other variants of Benson's algorithm as well, might be of independent
interest, and was observed independently in \cite{hamel-loehne-rudloff}.

The implemented algorithm was used successfully to check the results on the
133 copy strings of \cite{dougherty-etal},
and confirm that no entropy inequality has been missed. 
Copy strings leading to larger optimization problems were also investigated. 
This resulted in more than doubling
the number of known computer-generated entropy inequalities.

The algorithm was also run on some extremely symmetrical, but significantly
larger data sets, where the objective space could be reduced to three dimensions.
The results achieved were essential in forming a general 
conjecture about the limits of the Zhang--Yeung method \cite{book-conjecture}.

A compelling theoretical problem arose as the experimental results were
evaluated. It has been observed that some 
of the intermediate 
polytopes $S_n$ (bounded by a certain subset of the facets of $\QT$) 
had 
significantly more vertices than $\QT$ itself.
It is worth noting that this phenomenon cannot occur
when the objective space has at most three dimensions. By the
well-known Euler formula, the number of vertices of any convex 3-dimensional
polytope is bounded by $2f-4$, where $f$ is the number of its faces. Thus any
intermediate polytope can have at most $2M$ vertices, where $M$ is a bound
on the total number of steps the algorithm might take. In the 3-dimensional
case, setting this limit in the software to
a couple of thousand appears to be a safe choice. However, the situation changes
when the dimension $p$ of the objective space increases. Let us 
suppose that $M$ is an 
upper bound on
the number of final vertices and facets. Then an obvious upper bound on the number
of vertices of any intermediate polytope
is $M^p$, as any vertex is determined by the $p$ facets it is
incident to. There are $p$-dimensional convex polytopes with $M$ facets and
$O(M^{[p/2]})$ vertices, but it is not known what happens if we bound the
number of facets plus the number of vertices of the final polytope, and
would like to determine how
many extra vertices an intermediate polytope might have.
The following example shows that in three dimensions,
this increase can be linear, which is, apart from the
constant, is the worst amount one can expect.
In the example, the final polyhedron consists of two regular $2n$-polygon-based pyramids 
joined
together at their bases. This polyhedron has $2n+2$ vertices and $4n$
faces. Leaving out every other facet of the top pyramid increases the number
of vertices to $3n+2$. It would be interesting to see if similar examples
can be construed in higher
dimensions.

The shortcomings of using the double description method have been
observed earlier \cite{extremalray}, and alternative algorithms have been suggested
to generate the extremal rays of pointed cones. In our case, however, these
algorithms could not be used directly, as Benson's
algorithm generates facets {\em and} extremal vertices
simultaneously, and so we do not know all the facets in advance. 
The procedure outlined in Section \ref{subsec:correct}
can be considered as an ``oracle call,'' which, given any point in the objective
space, returns whether the point is outside the polytope $\QT$, and
if yes, also provides a facet of $\QT$ that separates the polytope from this
point. The challenge then becomes to devise an algorithm which calls the
oracle, and
finds all vertices of $\QT$ in time and space {\em linear} in 
the final number of
vertices plus the number of facets.

Investigating the entropy region of {\em five random variables} instead of
four appears to be significantly harder. The very first obstacle is the lack
of our understanding even of the Shannon region of this 31-dimensional space.
There does not seem to be any suitable coordinate transformation similar to the
unimodular matrix $U$ which would simplify the structure of some 
cross-sections of this region
as $U$ did in the four-variable case. This simplification is what made
the application of Benson's algorithm possible, as we had an immediate
surrounding polytope (the non-negative orthant of $\R^{10}$), and 
as all vertices
of the projected polytope were {\em extremal vertices}, they
comprised the solution of a global multiobjective optimization 
problem.

\section*{Acknowledgments}
The author would like to acknowledge the help received during
the numerous insightful, fruitful, and
enjoyable discussions with Frantisek Mat\'u\v s on the entropy function,
matroids, and on the ultimate question of everything.


\section*{Appendix A}
\def\thesection{A}
\setcounter{subsection}{0}

\subsection{Shannon inequalities}
Let $\langle x_i:i\in I\rangle$ be a collection of random variables. For
$A\subseteq I$, we let $x_A = \langle x_i:i\in A\rangle$.
The
Shannon inequalities say that the entropy function is a monotone submodular
function on the subsets of $I$, that is,
\begin{equation}\label{eq:shannon1}
   \H(x_A)\le \H(x_B) ~~~\mbox{when $A\subseteq B \subseteq I$},
\end{equation}
and
\begin{equation}\label{eq:shannon2}
   \H(x_{A\cup B}) + \H(x_{A\cap B}) \le \H(x_A)+\H(x_B),
\end{equation}
for all subsets $A, B$ of $I$. Moreover, there is a minimal subset of these
inequalities from which all others follow: consider the inequalities from
(\ref{eq:shannon1}) where $B=I$, and $A$ is missing only one element of $I$;
and the inequalities from (\ref{eq:shannon2}) where both $A$ and $B$ has
exactly one element not in $A\cap B$.

\subsection{Independent copy of random variables}
If we split a set of random variables into two disjoint groups $\langle x_i: i\in
I\rangle$
and $\langle y_j: j\in J \rangle$, and create $\langle x'_i:i\in I\rangle$ as an {\em independent
copy} of $\langle x_i\rangle$ over $\langle y_j\rangle$,
then
the entropy of certain subsets of these variables can be
computed from other subsets as follows. Let $A, B \subseteq I$ and
$C\subseteq J$. Then,
$$
    \H(x'_A x_By_C) = \H(x'_B x_A y_C),
$$
which is due to the complete symmetry between $x'_i$ and $x_i$. The
fact that $x'_I$ and $x_I$ are independent over $y_J$ translates into the
following entropy equality:
$$
    \H(x'_A x_B y_J) = \H(x'_A y_J) + \H(x_B y_J) - \H(y_J)
$$
for all subsets $A,B\subseteq I$.

\subsection{Copy strings}
As explained in Section \ref{sec:problem}, we start from four random
variables, split them into two parts, create an independent copy of the
first part over the second, add the newly created random variables to the
group, and then repeat this process. To save on the number of variables
created, in each step certain newly generated variables are discarded,
or two or more new variables are merged into a single one. This process
is described by a {\em copy string}, which has the following form:
$$
\mbox{\tt rs=cd:ab;t=(cr):ab;u=t:acs}
$$
This string describes three iterations separated by semicolons. The initial
four variables are $a$, $b$, $c$ and $d$, and the newly created variables
are $r$, $s$, $t$ and $u$. In a copy step, variables after the colon are in the
``over'' set. We keep the copied image of those variables only which are
after the equality sign, and the copies are marked by the letters before the
equality sign. Thus in the first iteration, we make a copy of $cd$ over $ab$, and the copy of
$c$ and $d$ will be named $r$ and $s$, respectively. When variables are
enclosed in parentheses, their copies are merged into a single
variable. For example, in the second iteration we keep
the copies of $c$ and $r$ only, discard the copies of $d$ and $s$, and
merge these two copies into $t$.

\subsection{A unimodular matrix}
As explained in Section \ref{sec:problem}, it is advantageous to look at
the 15 entropies of the four random variables in another coordinate system.
The new coordinates can be computed using the unimodular matrix shown in Table
\ref{table:U}.
\begin{table}[h!tb]
\begin{center}\setlength\tabcolsep{4pt}\begin{tabular}{r|c*{14}c}
&$a$&$b$&$c$&$d$&$ab$&$ac$&$ad$&$bc$&$bd$&$cd$&$abc$&$abd$&$acd$&$bcd$&$abcd$\\
\hline
{\small Ingleton}&-1&-1& 0& 0&  1& 1& 1& 1& 1&-1&  -1& -1&  0&  0&  0 \\
& 0& 0&-1& 0&  0& 1& 0& 1& 0& 0&  -1&  0&  0&  0&  0 \\
&        0& 0& 0&-1&  0& 0& 1& 0& 1& 0&   0& -1&  0&  0&  0 \\
&        0&-1& 0& 0&  1& 0& 0& 1& 0& 0&  -1&  0&  0&  0&  0 \\
&       -1& 0& 0& 0&  1& 1& 0& 0& 0& 0&  -1&  0&  0&  0&  0 \\
&        0&-1& 0& 0&  1& 0& 0& 0& 1& 0&   0& -1&  0&  0&  0 \\
&       -1& 0& 0& 0&  1& 0& 1& 0& 0& 0&   0& -1&  0&  0&  0 \\
&       -1& 0& 0& 0&  0& 1& 1& 0& 0& 0&   0&  0& -1&  0&  0 \\
&        0&-1& 0& 0&  0& 0& 0& 1& 1& 0&   0&  0&  0& -1&  0 \\
&        0& 0& 1& 1&  0& 0& 0& 0& 0&-1&   0&  0&  0&  0&  0 \\
&        0& 0& 0& 0&  0& 0& 0& 0& 0&-1&   0&  0&  1&  1& -1 \\
z&        0& 0& 0& 0&  0& 0& 0& 0& 0& 0&   0&  0&  0& -1&  1 \\
z&        0& 0& 0& 0&  0& 0& 0& 0& 0& 0&   0&  0& -1&  0&  1 \\
z&        0& 0& 0& 0&  0& 0& 0& 0& 0& 0&   0& -1&  0&  0&  1 \\
z&        0& 0& 0& 0&  0& 0& 0& 0& 0& 0&  -1&  0&  0&  0&  1 \\
\end{tabular}\end{center}
\caption{The unimodular matrix}\label{table:U}
\end{table}
Columns represent the entropies of the subsets of the four random
variables $a$, $b$, $c$ and $d$, as indicated in the top row. The
value of the ``Ingleton row'' should be set to $1$, and rows marked by the
letter ``z'' vanish for all extremal vertices, thus they should be set to $0$.

\section*{Appendix B}
\def\I{\mathbf{I}}
This section lists entropy inequalities which were found during the
experiments described in Section \ref{sec:experimental} and have all
coefficients less than 100.
Each entry in the list contains nine integers 
representing the coefficients $c_0$, $c_1$, $\dots$, $c_8$
for the non-Shannon information inequality of the form
\begin{eqnarray*}
  &&  c_0\big(\I(c,d)-\I(a,b)+\I(a,b\,|\,c)+\I(a,b\,|\,d)\big) +{}\\
  && {}+ c_1\I(a,b\,|\,c)+c_2\I(a,b\,|\,d) + {}\\
  && {}+ c_3\I(a,c\,|\,b) + c_4\I(b,c\,|\,a)+
  c_5\I(a,d\,|\,b)+c_6\I(b,d\,|\,a) +{}\\
  && {}+ c_7\I(c,d\,|\,a) + c_8\I(c,d\,|\,b) \ge 0.
\end{eqnarray*}
Here $\I(A,B) = \H(A)+\H(B)-\H(AB)$ is the mutual information,
$\I(A,B\,|,C)=\H(AC)+\H(BC)-\H(ABC)-\H(C)$ is the conditional mutual
information. The expression after $c_0$ is the Ingleton value. Following
the list of coefficients is the copy string without the equality signs.
If the inequality, or one of its permuted variants, appears 
in the list of \cite{dougherty-etal}, then their number of the inequality
is given after the / symbol.
\bigskip
\begin{center}\makeatletter\newcommand\BL[1]{
\setbox\@tempboxa\hbox{\small #1}%
\ifdim \wd\@tempboxa < 0.48\linewidth
    \box\@tempboxa
\else
    \hbox{\vtop{\hsize 0.48\linewidth\hangindent0.2\linewidth\hangafter1\raggedright\small #1\strut}}%
\fi
}\makeatother
\def\Q#1#2{\global\def#1{\small\begin{tabular}[t]{r@{\hskip5pt}l}#2\end{tabular}}}%
\small
\Q\ta{%
1)&\BL{2{\hskip 3pt}1{\hskip 3pt}0{\hskip 3pt}3{\hskip 3pt}2{\hskip 3pt}0{\hskip 3pt}0{\hskip 3pt}0{\hskip 3pt}0\hskip 3pt\sf rs~cd:ab;\allowbreak t~b:acr\allowbreak/5}\\
2)&\BL{2{\hskip 3pt}1{\hskip 3pt}0{\hskip 3pt}3{\hskip 3pt}1{\hskip 3pt}3{\hskip 3pt}2{\hskip 3pt}0{\hskip 3pt}0\hskip 3pt\sf rs~cd:ab;\allowbreak t~c:abrs;\allowbreak u~a:bdst\allowbreak/30}\\
3)&\BL{2{\hskip 3pt}1{\hskip 3pt}0{\hskip 3pt}3{\hskip 3pt}1{\hskip 3pt}0{\hskip 3pt}0{\hskip 3pt}3{\hskip 3pt}0\hskip 3pt\sf rs~bd:ac;\allowbreak tu~cd:abs;\allowbreak v~b:acrst}\\
4)&\BL{2{\hskip 3pt}2{\hskip 3pt}0{\hskip 3pt}2{\hskip 3pt}1{\hskip 3pt}0{\hskip 3pt}0{\hskip 3pt}3{\hskip 3pt}0\hskip 3pt\sf rs~bd:ac;\allowbreak t~c:adr\allowbreak/4}\\
5)&\BL{2{\hskip 3pt}2{\hskip 3pt}0{\hskip 3pt}2{\hskip 3pt}1{\hskip 3pt}0{\hskip 3pt}3{\hskip 3pt}0{\hskip 3pt}0\hskip 3pt\sf rs~cd:ab;\allowbreak t~b:acs\allowbreak/9}\\
6)&\BL{2{\hskip 3pt}2{\hskip 3pt}0{\hskip 3pt}3{\hskip 3pt}1{\hskip 3pt}0{\hskip 3pt}0{\hskip 3pt}0{\hskip 3pt}0\hskip 3pt\sf rs~ac:bd;\allowbreak t~b:adr\allowbreak/8}\\
7)&\BL{2{\hskip 3pt}2{\hskip 3pt}1{\hskip 3pt}2{\hskip 3pt}1{\hskip 3pt}1{\hskip 3pt}0{\hskip 3pt}0{\hskip 3pt}0\hskip 3pt\sf r~c:ab;\allowbreak s~r:ad\allowbreak/1}\\
8)&\BL{2{\hskip 3pt}2{\hskip 3pt}1{\hskip 3pt}5{\hskip 3pt}0{\hskip 3pt}4{\hskip 3pt}1{\hskip 3pt}0{\hskip 3pt}0\hskip 3pt\sf rs~cd:ab;\allowbreak t~c:abrs;\allowbreak u~a:dst\allowbreak/27}\\
9)&\BL{2{\hskip 3pt}3{\hskip 3pt}0{\hskip 3pt}2{\hskip 3pt}1{\hskip 3pt}0{\hskip 3pt}0{\hskip 3pt}0{\hskip 3pt}0\hskip 3pt\sf rs~ac:bd;\allowbreak t~d:abr\allowbreak/7}\\
}\Q\tb{%
10)&\BL{2{\hskip 3pt}3{\hskip 3pt}1{\hskip 3pt}2{\hskip 3pt}0{\hskip 3pt}3{\hskip 3pt}1{\hskip 3pt}0{\hskip 3pt}0\hskip 3pt\sf rs~cd:ab;\allowbreak t~r:ad\allowbreak/2}\\
11)&\BL{3{\hskip 3pt}1{\hskip 3pt}0{\hskip 3pt}5{\hskip 3pt}5{\hskip 3pt}0{\hskip 3pt}0{\hskip 3pt}0{\hskip 3pt}0\hskip 3pt\sf rs~cd:ab;\allowbreak t~a:bcr;\allowbreak u~t:acr\allowbreak/94}\\
12)&\BL{3{\hskip 3pt}1{\hskip 3pt}1{\hskip 3pt}5{\hskip 3pt}4{\hskip 3pt}1{\hskip 3pt}0{\hskip 3pt}0{\hskip 3pt}0\hskip 3pt\sf rs~cd:ab;\allowbreak t~a:bcr;\allowbreak u~a:bdst\allowbreak/58}\\
13)&\BL{3{\hskip 3pt}1{\hskip 3pt}1{\hskip 3pt}5{\hskip 3pt}4{\hskip 3pt}0{\hskip 3pt}1{\hskip 3pt}0{\hskip 3pt}0\hskip 3pt\sf rs~cd:ab;\allowbreak t~b:acr;\allowbreak u~t:bcs\allowbreak/96}\\
14)&\BL{3{\hskip 3pt}2{\hskip 3pt}0{\hskip 3pt}4{\hskip 3pt}2{\hskip 3pt}0{\hskip 3pt}2{\hskip 3pt}0{\hskip 3pt}0\hskip 3pt\sf rs~cd:ab;\allowbreak t~b:adr;\allowbreak u~t:br\allowbreak/162}\\
15)&\BL{3{\hskip 3pt}2{\hskip 3pt}0{\hskip 3pt}6{\hskip 3pt}2{\hskip 3pt}0{\hskip 3pt}0{\hskip 3pt}0{\hskip 3pt}0\hskip 3pt\sf rs~ac:bd;\allowbreak t~a:dr;\allowbreak u~b:adrt\allowbreak/60}\\
16)&\BL{3{\hskip 3pt}2{\hskip 3pt}1{\hskip 3pt}4{\hskip 3pt}2{\hskip 3pt}1{\hskip 3pt}0{\hskip 3pt}0{\hskip 3pt}0\hskip 3pt\sf rs~cd:ab;\allowbreak t~c:as;\allowbreak u~b:adst\allowbreak/41}\\
17)&\BL{3{\hskip 3pt}3{\hskip 3pt}0{\hskip 3pt}5{\hskip 3pt}1{\hskip 3pt}1{\hskip 3pt}3{\hskip 3pt}0{\hskip 3pt}0\hskip 3pt\sf rs~cd:ab;\allowbreak t~c:abrs;\allowbreak u~t:adr\allowbreak/102}\\
18)&\BL{3{\hskip 3pt}3{\hskip 3pt}0{\hskip 3pt}6{\hskip 3pt}1{\hskip 3pt}1{\hskip 3pt}1{\hskip 3pt}0{\hskip 3pt}0\hskip 3pt\sf rs~ac:bd;\allowbreak t~b:ars;\allowbreak u~b:adr\allowbreak/142}\\
}\Q\tc{%
19)&\BL{3{\hskip 3pt}3{\hskip 3pt}1{\hskip 3pt}2{\hskip 3pt}2{\hskip 3pt}1{\hskip 3pt}0{\hskip 3pt}0{\hskip 3pt}5\hskip 3pt\sf rs~ac:bd;\allowbreak t~d:bcr;\allowbreak u~t:rs\allowbreak/110}\\
20)&\BL{3{\hskip 3pt}3{\hskip 3pt}1{\hskip 3pt}3{\hskip 3pt}1{\hskip 3pt}1{\hskip 3pt}0{\hskip 3pt}5{\hskip 3pt}0\hskip 3pt\sf rs~bd:ac;\allowbreak tu~cs:adr\allowbreak/12}\\
21)&\BL{3{\hskip 3pt}3{\hskip 3pt}2{\hskip 3pt}5{\hskip 3pt}1{\hskip 3pt}3{\hskip 3pt}0{\hskip 3pt}0{\hskip 3pt}0\hskip 3pt\sf r~c:ab;\allowbreak st~ac:bdr;\allowbreak u~r:ads\allowbreak/21}\\
22)&\BL{3{\hskip 3pt}3{\hskip 3pt}3{\hskip 3pt}3{\hskip 3pt}1{\hskip 3pt}3{\hskip 3pt}0{\hskip 3pt}0{\hskip 3pt}0\hskip 3pt\sf rs~cd:ab;\allowbreak t~r:ad\allowbreak/3}\\
23)&\BL{3{\hskip 3pt}3{\hskip 3pt}3{\hskip 3pt}5{\hskip 3pt}1{\hskip 3pt}2{\hskip 3pt}0{\hskip 3pt}0{\hskip 3pt}0\hskip 3pt\sf rs~cd:ab;\allowbreak t~c:ar;\allowbreak u~r:ad\allowbreak/206}\\
24)&\BL{3{\hskip 3pt}4{\hskip 3pt}0{\hskip 3pt}2{\hskip 3pt}2{\hskip 3pt}0{\hskip 3pt}0{\hskip 3pt}2{\hskip 3pt}0\hskip 3pt\sf rs~bd:ac;\allowbreak t~c:abs;\allowbreak u~t:bcr\allowbreak/120}\\
25)&\BL{3{\hskip 3pt}4{\hskip 3pt}0{\hskip 3pt}2{\hskip 3pt}2{\hskip 3pt}4{\hskip 3pt}0{\hskip 3pt}0{\hskip 3pt}0\hskip 3pt\sf rs~cd:ab;\allowbreak t~a:bcs;\allowbreak u~s:abcdt\allowbreak/178}\\
26)&\BL{3{\hskip 3pt}4{\hskip 3pt}0{\hskip 3pt}4{\hskip 3pt}1{\hskip 3pt}1{\hskip 3pt}1{\hskip 3pt}0{\hskip 3pt}4\hskip 3pt\sf rs~ad:bc;\allowbreak tu~cd:abs;\allowbreak v~a:bcrst}\\
27)&\BL{3{\hskip 3pt}4{\hskip 3pt}1{\hskip 3pt}4{\hskip 3pt}1{\hskip 3pt}1{\hskip 3pt}0{\hskip 3pt}0{\hskip 3pt}0\hskip 3pt\sf r~c:ab;\allowbreak s~r:ac;\allowbreak t~r:ad\allowbreak/10}\\
28)&\BL{3{\hskip 3pt}5{\hskip 3pt}0{\hskip 3pt}5{\hskip 3pt}1{\hskip 3pt}0{\hskip 3pt}0{\hskip 3pt}0{\hskip 3pt}0\hskip 3pt\sf rs~ac:bd;\allowbreak t~b:adr;\allowbreak u~t:abr\allowbreak/109}\\
29)&\BL{3{\hskip 3pt}5{\hskip 3pt}2{\hskip 3pt}5{\hskip 3pt}0{\hskip 3pt}4{\hskip 3pt}1{\hskip 3pt}0{\hskip 3pt}0\hskip 3pt\sf rs~cd:ab;\allowbreak t~c:as;\allowbreak u~a:bdst\allowbreak/48}\\
30)&\BL{3{\hskip 3pt}5{\hskip 3pt}3{\hskip 3pt}4{\hskip 3pt}1{\hskip 3pt}0{\hskip 3pt}7{\hskip 3pt}0{\hskip 3pt}0\hskip 3pt\sf rs~cd:ab;\allowbreak t~c:bs;\allowbreak u~t:cs\allowbreak/83}\\
31)&\BL{3{\hskip 3pt}6{\hskip 3pt}0{\hskip 3pt}2{\hskip 3pt}2{\hskip 3pt}0{\hskip 3pt}0{\hskip 3pt}0{\hskip 3pt}0\hskip 3pt\sf rs~ac:bd;\allowbreak t~a:br;\allowbreak u~d:abrt\allowbreak/56}\\
32)&\BL{4{\hskip 3pt}1{\hskip 3pt}0{\hskip 3pt}4{\hskip 3pt}4{\hskip 3pt}6{\hskip 3pt}5{\hskip 3pt}0{\hskip 3pt}0\hskip 3pt\sf rs~cd:ab;\allowbreak t~(dr):ab;\allowbreak u~b:adst\allowbreak/153}\\
33)&\BL{4{\hskip 3pt}1{\hskip 3pt}0{\hskip 3pt}5{\hskip 3pt}4{\hskip 3pt}3{\hskip 3pt}3{\hskip 3pt}0{\hskip 3pt}0\hskip 3pt\sf rs~cd:ab;\allowbreak t~b:acr;\allowbreak u~(dr):ab\allowbreak/158}\\
34)&\BL{4{\hskip 3pt}1{\hskip 3pt}0{\hskip 3pt}6{\hskip 3pt}4{\hskip 3pt}2{\hskip 3pt}4{\hskip 3pt}0{\hskip 3pt}0\hskip 3pt\sf rs~cd:ab;\allowbreak t~d:abrs;\allowbreak u~b:acrt\allowbreak/28}\\
35)&\BL{4{\hskip 3pt}1{\hskip 3pt}0{\hskip 3pt}7{\hskip 3pt}6{\hskip 3pt}1{\hskip 3pt}1{\hskip 3pt}0{\hskip 3pt}0\hskip 3pt\sf r~c:ab;\allowbreak st~cd:ab;\allowbreak u~b:acrs\allowbreak/23}\\
36)&\BL{4{\hskip 3pt}1{\hskip 3pt}0{\hskip 3pt}10{\hskip 3pt}9{\hskip 3pt}0{\hskip 3pt}0{\hskip 3pt}0{\hskip 3pt}0\hskip 3pt\sf rs~cd:ab;\allowbreak t~(cr):ab;\allowbreak u~b:acrt\allowbreak/93}\\
37)&\BL{4{\hskip 3pt}1{\hskip 3pt}1{\hskip 3pt}9{\hskip 3pt}7{\hskip 3pt}1{\hskip 3pt}0{\hskip 3pt}0{\hskip 3pt}0\hskip 3pt\sf rs~ac:bd;\allowbreak t~a:bcr;\allowbreak u~a:dr\allowbreak/169}\\
38)&\BL{4{\hskip 3pt}2{\hskip 3pt}0{\hskip 3pt}4{\hskip 3pt}3{\hskip 3pt}2{\hskip 3pt}1{\hskip 3pt}0{\hskip 3pt}0\hskip 3pt\sf rs~cd:ab;\allowbreak t~b:adr\allowbreak/6}\\
39)&\BL{4{\hskip 3pt}2{\hskip 3pt}0{\hskip 3pt}5{\hskip 3pt}4{\hskip 3pt}0{\hskip 3pt}2{\hskip 3pt}0{\hskip 3pt}0\hskip 3pt\sf rs~cd:ab;\allowbreak t~b:acs;\allowbreak uv~at:bcr}\\
40)&\BL{4{\hskip 3pt}2{\hskip 3pt}0{\hskip 3pt}5{\hskip 3pt}4{\hskip 3pt}2{\hskip 3pt}0{\hskip 3pt}0{\hskip 3pt}0\hskip 3pt\sf rs~cd:ab;\allowbreak t~a:bcs;\allowbreak uv~bt:acr}\\
41)&\BL{4{\hskip 3pt}2{\hskip 3pt}1{\hskip 3pt}3{\hskip 3pt}0{\hskip 3pt}9{\hskip 3pt}6{\hskip 3pt}0{\hskip 3pt}0\hskip 3pt\sf rs~cd:ab;\allowbreak t~a:bcr;\allowbreak u~a:bdst\allowbreak/64}\\
42)&\BL{4{\hskip 3pt}3{\hskip 3pt}0{\hskip 3pt}4{\hskip 3pt}2{\hskip 3pt}1{\hskip 3pt}1{\hskip 3pt}7{\hskip 3pt}0\hskip 3pt\sf rs~bd:ac;\allowbreak t~c:adr;\allowbreak u~a:rst\allowbreak/29}\\
43)&\BL{4{\hskip 3pt}3{\hskip 3pt}1{\hskip 3pt}2{\hskip 3pt}0{\hskip 3pt}10{\hskip 3pt}5{\hskip 3pt}0{\hskip 3pt}0\hskip 3pt\sf rs~cd:ab;\allowbreak tu~cr:ab;\allowbreak v~(rtu):ad}\\
44)&\BL{4{\hskip 3pt}3{\hskip 3pt}1{\hskip 3pt}4{\hskip 3pt}3{\hskip 3pt}0{\hskip 3pt}5{\hskip 3pt}0{\hskip 3pt}0\hskip 3pt\sf rs~cd:ab;\allowbreak t~b:adr;\allowbreak u~c:abrst\allowbreak/32}\\
45)&\BL{4{\hskip 3pt}3{\hskip 3pt}2{\hskip 3pt}4{\hskip 3pt}3{\hskip 3pt}2{\hskip 3pt}0{\hskip 3pt}0{\hskip 3pt}0\hskip 3pt\sf rs~cd:ab;\allowbreak t~c:as;\allowbreak u~a:bds\allowbreak/163}\\
46)&\BL{4{\hskip 3pt}4{\hskip 3pt}0{\hskip 3pt}5{\hskip 3pt}2{\hskip 3pt}0{\hskip 3pt}0{\hskip 3pt}0{\hskip 3pt}2\hskip 3pt\sf rs~ad:bc;\allowbreak t~c:abs;\allowbreak uv~bt:acr}\\
47)&\BL{4{\hskip 3pt}4{\hskip 3pt}0{\hskip 3pt}11{\hskip 3pt}1{\hskip 3pt}4{\hskip 3pt}4{\hskip 3pt}0{\hskip 3pt}0\hskip 3pt\sf rs~cd:ab;\allowbreak t~a:bcr;\allowbreak u~s:abcdt\allowbreak/68}\\
48)&\BL{4{\hskip 3pt}4{\hskip 3pt}3{\hskip 3pt}2{\hskip 3pt}0{\hskip 3pt}5{\hskip 3pt}2{\hskip 3pt}0{\hskip 3pt}0\hskip 3pt\sf rs~cd:ab;\allowbreak t~r:ad;\allowbreak u~r:adt\allowbreak/129}\\
49)&\BL{4{\hskip 3pt}5{\hskip 3pt}0{\hskip 3pt}4{\hskip 3pt}2{\hskip 3pt}0{\hskip 3pt}0{\hskip 3pt}0{\hskip 3pt}2\hskip 3pt\sf rs~ad:bc;\allowbreak t~c:abs;\allowbreak uv~bt:acr}\\
50)&\BL{4{\hskip 3pt}5{\hskip 3pt}0{\hskip 3pt}8{\hskip 3pt}1{\hskip 3pt}2{\hskip 3pt}3{\hskip 3pt}0{\hskip 3pt}0\hskip 3pt\sf rs~cd:ab;\allowbreak t~a:bcr;\allowbreak u~a:bdst\allowbreak/111}\\
51)&\BL{4{\hskip 3pt}5{\hskip 3pt}0{\hskip 3pt}8{\hskip 3pt}1{\hskip 3pt}3{\hskip 3pt}2{\hskip 3pt}0{\hskip 3pt}0\hskip 3pt\sf rs~cd:ab;\allowbreak t~a:bcr;\allowbreak u~b:adst\allowbreak/70}\\
52)&\BL{4{\hskip 3pt}5{\hskip 3pt}1{\hskip 3pt}3{\hskip 3pt}2{\hskip 3pt}7{\hskip 3pt}0{\hskip 3pt}0{\hskip 3pt}0\hskip 3pt\sf rs~cd:ab;\allowbreak t~a:bcs;\allowbreak u~(at):bcs\allowbreak/91}\\
53)&\BL{4{\hskip 3pt}5{\hskip 3pt}1{\hskip 3pt}3{\hskip 3pt}2{\hskip 3pt}2{\hskip 3pt}0{\hskip 3pt}6{\hskip 3pt}0\hskip 3pt\sf rs~bd:ac;\allowbreak tu~cd:abs;\allowbreak v~b:acrst}\\
54)&\BL{4{\hskip 3pt}5{\hskip 3pt}3{\hskip 3pt}5{\hskip 3pt}1{\hskip 3pt}3{\hskip 3pt}0{\hskip 3pt}0{\hskip 3pt}0\hskip 3pt\sf rs~cd:ab;\allowbreak t~c:ar;\allowbreak u~r:ad\allowbreak/176}\\
55)&\BL{4{\hskip 3pt}5{\hskip 3pt}4{\hskip 3pt}3{\hskip 3pt}2{\hskip 3pt}4{\hskip 3pt}0{\hskip 3pt}0{\hskip 3pt}0\hskip 3pt\sf rs~cd:ab;\allowbreak t~c:as;\allowbreak u~r:bst\allowbreak/38}\\
56)&\BL{4{\hskip 3pt}6{\hskip 3pt}0{\hskip 3pt}10{\hskip 3pt}1{\hskip 3pt}1{\hskip 3pt}4{\hskip 3pt}0{\hskip 3pt}0\hskip 3pt\sf rs~cd:ab;\allowbreak t~a:bcr;\allowbreak u~b:adrt\allowbreak/46}\\
57)&\BL{4{\hskip 3pt}6{\hskip 3pt}1{\hskip 3pt}3{\hskip 3pt}2{\hskip 3pt}0{\hskip 3pt}1{\hskip 3pt}0{\hskip 3pt}4\hskip 3pt\sf rs~ac:bd;\allowbreak t~a:bcr;\allowbreak u~d:abr\allowbreak/168}\\
58)&\BL{4{\hskip 3pt}6{\hskip 3pt}1{\hskip 3pt}6{\hskip 3pt}0{\hskip 3pt}16{\hskip 3pt}4{\hskip 3pt}0{\hskip 3pt}0\hskip 3pt\sf r~a:bc;\allowbreak st~ac:bdr;\allowbreak u~(as):bdr\allowbreak/20}\\
59)&\BL{4{\hskip 3pt}6{\hskip 3pt}4{\hskip 3pt}6{\hskip 3pt}0{\hskip 3pt}4{\hskip 3pt}1{\hskip 3pt}0{\hskip 3pt}0\hskip 3pt\sf r~d:ab;\allowbreak st~bd:ac;\allowbreak u~r:as\allowbreak/24}\\
60)&\BL{4{\hskip 3pt}6{\hskip 3pt}5{\hskip 3pt}3{\hskip 3pt}2{\hskip 3pt}0{\hskip 3pt}9{\hskip 3pt}0{\hskip 3pt}0\hskip 3pt\sf rs~cd:ab;\allowbreak t~c:bs;\allowbreak u~r:adst\allowbreak/31}\\
61)&\BL{4{\hskip 3pt}7{\hskip 3pt}0{\hskip 3pt}7{\hskip 3pt}1{\hskip 3pt}1{\hskip 3pt}7{\hskip 3pt}0{\hskip 3pt}0\hskip 3pt\sf rs~cd:ab;\allowbreak t~b:acs;\allowbreak u~a:bcrt\allowbreak/130}\\
62)&\BL{4{\hskip 3pt}7{\hskip 3pt}0{\hskip 3pt}7{\hskip 3pt}1{\hskip 3pt}4{\hskip 3pt}5{\hskip 3pt}0{\hskip 3pt}0\hskip 3pt\sf rs~cd:ab;\allowbreak t~b:acs;\allowbreak u~t:ar\allowbreak/114}\\
63)&\BL{4{\hskip 3pt}7{\hskip 3pt}0{\hskip 3pt}7{\hskip 3pt}1{\hskip 3pt}2{\hskip 3pt}6{\hskip 3pt}0{\hskip 3pt}0\hskip 3pt\sf rs~cd:ab;\allowbreak t~b:acs;\allowbreak u~t:ar\allowbreak/148}\\
64)&\BL{4{\hskip 3pt}9{\hskip 3pt}0{\hskip 3pt}7{\hskip 3pt}1{\hskip 3pt}2{\hskip 3pt}2{\hskip 3pt}0{\hskip 3pt}0\hskip 3pt\sf r~b:ac;\allowbreak st~cd:ab;\allowbreak u~a:bcrs\allowbreak/22}\\
65)&\BL{4{\hskip 3pt}9{\hskip 3pt}0{\hskip 3pt}9{\hskip 3pt}1{\hskip 3pt}0{\hskip 3pt}0{\hskip 3pt}7{\hskip 3pt}0\hskip 3pt\sf rs~bd:ac;\allowbreak t~c:abs;\allowbreak u~t:ar\allowbreak/107}\\
66)&\BL{4{\hskip 3pt}9{\hskip 3pt}0{\hskip 3pt}9{\hskip 3pt}1{\hskip 3pt}0{\hskip 3pt}7{\hskip 3pt}0{\hskip 3pt}0\hskip 3pt\sf rs~cd:ab;\allowbreak t~b:adr;\allowbreak u~t:ac\allowbreak/171}\\
67)&\BL{4{\hskip 3pt}9{\hskip 3pt}0{\hskip 3pt}10{\hskip 3pt}1{\hskip 3pt}0{\hskip 3pt}0{\hskip 3pt}0{\hskip 3pt}0\hskip 3pt\sf rs~ac:bd;\allowbreak t~(ar):bd;\allowbreak u~b:adrt\allowbreak/108}\\
68)&\BL{4{\hskip 3pt}9{\hskip 3pt}3{\hskip 3pt}8{\hskip 3pt}0{\hskip 3pt}5{\hskip 3pt}1{\hskip 3pt}0{\hskip 3pt}0\hskip 3pt\sf rs~cd:ab;\allowbreak t~c:as;\allowbreak u~t:ad\allowbreak/105}\\
69)&\BL{4{\hskip 3pt}10{\hskip 3pt}0{\hskip 3pt}9{\hskip 3pt}1{\hskip 3pt}0{\hskip 3pt}0{\hskip 3pt}0{\hskip 3pt}0\hskip 3pt\sf rs~ac:bd;\allowbreak t~(ar):bd;\allowbreak u~d:abrt\allowbreak/77}\\
70)&\BL{5{\hskip 3pt}1{\hskip 3pt}0{\hskip 3pt}5{\hskip 3pt}5{\hskip 3pt}10{\hskip 3pt}10{\hskip 3pt}0{\hskip 3pt}0\hskip 3pt\sf rs~cd:ab;\allowbreak t~(cs):ab;\allowbreak u~(ds):abt\allowbreak/87}\\
71)&\BL{5{\hskip 3pt}1{\hskip 3pt}0{\hskip 3pt}6{\hskip 3pt}6{\hskip 3pt}6{\hskip 3pt}5{\hskip 3pt}0{\hskip 3pt}0\hskip 3pt\sf rs~cd:ab;\allowbreak tu~dr:ab;\allowbreak v~b:adstu}\\
72)&\BL{5{\hskip 3pt}1{\hskip 3pt}0{\hskip 3pt}7{\hskip 3pt}6{\hskip 3pt}4{\hskip 3pt}4{\hskip 3pt}0{\hskip 3pt}0\hskip 3pt\sf rs~cd:ab;\allowbreak t~b:acr;\allowbreak u~(dr):ab\allowbreak/173}\\
73)&\BL{5{\hskip 3pt}1{\hskip 3pt}0{\hskip 3pt}8{\hskip 3pt}7{\hskip 3pt}3{\hskip 3pt}2{\hskip 3pt}0{\hskip 3pt}0\hskip 3pt\sf r~c:ab;\allowbreak st~cd:abr;\allowbreak u~b:acrt\allowbreak/18}\\
74)&\BL{5{\hskip 3pt}1{\hskip 3pt}0{\hskip 3pt}9{\hskip 3pt}8{\hskip 3pt}2{\hskip 3pt}2{\hskip 3pt}0{\hskip 3pt}0\hskip 3pt\sf rs~cd:ab;\allowbreak tu~dr:ab;\allowbreak v~b:adst}\\
75)&\BL{5{\hskip 3pt}1{\hskip 3pt}0{\hskip 3pt}14{\hskip 3pt}9{\hskip 3pt}1{\hskip 3pt}2{\hskip 3pt}0{\hskip 3pt}0\hskip 3pt\sf rs~cd:ab;\allowbreak t~(cr):ab;\allowbreak u~a:bdrt\allowbreak/136}\\
76)&\BL{5{\hskip 3pt}1{\hskip 3pt}0{\hskip 3pt}15{\hskip 3pt}15{\hskip 3pt}0{\hskip 3pt}0{\hskip 3pt}0{\hskip 3pt}0\hskip 3pt\sf rs~cd:ab;\allowbreak t~(cr):ab;\allowbreak u~(cr):abt\allowbreak/86}\\
77)&\BL{5{\hskip 3pt}2{\hskip 3pt}0{\hskip 3pt}7{\hskip 3pt}4{\hskip 3pt}1{\hskip 3pt}2{\hskip 3pt}0{\hskip 3pt}0\hskip 3pt\sf rs~cd:ab;\allowbreak t~b:acr;\allowbreak u~b:adr\allowbreak/185}\\
}\Q\td{%
78)&\BL{5{\hskip 3pt}2{\hskip 3pt}0{\hskip 3pt}11{\hskip 3pt}6{\hskip 3pt}2{\hskip 3pt}0{\hskip 3pt}0{\hskip 3pt}0\hskip 3pt\sf rs~cd:ab;\allowbreak t~(cr):ab;\allowbreak u~t:adr\allowbreak/112}\\
79)&\BL{5{\hskip 3pt}3{\hskip 3pt}0{\hskip 3pt}5{\hskip 3pt}3{\hskip 3pt}4{\hskip 3pt}4{\hskip 3pt}0{\hskip 3pt}0\hskip 3pt\sf rs~cd:ab;\allowbreak t~b:acs;\allowbreak u~t:bcr\allowbreak/170}\\
80)&\BL{5{\hskip 3pt}3{\hskip 3pt}0{\hskip 3pt}5{\hskip 3pt}3{\hskip 3pt}2{\hskip 3pt}5{\hskip 3pt}0{\hskip 3pt}0\hskip 3pt\sf rs~cd:ab;\allowbreak t~b:acs;\allowbreak u~t:bcr\allowbreak/151}\\
81)&\BL{5{\hskip 3pt}3{\hskip 3pt}0{\hskip 3pt}6{\hskip 3pt}5{\hskip 3pt}1{\hskip 3pt}0{\hskip 3pt}0{\hskip 3pt}0\hskip 3pt\sf rs~cd:ab;\allowbreak t~a:bcs;\allowbreak uv~bt:acr}\\
82)&\BL{5{\hskip 3pt}3{\hskip 3pt}0{\hskip 3pt}10{\hskip 3pt}3{\hskip 3pt}1{\hskip 3pt}1{\hskip 3pt}0{\hskip 3pt}0\hskip 3pt\sf rs~cd:ab;\allowbreak t~c:ar;\allowbreak u~b:acrt\allowbreak/45}\\
83)&\BL{5{\hskip 3pt}4{\hskip 3pt}0{\hskip 3pt}4{\hskip 3pt}4{\hskip 3pt}3{\hskip 3pt}1{\hskip 3pt}0{\hskip 3pt}0\hskip 3pt\sf rs~cd:ab;\allowbreak t~a:bcs;\allowbreak u~(cs):abrt\allowbreak/54}\\
84)&\BL{5{\hskip 3pt}4{\hskip 3pt}0{\hskip 3pt}7{\hskip 3pt}2{\hskip 3pt}3{\hskip 3pt}2{\hskip 3pt}0{\hskip 3pt}0\hskip 3pt\sf rs~cd:ab;\allowbreak t~a:bcr;\allowbreak u~b:acs\allowbreak/183}\\
85)&\BL{5{\hskip 3pt}4{\hskip 3pt}1{\hskip 3pt}5{\hskip 3pt}4{\hskip 3pt}2{\hskip 3pt}0{\hskip 3pt}0{\hskip 3pt}0\hskip 3pt\sf rs~cd:ab;\allowbreak t~a:bcs;\allowbreak u~c:abrt\allowbreak/146}\\
86)&\BL{5{\hskip 3pt}4{\hskip 3pt}2{\hskip 3pt}3{\hskip 3pt}0{\hskip 3pt}9{\hskip 3pt}4{\hskip 3pt}0{\hskip 3pt}0\hskip 3pt\sf rs~cd:ab;\allowbreak t~(cr):ab;\allowbreak uv~rt:ad}\\
87)&\BL{5{\hskip 3pt}5{\hskip 3pt}0{\hskip 3pt}17{\hskip 3pt}1{\hskip 3pt}7{\hskip 3pt}7{\hskip 3pt}0{\hskip 3pt}0\hskip 3pt\sf rs~cd:ab;\allowbreak t~a:bcr;\allowbreak uv~ds:abt}\\
88)&\BL{5{\hskip 3pt}6{\hskip 3pt}0{\hskip 3pt}14{\hskip 3pt}1{\hskip 3pt}6{\hskip 3pt}6{\hskip 3pt}0{\hskip 3pt}0\hskip 3pt\sf rs~cd:ab;\allowbreak t~a:bcr;\allowbreak u~(ds):abt\allowbreak/191}\\
89)&\BL{5{\hskip 3pt}6{\hskip 3pt}5{\hskip 3pt}4{\hskip 3pt}2{\hskip 3pt}8{\hskip 3pt}0{\hskip 3pt}0{\hskip 3pt}0\hskip 3pt\sf rs~cd:ab;\allowbreak t~c:as;\allowbreak u~c:bdt\allowbreak/79}\\
90)&\BL{5{\hskip 3pt}7{\hskip 3pt}5{\hskip 3pt}3{\hskip 3pt}3{\hskip 3pt}9{\hskip 3pt}0{\hskip 3pt}0{\hskip 3pt}0\hskip 3pt\sf rs~cd:ab;\allowbreak t~c:as;\allowbreak u~r:bdst\allowbreak/39}\\
91)&\BL{5{\hskip 3pt}8{\hskip 3pt}3{\hskip 3pt}8{\hskip 3pt}1{\hskip 3pt}3{\hskip 3pt}0{\hskip 3pt}0{\hskip 3pt}0\hskip 3pt\sf rs~cd:ab;\allowbreak t~c:ar;\allowbreak u~r:ad\allowbreak/159}\\
92)&\BL{5{\hskip 3pt}9{\hskip 3pt}0{\hskip 3pt}11{\hskip 3pt}1{\hskip 3pt}2{\hskip 3pt}2{\hskip 3pt}0{\hskip 3pt}0\hskip 3pt\sf rs~cd:ab;\allowbreak t~a:bcr;\allowbreak u~a:bcrt\allowbreak/34}\\
93)&\BL{5{\hskip 3pt}9{\hskip 3pt}2{\hskip 3pt}9{\hskip 3pt}1{\hskip 3pt}2{\hskip 3pt}0{\hskip 3pt}0{\hskip 3pt}0\hskip 3pt\sf rs~cd:ab;\allowbreak tu~cd:ar\allowbreak/14}\\
94)&\BL{5{\hskip 3pt}10{\hskip 3pt}5{\hskip 3pt}10{\hskip 3pt}0{\hskip 3pt}5{\hskip 3pt}1{\hskip 3pt}0{\hskip 3pt}0\hskip 3pt\sf rs~cd:ab;\allowbreak t~c:as;\allowbreak u~t:ad\allowbreak/113}\\
95)&\BL{5{\hskip 3pt}11{\hskip 3pt}0{\hskip 3pt}13{\hskip 3pt}1{\hskip 3pt}1{\hskip 3pt}1{\hskip 3pt}0{\hskip 3pt}0\hskip 3pt\sf r~a:bc;\allowbreak st~cd:abr;\allowbreak u~s:acr\allowbreak/25}\\
96)&\BL{5{\hskip 3pt}15{\hskip 3pt}0{\hskip 3pt}15{\hskip 3pt}1{\hskip 3pt}0{\hskip 3pt}0{\hskip 3pt}0{\hskip 3pt}0\hskip 3pt\sf rs~ac:bd;\allowbreak t~(ar):bd;\allowbreak u~(ar):bdt\allowbreak/67}\\
97)&\BL{6{\hskip 3pt}1{\hskip 3pt}0{\hskip 3pt}7{\hskip 3pt}7{\hskip 3pt}10{\hskip 3pt}10{\hskip 3pt}0{\hskip 3pt}0\hskip 3pt\sf rs~cd:ab;\allowbreak tu~cs:ab;\allowbreak v~(ds):abtu}\\
98)&\BL{6{\hskip 3pt}1{\hskip 3pt}0{\hskip 3pt}8{\hskip 3pt}8{\hskip 3pt}7{\hskip 3pt}7{\hskip 3pt}0{\hskip 3pt}0\hskip 3pt\sf rs~cd:ab;\allowbreak tu~cs:ab;\allowbreak v~(ds):abtu}\\
99)&\BL{6{\hskip 3pt}1{\hskip 3pt}0{\hskip 3pt}12{\hskip 3pt}10{\hskip 3pt}3{\hskip 3pt}5{\hskip 3pt}0{\hskip 3pt}0\hskip 3pt\sf rs~cd:ab;\allowbreak tu~cs:ab;\allowbreak v~b:acrtu}\\
100)&\BL{6{\hskip 3pt}1{\hskip 3pt}0{\hskip 3pt}12{\hskip 3pt}9{\hskip 3pt}4{\hskip 3pt}6{\hskip 3pt}0{\hskip 3pt}0\hskip 3pt\sf rs~cd:ab;\allowbreak tu~cs:ab;\allowbreak v~b:acrtu}\\
101)&\BL{6{\hskip 3pt}1{\hskip 3pt}0{\hskip 3pt}13{\hskip 3pt}12{\hskip 3pt}2{\hskip 3pt}3{\hskip 3pt}0{\hskip 3pt}0\hskip 3pt\sf rs~cd:ab;\allowbreak tu~cr:ab;\allowbreak v~a:bcstu}\\
102)&\BL{6{\hskip 3pt}1{\hskip 3pt}0{\hskip 3pt}14{\hskip 3pt}12{\hskip 3pt}3{\hskip 3pt}2{\hskip 3pt}0{\hskip 3pt}0\hskip 3pt\sf rs~cd:ab;\allowbreak tu~cr:ab;\allowbreak v~b:acstu}\\
103)&\BL{6{\hskip 3pt}1{\hskip 3pt}0{\hskip 3pt}14{\hskip 3pt}11{\hskip 3pt}2{\hskip 3pt}3{\hskip 3pt}0{\hskip 3pt}0\hskip 3pt\sf rs~cd:ab;\allowbreak tu~cr:ab;\allowbreak v~a:bcstu}\\
104)&\BL{6{\hskip 3pt}1{\hskip 3pt}0{\hskip 3pt}16{\hskip 3pt}16{\hskip 3pt}1{\hskip 3pt}1{\hskip 3pt}0{\hskip 3pt}0\hskip 3pt\sf rs~cd:ab;\allowbreak tu~cr:ab;\allowbreak v~(cr):abtu}\\
105)&\BL{6{\hskip 3pt}2{\hskip 3pt}0{\hskip 3pt}6{\hskip 3pt}5{\hskip 3pt}4{\hskip 3pt}5{\hskip 3pt}0{\hskip 3pt}0\hskip 3pt\sf rs~cd:ab;\allowbreak t~b:adr;\allowbreak u~t:bdr\allowbreak/132}\\
106)&\BL{6{\hskip 3pt}2{\hskip 3pt}0{\hskip 3pt}7{\hskip 3pt}6{\hskip 3pt}1{\hskip 3pt}2{\hskip 3pt}0{\hskip 3pt}0\hskip 3pt\sf rs~cd:ab;\allowbreak t~a:bcs;\allowbreak u~b:ads\allowbreak/167}\\
107)&\BL{6{\hskip 3pt}2{\hskip 3pt}0{\hskip 3pt}8{\hskip 3pt}5{\hskip 3pt}2{\hskip 3pt}1{\hskip 3pt}0{\hskip 3pt}0\hskip 3pt\sf rs~cd:ab;\allowbreak t~b:acr;\allowbreak u~b:acs\allowbreak/180}\\
108)&\BL{6{\hskip 3pt}2{\hskip 3pt}2{\hskip 3pt}18{\hskip 3pt}7{\hskip 3pt}3{\hskip 3pt}0{\hskip 3pt}0{\hskip 3pt}0\hskip 3pt\sf rs~cd:ab;\allowbreak tu~cr:ab;\allowbreak v~a:bcstu}\\
109)&\BL{6{\hskip 3pt}2{\hskip 3pt}2{\hskip 3pt}19{\hskip 3pt}7{\hskip 3pt}2{\hskip 3pt}0{\hskip 3pt}0{\hskip 3pt}0\hskip 3pt\sf rs~cd:ab;\allowbreak tu~cr:ab;\allowbreak v~a:bcstu}\\
110)&\BL{6{\hskip 3pt}3{\hskip 3pt}0{\hskip 3pt}6{\hskip 3pt}4{\hskip 3pt}3{\hskip 3pt}4{\hskip 3pt}0{\hskip 3pt}0\hskip 3pt\sf rs~cd:ab;\allowbreak t~b:acs;\allowbreak u~t:bcr\allowbreak/122}\\
111)&\BL{6{\hskip 3pt}3{\hskip 3pt}0{\hskip 3pt}8{\hskip 3pt}4{\hskip 3pt}2{\hskip 3pt}5{\hskip 3pt}0{\hskip 3pt}0\hskip 3pt\sf rs~cd:ab;\allowbreak t~b:adr;\allowbreak u~t:bcr\allowbreak/202}\\
112)&\BL{6{\hskip 3pt}3{\hskip 3pt}0{\hskip 3pt}8{\hskip 3pt}4{\hskip 3pt}2{\hskip 3pt}4{\hskip 3pt}0{\hskip 3pt}1\hskip 3pt\sf rs~cd:ab;\allowbreak t~b:acs;\allowbreak uv~at:bcr}\\
113)&\BL{6{\hskip 3pt}3{\hskip 3pt}2{\hskip 3pt}3{\hskip 3pt}0{\hskip 3pt}12{\hskip 3pt}7{\hskip 3pt}0{\hskip 3pt}0\hskip 3pt\sf rs~cd:ab;\allowbreak tu~cr:ab;\allowbreak v~(rtu):ad}\\
114)&\BL{6{\hskip 3pt}3{\hskip 3pt}2{\hskip 3pt}9{\hskip 3pt}5{\hskip 3pt}2{\hskip 3pt}0{\hskip 3pt}0{\hskip 3pt}0\hskip 3pt\sf rs~cd:ab;\allowbreak t~b:acr;\allowbreak u~r:ad\allowbreak/174}\\
115)&\BL{6{\hskip 3pt}4{\hskip 3pt}0{\hskip 3pt}8{\hskip 3pt}3{\hskip 3pt}3{\hskip 3pt}4{\hskip 3pt}0{\hskip 3pt}0\hskip 3pt\sf rs~cd:ab;\allowbreak t~b:acs;\allowbreak uv~at:bcr}\\
116)&\BL{6{\hskip 3pt}5{\hskip 3pt}0{\hskip 3pt}10{\hskip 3pt}2{\hskip 3pt}8{\hskip 3pt}5{\hskip 3pt}0{\hskip 3pt}0\hskip 3pt\sf rs~cd:ab;\allowbreak t~c:abrs;\allowbreak u~a:bdst\allowbreak/78}\\
117)&\BL{6{\hskip 3pt}5{\hskip 3pt}0{\hskip 3pt}10{\hskip 3pt}2{\hskip 3pt}9{\hskip 3pt}4{\hskip 3pt}0{\hskip 3pt}0\hskip 3pt\sf rs~cd:ab;\allowbreak t~c:abrs;\allowbreak u~a:bdst\allowbreak/61}\\
118)&\BL{6{\hskip 3pt}5{\hskip 3pt}0{\hskip 3pt}10{\hskip 3pt}2{\hskip 3pt}6{\hskip 3pt}9{\hskip 3pt}0{\hskip 3pt}0\hskip 3pt\sf rs~cd:ab;\allowbreak t~c:abs;\allowbreak u~a:bdst\allowbreak/49}\\
119)&\BL{6{\hskip 3pt}5{\hskip 3pt}2{\hskip 3pt}26{\hskip 3pt}3{\hskip 3pt}2{\hskip 3pt}0{\hskip 3pt}0{\hskip 3pt}0\hskip 3pt\sf rs~ac:bd;\allowbreak tu~ar:bd;\allowbreak v~a:bcrtu}\\
120)&\BL{6{\hskip 3pt}6{\hskip 3pt}0{\hskip 3pt}5{\hskip 3pt}5{\hskip 3pt}3{\hskip 3pt}0{\hskip 3pt}0{\hskip 3pt}0\hskip 3pt\sf rs~cd:ab;\allowbreak t~a:bcs;\allowbreak u~c:abrt\allowbreak/71}\\
121)&\BL{6{\hskip 3pt}6{\hskip 3pt}0{\hskip 3pt}6{\hskip 3pt}3{\hskip 3pt}4{\hskip 3pt}3{\hskip 3pt}0{\hskip 3pt}0\hskip 3pt\sf r~b:ac;\allowbreak st~cd:ab;\allowbreak u~b:act\allowbreak/19}\\
122)&\BL{6{\hskip 3pt}6{\hskip 3pt}5{\hskip 3pt}6{\hskip 3pt}0{\hskip 3pt}6{\hskip 3pt}3{\hskip 3pt}0{\hskip 3pt}0\hskip 3pt\sf rs~cd:ab;\allowbreak t~a:bcr;\allowbreak u~r:ad\allowbreak/188}\\
123)&\BL{6{\hskip 3pt}7{\hskip 3pt}0{\hskip 3pt}4{\hskip 3pt}4{\hskip 3pt}1{\hskip 3pt}1{\hskip 3pt}7{\hskip 3pt}0\hskip 3pt\sf rs~bd:ac;\allowbreak tu~cd:abs;\allowbreak v~b:acrst}\\
124)&\BL{6{\hskip 3pt}7{\hskip 3pt}0{\hskip 3pt}10{\hskip 3pt}2{\hskip 3pt}2{\hskip 3pt}2{\hskip 3pt}0{\hskip 3pt}0\hskip 3pt\sf rs~cd:ab;\allowbreak t~a:bcr;\allowbreak u~b:acrt\allowbreak/51}\\
125)&\BL{6{\hskip 3pt}7{\hskip 3pt}2{\hskip 3pt}5{\hskip 3pt}3{\hskip 3pt}0{\hskip 3pt}2{\hskip 3pt}10{\hskip 3pt}0\hskip 3pt\sf rs~bd:ac;\allowbreak t~c:adr;\allowbreak u~t:rs\allowbreak/95}\\
126)&\BL{6{\hskip 3pt}7{\hskip 3pt}2{\hskip 3pt}5{\hskip 3pt}0{\hskip 3pt}12{\hskip 3pt}5{\hskip 3pt}0{\hskip 3pt}0\hskip 3pt\sf r~c:ab;\allowbreak st~cd:abr;\allowbreak uv~cr:at}\\
127)&\BL{6{\hskip 3pt}8{\hskip 3pt}0{\hskip 3pt}9{\hskip 3pt}2{\hskip 3pt}2{\hskip 3pt}3{\hskip 3pt}0{\hskip 3pt}0\hskip 3pt\sf rs~cd:ab;\allowbreak t~a:bcr;\allowbreak u~(ds):abt\allowbreak/144}\\
128)&\BL{6{\hskip 3pt}8{\hskip 3pt}6{\hskip 3pt}4{\hskip 3pt}3{\hskip 3pt}11{\hskip 3pt}0{\hskip 3pt}0{\hskip 3pt}0\hskip 3pt\sf rs~cd:ab;\allowbreak t~c:as;\allowbreak u~c:bdt\allowbreak/74}\\
129)&\BL{6{\hskip 3pt}9{\hskip 3pt}0{\hskip 3pt}10{\hskip 3pt}2{\hskip 3pt}0{\hskip 3pt}0{\hskip 3pt}11{\hskip 3pt}0\hskip 3pt\sf rs~bd:ac;\allowbreak t~c:adr;\allowbreak u~a:bcrt\allowbreak/35}\\
130)&\BL{6{\hskip 3pt}10{\hskip 3pt}0{\hskip 3pt}9{\hskip 3pt}2{\hskip 3pt}0{\hskip 3pt}11{\hskip 3pt}0{\hskip 3pt}0\hskip 3pt\sf rs~cd:ab;\allowbreak t~b:adr;\allowbreak u~a:bcrt\allowbreak/43}\\
131)&\BL{6{\hskip 3pt}11{\hskip 3pt}3{\hskip 3pt}9{\hskip 3pt}2{\hskip 3pt}0{\hskip 3pt}3{\hskip 3pt}0{\hskip 3pt}0\hskip 3pt\sf rs~cd:ab;\allowbreak t~c:ar;\allowbreak u~t:bd\allowbreak/147}\\
132)&\BL{7{\hskip 3pt}1{\hskip 3pt}0{\hskip 3pt}9{\hskip 3pt}9{\hskip 3pt}13{\hskip 3pt}13{\hskip 3pt}0{\hskip 3pt}0\hskip 3pt\sf rs~cd:ab;\allowbreak t~(cs):ab;\allowbreak uv~ds:abt}\\
133)&\BL{7{\hskip 3pt}1{\hskip 3pt}0{\hskip 3pt}10{\hskip 3pt}10{\hskip 3pt}9{\hskip 3pt}9{\hskip 3pt}0{\hskip 3pt}0\hskip 3pt\sf rs~cd:ab;\allowbreak tu~cs:ab;\allowbreak v~(ds):abtu}\\
134)&\BL{7{\hskip 3pt}1{\hskip 3pt}0{\hskip 3pt}12{\hskip 3pt}12{\hskip 3pt}5{\hskip 3pt}5{\hskip 3pt}0{\hskip 3pt}0\hskip 3pt\sf rs~cd:ab;\allowbreak tu~(cr)(cs):ab\allowbreak/13}\\
135)&\BL{7{\hskip 3pt}1{\hskip 3pt}0{\hskip 3pt}15{\hskip 3pt}15{\hskip 3pt}3{\hskip 3pt}3{\hskip 3pt}0{\hskip 3pt}0\hskip 3pt\sf rs~cd:ab;\allowbreak t~(cr):ab;\allowbreak uv~cs:abt}\\
136)&\BL{7{\hskip 3pt}1{\hskip 3pt}0{\hskip 3pt}20{\hskip 3pt}20{\hskip 3pt}2{\hskip 3pt}2{\hskip 3pt}0{\hskip 3pt}0\hskip 3pt\sf rs~cd:ab;\allowbreak tu~cr:ab;\allowbreak v~(cr):abtu}\\
}\Q\te{%
137)&\BL{7{\hskip 3pt}4{\hskip 3pt}0{\hskip 3pt}6{\hskip 3pt}6{\hskip 3pt}3{\hskip 3pt}2{\hskip 3pt}0{\hskip 3pt}0\hskip 3pt\sf rs~cd:ab;\allowbreak t~a:bcs;\allowbreak u~b:acs\allowbreak/175}\\
138)&\BL{7{\hskip 3pt}6{\hskip 3pt}0{\hskip 3pt}6{\hskip 3pt}5{\hskip 3pt}1{\hskip 3pt}5{\hskip 3pt}0{\hskip 3pt}0\hskip 3pt\sf rs~cd:ab;\allowbreak t~b:adr;\allowbreak u~r:abcdt\allowbreak/133}\\
139)&\BL{7{\hskip 3pt}7{\hskip 3pt}2{\hskip 3pt}28{\hskip 3pt}3{\hskip 3pt}2{\hskip 3pt}0{\hskip 3pt}0{\hskip 3pt}0\hskip 3pt\sf rs~ac:bd;\allowbreak tu~ar:bd;\allowbreak v~a:bcrtu}\\
140)&\BL{7{\hskip 3pt}7{\hskip 3pt}3{\hskip 3pt}6{\hskip 3pt}4{\hskip 3pt}7{\hskip 3pt}0{\hskip 3pt}0{\hskip 3pt}0\hskip 3pt\sf rs~cd:ab;\allowbreak t~a:bcs;\allowbreak u~t:as\allowbreak/179}\\
141)&\BL{7{\hskip 3pt}7{\hskip 3pt}5{\hskip 3pt}6{\hskip 3pt}0{\hskip 3pt}10{\hskip 3pt}3{\hskip 3pt}0{\hskip 3pt}0\hskip 3pt\sf rs~cd:ab;\allowbreak t~c:as;\allowbreak u~r:act\allowbreak/72}\\
142)&\BL{7{\hskip 3pt}8{\hskip 3pt}1{\hskip 3pt}6{\hskip 3pt}4{\hskip 3pt}6{\hskip 3pt}0{\hskip 3pt}0{\hskip 3pt}0\hskip 3pt\sf rs~cd:ab;\allowbreak t~a:bcs;\allowbreak u~s:abcdt\allowbreak/155}\\
143)&\BL{7{\hskip 3pt}8{\hskip 3pt}5{\hskip 3pt}6{\hskip 3pt}3{\hskip 3pt}7{\hskip 3pt}0{\hskip 3pt}0{\hskip 3pt}0\hskip 3pt\sf rs~cd:ab;\allowbreak t~c:as;\allowbreak u~t:cs\allowbreak/101}\\
144)&\BL{7{\hskip 3pt}9{\hskip 3pt}0{\hskip 3pt}12{\hskip 3pt}2{\hskip 3pt}3{\hskip 3pt}4{\hskip 3pt}0{\hskip 3pt}0\hskip 3pt\sf rs~cd:ab;\allowbreak t~a:bcr;\allowbreak u~a:bdst\allowbreak/69}\\
145)&\BL{7{\hskip 3pt}9{\hskip 3pt}0{\hskip 3pt}12{\hskip 3pt}2{\hskip 3pt}4{\hskip 3pt}3{\hskip 3pt}0{\hskip 3pt}0\hskip 3pt\sf rs~cd:ab;\allowbreak t~a:bcr;\allowbreak u~b:adst\allowbreak/59}\\
146)&\BL{7{\hskip 3pt}11{\hskip 3pt}0{\hskip 3pt}21{\hskip 3pt}2{\hskip 3pt}7{\hskip 3pt}1{\hskip 3pt}0{\hskip 3pt}3\hskip 3pt\sf rs~ac:bd;\allowbreak t~a:bcr;\allowbreak u~(ar):bdt\allowbreak/192}\\
147)&\BL{7{\hskip 3pt}11{\hskip 3pt}0{\hskip 3pt}26{\hskip 3pt}2{\hskip 3pt}2{\hskip 3pt}1{\hskip 3pt}0{\hskip 3pt}0\hskip 3pt\sf rs~ac:bd;\allowbreak t~(ar):bd;\allowbreak u~a:bcrt\allowbreak/156}\\
148)&\BL{7{\hskip 3pt}12{\hskip 3pt}0{\hskip 3pt}11{\hskip 3pt}2{\hskip 3pt}1{\hskip 3pt}12{\hskip 3pt}0{\hskip 3pt}0\hskip 3pt\sf rs~cd:ab;\allowbreak t~b:acs;\allowbreak u~a:bcrt\allowbreak/106}\\
149)&\BL{7{\hskip 3pt}12{\hskip 3pt}0{\hskip 3pt}11{\hskip 3pt}2{\hskip 3pt}5{\hskip 3pt}10{\hskip 3pt}0{\hskip 3pt}0\hskip 3pt\sf rs~cd:ab;\allowbreak t~b:adr;\allowbreak u~a:bcrt\allowbreak/50}\\
150)&\BL{7{\hskip 3pt}12{\hskip 3pt}2{\hskip 3pt}8{\hskip 3pt}1{\hskip 3pt}8{\hskip 3pt}2{\hskip 3pt}0{\hskip 3pt}0\hskip 3pt\sf rs~cd:ab;\allowbreak t~c:as;\allowbreak u~d:ast\allowbreak/42}\\
151)&\BL{7{\hskip 3pt}18{\hskip 3pt}6{\hskip 3pt}18{\hskip 3pt}1{\hskip 3pt}6{\hskip 3pt}0{\hskip 3pt}0{\hskip 3pt}0\hskip 3pt\sf rs~cd:ab;\allowbreak t~c:ar;\allowbreak u~t:ad\allowbreak/97}\\
152)&\BL{7{\hskip 3pt}28{\hskip 3pt}4{\hskip 3pt}28{\hskip 3pt}1{\hskip 3pt}4{\hskip 3pt}0{\hskip 3pt}0{\hskip 3pt}0\hskip 3pt\sf rs~cd:ab;\allowbreak t~r:ac;\allowbreak u~s:at\allowbreak/104}\\
153)&\BL{8{\hskip 3pt}1{\hskip 3pt}0{\hskip 3pt}14{\hskip 3pt}14{\hskip 3pt}7{\hskip 3pt}7{\hskip 3pt}0{\hskip 3pt}0\hskip 3pt\sf rs~cd:ab;\allowbreak t~(cr):ab;\allowbreak uv~cs:abt}\\
154)&\BL{8{\hskip 3pt}1{\hskip 3pt}0{\hskip 3pt}16{\hskip 3pt}16{\hskip 3pt}5{\hskip 3pt}5{\hskip 3pt}0{\hskip 3pt}0\hskip 3pt\sf rs~cd:ab;\allowbreak t~(cr):ab;\allowbreak uv~cs:abt}\\
155)&\BL{8{\hskip 3pt}1{\hskip 3pt}0{\hskip 3pt}18{\hskip 3pt}18{\hskip 3pt}4{\hskip 3pt}4{\hskip 3pt}0{\hskip 3pt}0\hskip 3pt\sf rs~cd:ab;\allowbreak t~(cr):ab;\allowbreak uv~cs:abt}\\
156)&\BL{8{\hskip 3pt}2{\hskip 3pt}0{\hskip 3pt}14{\hskip 3pt}13{\hskip 3pt}2{\hskip 3pt}1{\hskip 3pt}0{\hskip 3pt}0\hskip 3pt\sf rs~cd:ab;\allowbreak t~b:acs;\allowbreak u~(cr):ab\allowbreak/193}\\
157)&\BL{8{\hskip 3pt}2{\hskip 3pt}0{\hskip 3pt}23{\hskip 3pt}12{\hskip 3pt}1{\hskip 3pt}4{\hskip 3pt}0{\hskip 3pt}0\hskip 3pt\sf rs~cd:ab;\allowbreak t~(cr):ab;\allowbreak u~a:bcst\allowbreak/125}\\
158)&\BL{8{\hskip 3pt}4{\hskip 3pt}0{\hskip 3pt}9{\hskip 3pt}5{\hskip 3pt}4{\hskip 3pt}7{\hskip 3pt}0{\hskip 3pt}0\hskip 3pt\sf rs~cd:ab;\allowbreak t~b:acs;\allowbreak uv~at:bcr}\\
159)&\BL{8{\hskip 3pt}4{\hskip 3pt}0{\hskip 3pt}10{\hskip 3pt}5{\hskip 3pt}4{\hskip 3pt}6{\hskip 3pt}0{\hskip 3pt}0\hskip 3pt\sf rs~cd:ab;\allowbreak t~b:acs;\allowbreak uv~at:bcr}\\
160)&\BL{8{\hskip 3pt}6{\hskip 3pt}0{\hskip 3pt}7{\hskip 3pt}6{\hskip 3pt}2{\hskip 3pt}4{\hskip 3pt}0{\hskip 3pt}0\hskip 3pt\sf rs~cd:ab;\allowbreak t~b:acs;\allowbreak u~c:abrst\allowbreak/131}\\
161)&\BL{8{\hskip 3pt}6{\hskip 3pt}0{\hskip 3pt}10{\hskip 3pt}4{\hskip 3pt}3{\hskip 3pt}4{\hskip 3pt}0{\hskip 3pt}0\hskip 3pt\sf rs~cd:ab;\allowbreak t~b:acs;\allowbreak uv~at:bcr}\\
162)&\BL{8{\hskip 3pt}6{\hskip 3pt}4{\hskip 3pt}13{\hskip 3pt}4{\hskip 3pt}8{\hskip 3pt}0{\hskip 3pt}0{\hskip 3pt}0\hskip 3pt\sf rs~cd:ab;\allowbreak t~a:bcs;\allowbreak u~a:bdrt\allowbreak/80}\\
163)&\BL{8{\hskip 3pt}9{\hskip 3pt}0{\hskip 3pt}9{\hskip 3pt}5{\hskip 3pt}0{\hskip 3pt}0{\hskip 3pt}0{\hskip 3pt}0\hskip 3pt\sf rs~ac:bd;\allowbreak t~d:abr;\allowbreak u~d:abrt\allowbreak/82}\\
164)&\BL{8{\hskip 3pt}9{\hskip 3pt}2{\hskip 3pt}7{\hskip 3pt}4{\hskip 3pt}0{\hskip 3pt}2{\hskip 3pt}12{\hskip 3pt}0\hskip 3pt\sf rs~bd:ac;\allowbreak t~c:adr;\allowbreak u~t:rs\allowbreak/115}\\
165)&\BL{8{\hskip 3pt}9{\hskip 3pt}3{\hskip 3pt}7{\hskip 3pt}0{\hskip 3pt}15{\hskip 3pt}6{\hskip 3pt}0{\hskip 3pt}0\hskip 3pt\sf r~c:ab;\allowbreak st~cd:abr;\allowbreak uv~cr:at}\\
166)&\BL{8{\hskip 3pt}9{\hskip 3pt}8{\hskip 3pt}7{\hskip 3pt}3{\hskip 3pt}8{\hskip 3pt}0{\hskip 3pt}0{\hskip 3pt}0\hskip 3pt\sf rs~cd:ab;\allowbreak t~c:as;\allowbreak u~c:bst\allowbreak/55}\\
167)&\BL{8{\hskip 3pt}11{\hskip 3pt}1{\hskip 3pt}6{\hskip 3pt}4{\hskip 3pt}14{\hskip 3pt}0{\hskip 3pt}0{\hskip 3pt}0\hskip 3pt\sf rs~cd:ab;\allowbreak t~a:bcs;\allowbreak u~a:bcst\allowbreak/40}\\
168)&\BL{8{\hskip 3pt}13{\hskip 3pt}0{\hskip 3pt}25{\hskip 3pt}2{\hskip 3pt}9{\hskip 3pt}2{\hskip 3pt}0{\hskip 3pt}1\hskip 3pt\sf rs~ac:bd;\allowbreak t~a:bcr;\allowbreak u~(ar):bdt\allowbreak/211}\\
169)&\BL{8{\hskip 3pt}13{\hskip 3pt}0{\hskip 3pt}34{\hskip 3pt}2{\hskip 3pt}3{\hskip 3pt}2{\hskip 3pt}0{\hskip 3pt}0\hskip 3pt\sf rs~ac:bd;\allowbreak t~(ar):bd;\allowbreak u~r:abst\allowbreak/166}\\
170)&\BL{9{\hskip 3pt}1{\hskip 3pt}0{\hskip 3pt}17{\hskip 3pt}17{\hskip 3pt}8{\hskip 3pt}8{\hskip 3pt}0{\hskip 3pt}0\hskip 3pt\sf rs~cd:ab;\allowbreak t~(cr):ab;\allowbreak uv~cs:abt}\\
171)&\BL{9{\hskip 3pt}1{\hskip 3pt}0{\hskip 3pt}19{\hskip 3pt}19{\hskip 3pt}6{\hskip 3pt}6{\hskip 3pt}0{\hskip 3pt}0\hskip 3pt\sf rs~cd:ab;\allowbreak t~(cr):ab;\allowbreak uv~cs:abt}\\
172)&\BL{9{\hskip 3pt}1{\hskip 3pt}0{\hskip 3pt}22{\hskip 3pt}22{\hskip 3pt}5{\hskip 3pt}5{\hskip 3pt}0{\hskip 3pt}0\hskip 3pt\sf rs~cd:ab;\allowbreak t~(cr):ab;\allowbreak uv~cs:abt}\\
173)&\BL{9{\hskip 3pt}3{\hskip 3pt}0{\hskip 3pt}13{\hskip 3pt}8{\hskip 3pt}2{\hskip 3pt}9{\hskip 3pt}0{\hskip 3pt}0\hskip 3pt\sf rs~cd:ab;\allowbreak t~b:acs;\allowbreak u~b:acrt\allowbreak/128}\\
174)&\BL{9{\hskip 3pt}3{\hskip 3pt}0{\hskip 3pt}16{\hskip 3pt}7{\hskip 3pt}11{\hskip 3pt}12{\hskip 3pt}0{\hskip 3pt}0\hskip 3pt\sf rs~cd:ab;\allowbreak tu~cs:ab;\allowbreak v~a:bcrtu}\\
175)&\BL{9{\hskip 3pt}3{\hskip 3pt}0{\hskip 3pt}16{\hskip 3pt}7{\hskip 3pt}13{\hskip 3pt}10{\hskip 3pt}0{\hskip 3pt}0\hskip 3pt\sf rs~cd:ab;\allowbreak tu~cs:ab;\allowbreak v~a:bcrtu}\\
176)&\BL{9{\hskip 3pt}4{\hskip 3pt}3{\hskip 3pt}10{\hskip 3pt}0{\hskip 3pt}22{\hskip 3pt}10{\hskip 3pt}0{\hskip 3pt}0\hskip 3pt\sf rs~cd:ab;\allowbreak t~(cr):ab;\allowbreak u~a:bdst\allowbreak/119}\\
177)&\BL{9{\hskip 3pt}6{\hskip 3pt}0{\hskip 3pt}11{\hskip 3pt}5{\hskip 3pt}3{\hskip 3pt}7{\hskip 3pt}0{\hskip 3pt}0\hskip 3pt\sf rs~cd:ab;\allowbreak t~b:acs;\allowbreak uv~at:bcr}\\
178)&\BL{9{\hskip 3pt}6{\hskip 3pt}0{\hskip 3pt}12{\hskip 3pt}5{\hskip 3pt}3{\hskip 3pt}5{\hskip 3pt}0{\hskip 3pt}0\hskip 3pt\sf rs~cd:ab;\allowbreak t~b:acs;\allowbreak uv~at:bcr}\\
179)&\BL{9{\hskip 3pt}6{\hskip 3pt}6{\hskip 3pt}12{\hskip 3pt}5{\hskip 3pt}6{\hskip 3pt}0{\hskip 3pt}0{\hskip 3pt}0\hskip 3pt\sf rs~cd:ab;\allowbreak t~(cr):ab;\allowbreak uv~rt:ad}\\
180)&\BL{9{\hskip 3pt}7{\hskip 3pt}0{\hskip 3pt}13{\hskip 3pt}4{\hskip 3pt}4{\hskip 3pt}5{\hskip 3pt}0{\hskip 3pt}0\hskip 3pt\sf rs~cd:ab;\allowbreak t~b:acs;\allowbreak uv~at:bcr}\\
181)&\BL{9{\hskip 3pt}7{\hskip 3pt}6{\hskip 3pt}14{\hskip 3pt}4{\hskip 3pt}9{\hskip 3pt}0{\hskip 3pt}0{\hskip 3pt}0\hskip 3pt\sf rs~cd:ab;\allowbreak t~a:bcs;\allowbreak u~(rt):ad\allowbreak/99}\\
182)&\BL{9{\hskip 3pt}8{\hskip 3pt}0{\hskip 3pt}15{\hskip 3pt}3{\hskip 3pt}15{\hskip 3pt}5{\hskip 3pt}0{\hskip 3pt}0\hskip 3pt\sf rs~cd:ab;\allowbreak t~c:abrs;\allowbreak u~a:bdst\allowbreak/57}\\
183)&\BL{9{\hskip 3pt}8{\hskip 3pt}2{\hskip 3pt}9{\hskip 3pt}6{\hskip 3pt}4{\hskip 3pt}0{\hskip 3pt}0{\hskip 3pt}0\hskip 3pt\sf rs~cd:ab;\allowbreak t~a:bcs;\allowbreak u~c:abrst\allowbreak/139}\\
184)&\BL{9{\hskip 3pt}9{\hskip 3pt}0{\hskip 3pt}11{\hskip 3pt}5{\hskip 3pt}0{\hskip 3pt}0{\hskip 3pt}0{\hskip 3pt}3\hskip 3pt\sf rs~ad:bc;\allowbreak t~c:abs;\allowbreak uv~bt:acr}\\
185)&\BL{9{\hskip 3pt}9{\hskip 3pt}3{\hskip 3pt}7{\hskip 3pt}6{\hskip 3pt}9{\hskip 3pt}0{\hskip 3pt}0{\hskip 3pt}0\hskip 3pt\sf rs~cd:ab;\allowbreak t~a:bcs;\allowbreak u~t:ac\allowbreak/177}\\
186)&\BL{9{\hskip 3pt}9{\hskip 3pt}4{\hskip 3pt}8{\hskip 3pt}0{\hskip 3pt}15{\hskip 3pt}6{\hskip 3pt}0{\hskip 3pt}0\hskip 3pt\sf r~c:ab;\allowbreak st~cd:abr;\allowbreak uv~cr:at}\\
187)&\BL{9{\hskip 3pt}10{\hskip 3pt}3{\hskip 3pt}7{\hskip 3pt}5{\hskip 3pt}8{\hskip 3pt}0{\hskip 3pt}0{\hskip 3pt}0\hskip 3pt\sf rs~cd:ab;\allowbreak t~a:bcs;\allowbreak u~t:as\allowbreak/152}\\
188)&\BL{9{\hskip 3pt}10{\hskip 3pt}6{\hskip 3pt}8{\hskip 3pt}4{\hskip 3pt}7{\hskip 3pt}0{\hskip 3pt}0{\hskip 3pt}0\hskip 3pt\sf rs~cd:ab;\allowbreak t~c:as;\allowbreak u~s:act\allowbreak/65}\\
189)&\BL{9{\hskip 3pt}15{\hskip 3pt}0{\hskip 3pt}31{\hskip 3pt}2{\hskip 3pt}12{\hskip 3pt}3{\hskip 3pt}0{\hskip 3pt}0\hskip 3pt\sf rs~ac:bd;\allowbreak t~a:bcr;\allowbreak u~(ar):bdt\allowbreak/181}\\
190)&\BL{9{\hskip 3pt}15{\hskip 3pt}0{\hskip 3pt}43{\hskip 3pt}2{\hskip 3pt}4{\hskip 3pt}3{\hskip 3pt}0{\hskip 3pt}0\hskip 3pt\sf rs~ac:bd;\allowbreak t~(ar):bd;\allowbreak u~r:abst\allowbreak/182}\\
191)&\BL{9{\hskip 3pt}16{\hskip 3pt}2{\hskip 3pt}12{\hskip 3pt}3{\hskip 3pt}4{\hskip 3pt}0{\hskip 3pt}0{\hskip 3pt}0\hskip 3pt\sf rs~cd:ab;\allowbreak t~c:as;\allowbreak u~d:ast\allowbreak/172}\\
192)&\BL{10{\hskip 3pt}1{\hskip 3pt}0{\hskip 3pt}20{\hskip 3pt}20{\hskip 3pt}10{\hskip 3pt}10{\hskip 3pt}0{\hskip 3pt}0\hskip 3pt\sf rs~cd:ab;\allowbreak t~(cr):ab;\allowbreak uv~cs:abt}\\
193)&\BL{10{\hskip 3pt}1{\hskip 3pt}0{\hskip 3pt}23{\hskip 3pt}23{\hskip 3pt}7{\hskip 3pt}7{\hskip 3pt}0{\hskip 3pt}0\hskip 3pt\sf rs~cd:ab;\allowbreak t~(cr):ab;\allowbreak uv~cs:abt}\\
194)&\BL{10{\hskip 3pt}2{\hskip 3pt}0{\hskip 3pt}18{\hskip 3pt}17{\hskip 3pt}4{\hskip 3pt}3{\hskip 3pt}0{\hskip 3pt}0\hskip 3pt\sf rs~cd:ab;\allowbreak t~b:adr;\allowbreak u~(cr):ab\allowbreak/203}\\
195)&\BL{10{\hskip 3pt}2{\hskip 3pt}0{\hskip 3pt}23{\hskip 3pt}16{\hskip 3pt}3{\hskip 3pt}6{\hskip 3pt}0{\hskip 3pt}0\hskip 3pt\sf rs~cd:ab;\allowbreak tu~cr:ab;\allowbreak v~a:bcstu}\\
}\Q\tf{%
196)&\BL{10{\hskip 3pt}5{\hskip 3pt}0{\hskip 3pt}11{\hskip 3pt}7{\hskip 3pt}4{\hskip 3pt}5{\hskip 3pt}0{\hskip 3pt}1\hskip 3pt\sf rs~cd:ab;\allowbreak t~b:acs;\allowbreak uv~at:bcr}\\
197)&\BL{10{\hskip 3pt}6{\hskip 3pt}0{\hskip 3pt}11{\hskip 3pt}10{\hskip 3pt}5{\hskip 3pt}0{\hskip 3pt}0{\hskip 3pt}0\hskip 3pt\sf rs~cd:ab;\allowbreak t~a:bcs;\allowbreak uv~bt:acr}\\
198)&\BL{10{\hskip 3pt}6{\hskip 3pt}0{\hskip 3pt}12{\hskip 3pt}9{\hskip 3pt}0{\hskip 3pt}5{\hskip 3pt}0{\hskip 3pt}0\hskip 3pt\sf rs~cd:ab;\allowbreak t~b:acs;\allowbreak uv~at:bcr}\\
199)&\BL{10{\hskip 3pt}6{\hskip 3pt}0{\hskip 3pt}13{\hskip 3pt}6{\hskip 3pt}3{\hskip 3pt}9{\hskip 3pt}0{\hskip 3pt}0\hskip 3pt\sf rs~cd:ab;\allowbreak t~b:acs;\allowbreak uv~at:bcr}\\
200)&\BL{10{\hskip 3pt}7{\hskip 3pt}0{\hskip 3pt}9{\hskip 3pt}8{\hskip 3pt}2{\hskip 3pt}4{\hskip 3pt}0{\hskip 3pt}0\hskip 3pt\sf rs~cd:ab;\allowbreak t~b:acs;\allowbreak u~b:acst\allowbreak/98}\\
201)&\BL{10{\hskip 3pt}7{\hskip 3pt}0{\hskip 3pt}11{\hskip 3pt}11{\hskip 3pt}0{\hskip 3pt}0{\hskip 3pt}0{\hskip 3pt}0\hskip 3pt\sf rs~cd:ab;\allowbreak t~a:bcr;\allowbreak u~a:bcrt\allowbreak/90}\\
202)&\BL{10{\hskip 3pt}7{\hskip 3pt}0{\hskip 3pt}14{\hskip 3pt}5{\hskip 3pt}4{\hskip 3pt}8{\hskip 3pt}0{\hskip 3pt}0\hskip 3pt\sf rs~cd:ab;\allowbreak t~b:acs;\allowbreak uv~at:bcr}\\
203)&\BL{10{\hskip 3pt}8{\hskip 3pt}3{\hskip 3pt}11{\hskip 3pt}7{\hskip 3pt}3{\hskip 3pt}0{\hskip 3pt}0{\hskip 3pt}0\hskip 3pt\sf rs~cd:ab;\allowbreak t~a:bcs;\allowbreak u~s:abcdt\allowbreak/75}\\
204)&\BL{10{\hskip 3pt}10{\hskip 3pt}0{\hskip 3pt}11{\hskip 3pt}6{\hskip 3pt}0{\hskip 3pt}0{\hskip 3pt}0{\hskip 3pt}5\hskip 3pt\sf rs~ad:bc;\allowbreak t~c:abs;\allowbreak uv~bt:acr}\\
205)&\BL{10{\hskip 3pt}10{\hskip 3pt}1{\hskip 3pt}9{\hskip 3pt}7{\hskip 3pt}5{\hskip 3pt}0{\hskip 3pt}0{\hskip 3pt}0\hskip 3pt\sf rs~cd:ab;\allowbreak t~a:bcs;\allowbreak u~c:abrt\allowbreak/165}\\
206)&\BL{10{\hskip 3pt}10{\hskip 3pt}2{\hskip 3pt}9{\hskip 3pt}6{\hskip 3pt}7{\hskip 3pt}0{\hskip 3pt}0{\hskip 3pt}0\hskip 3pt\sf rs~cd:ab;\allowbreak t~a:bcs;\allowbreak u~c:abrt\allowbreak/81}\\
207)&\BL{10{\hskip 3pt}11{\hskip 3pt}8{\hskip 3pt}7{\hskip 3pt}0{\hskip 3pt}12{\hskip 3pt}4{\hskip 3pt}0{\hskip 3pt}0\hskip 3pt\sf rs~cd:ab;\allowbreak t~r:ad;\allowbreak u~r:adt\allowbreak/138}\\
208)&\BL{10{\hskip 3pt}12{\hskip 3pt}0{\hskip 3pt}9{\hskip 3pt}6{\hskip 3pt}0{\hskip 3pt}0{\hskip 3pt}0{\hskip 3pt}5\hskip 3pt\sf rs~ad:bc;\allowbreak t~c:abs;\allowbreak uv~bt:acr}\\
209)&\BL{10{\hskip 3pt}13{\hskip 3pt}0{\hskip 3pt}16{\hskip 3pt}3{\hskip 3pt}5{\hskip 3pt}5{\hskip 3pt}0{\hskip 3pt}0\hskip 3pt\sf rs~cd:ab;\allowbreak t~a:bcr;\allowbreak u~d:abst\allowbreak/63}\\
210)&\BL{10{\hskip 3pt}16{\hskip 3pt}0{\hskip 3pt}16{\hskip 3pt}3{\hskip 3pt}1{\hskip 3pt}18{\hskip 3pt}0{\hskip 3pt}0\hskip 3pt\sf rs~cd:ab;\allowbreak t~b:adr;\allowbreak u~a:bcrt\allowbreak/118}\\
211)&\BL{11{\hskip 3pt}6{\hskip 3pt}1{\hskip 3pt}12{\hskip 3pt}12{\hskip 3pt}4{\hskip 3pt}0{\hskip 3pt}0{\hskip 3pt}0\hskip 3pt\sf rs~cd:ab;\allowbreak t~a:bcs;\allowbreak uv~bt:acr}\\
212)&\BL{11{\hskip 3pt}6{\hskip 3pt}1{\hskip 3pt}14{\hskip 3pt}10{\hskip 3pt}4{\hskip 3pt}0{\hskip 3pt}0{\hskip 3pt}0\hskip 3pt\sf rs~cd:ab;\allowbreak t~a:bcs;\allowbreak uv~bt:acr}\\
213)&\BL{11{\hskip 3pt}7{\hskip 3pt}0{\hskip 3pt}12{\hskip 3pt}11{\hskip 3pt}0{\hskip 3pt}4{\hskip 3pt}0{\hskip 3pt}0\hskip 3pt\sf rs~cd:ab;\allowbreak t~b:acs;\allowbreak uv~at:bcr}\\
214)&\BL{11{\hskip 3pt}7{\hskip 3pt}0{\hskip 3pt}12{\hskip 3pt}11{\hskip 3pt}4{\hskip 3pt}0{\hskip 3pt}0{\hskip 3pt}0\hskip 3pt\sf rs~cd:ab;\allowbreak t~a:bcs;\allowbreak uv~bt:acr}\\
215)&\BL{11{\hskip 3pt}7{\hskip 3pt}0{\hskip 3pt}13{\hskip 3pt}6{\hskip 3pt}5{\hskip 3pt}7{\hskip 3pt}0{\hskip 3pt}0\hskip 3pt\sf rs~cd:ab;\allowbreak t~b:acs;\allowbreak uv~at:bcr}\\
216)&\BL{11{\hskip 3pt}7{\hskip 3pt}0{\hskip 3pt}16{\hskip 3pt}6{\hskip 3pt}4{\hskip 3pt}10{\hskip 3pt}0{\hskip 3pt}0\hskip 3pt\sf rs~cd:ab;\allowbreak t~b:acs;\allowbreak uv~at:bcr}\\
217)&\BL{11{\hskip 3pt}7{\hskip 3pt}0{\hskip 3pt}27{\hskip 3pt}5{\hskip 3pt}10{\hskip 3pt}10{\hskip 3pt}0{\hskip 3pt}0\hskip 3pt\sf rs~cd:ab;\allowbreak tu~cs:ab;\allowbreak v~a:bcrtu}\\
218)&\BL{11{\hskip 3pt}8{\hskip 3pt}0{\hskip 3pt}14{\hskip 3pt}5{\hskip 3pt}6{\hskip 3pt}7{\hskip 3pt}0{\hskip 3pt}0\hskip 3pt\sf rs~cd:ab;\allowbreak t~b:acs;\allowbreak uv~at:bcr}\\
219)&\BL{11{\hskip 3pt}9{\hskip 3pt}0{\hskip 3pt}12{\hskip 3pt}7{\hskip 3pt}0{\hskip 3pt}0{\hskip 3pt}13{\hskip 3pt}0\hskip 3pt\sf rs~bd:ac;\allowbreak t~c:abs;\allowbreak u~b:acrst;\allowbreak v~b:acrstu}\\
220)&\BL{11{\hskip 3pt}12{\hskip 3pt}0{\hskip 3pt}28{\hskip 3pt}3{\hskip 3pt}7{\hskip 3pt}5{\hskip 3pt}0{\hskip 3pt}0\hskip 3pt\sf r~a:bc;\allowbreak st~cd:abr;\allowbreak u~b:acrt\allowbreak/17}\\
221)&\BL{11{\hskip 3pt}18{\hskip 3pt}0{\hskip 3pt}19{\hskip 3pt}3{\hskip 3pt}2{\hskip 3pt}21{\hskip 3pt}0{\hskip 3pt}0\hskip 3pt\sf rs~cd:ab;\allowbreak t~b:acs;\allowbreak u~a:bcrt\allowbreak/127}\\
222)&\BL{11{\hskip 3pt}18{\hskip 3pt}4{\hskip 3pt}16{\hskip 3pt}3{\hskip 3pt}6{\hskip 3pt}0{\hskip 3pt}0{\hskip 3pt}0\hskip 3pt\sf rs~cd:ab;\allowbreak t~r:ad;\allowbreak u~s:adt\allowbreak/140}\\
223)&\BL{11{\hskip 3pt}20{\hskip 3pt}2{\hskip 3pt}18{\hskip 3pt}3{\hskip 3pt}4{\hskip 3pt}0{\hskip 3pt}0{\hskip 3pt}0\hskip 3pt\sf rs~cd:ab;\allowbreak t~c:ar;\allowbreak u~d:art\allowbreak/36}\\
224)&\BL{12{\hskip 3pt}4{\hskip 3pt}0{\hskip 3pt}25{\hskip 3pt}18{\hskip 3pt}3{\hskip 3pt}0{\hskip 3pt}0{\hskip 3pt}0\hskip 3pt\sf rs~cd:ab;\allowbreak t~(cr):ab;\allowbreak u~t:acs\allowbreak/134}\\
225)&\BL{12{\hskip 3pt}7{\hskip 3pt}0{\hskip 3pt}14{\hskip 3pt}7{\hskip 3pt}5{\hskip 3pt}11{\hskip 3pt}0{\hskip 3pt}0\hskip 3pt\sf rs~cd:ab;\allowbreak t~b:acs;\allowbreak uv~at:bcr}\\
226)&\BL{12{\hskip 3pt}8{\hskip 3pt}0{\hskip 3pt}31{\hskip 3pt}5{\hskip 3pt}13{\hskip 3pt}10{\hskip 3pt}0{\hskip 3pt}0\hskip 3pt\sf rs~cd:ab;\allowbreak tu~cs:ab;\allowbreak v~a:bcrtu}\\
227)&\BL{12{\hskip 3pt}8{\hskip 3pt}0{\hskip 3pt}31{\hskip 3pt}5{\hskip 3pt}12{\hskip 3pt}12{\hskip 3pt}0{\hskip 3pt}0\hskip 3pt\sf rs~cd:ab;\allowbreak tu~cs:ab;\allowbreak v~a:bcrtu}\\
228)&\BL{12{\hskip 3pt}9{\hskip 3pt}0{\hskip 3pt}29{\hskip 3pt}4{\hskip 3pt}18{\hskip 3pt}16{\hskip 3pt}0{\hskip 3pt}0\hskip 3pt\sf rs~cd:ab;\allowbreak tu~cs:ab;\allowbreak v~a:bcrtu}\\
229)&\BL{12{\hskip 3pt}9{\hskip 3pt}0{\hskip 3pt}29{\hskip 3pt}4{\hskip 3pt}14{\hskip 3pt}23{\hskip 3pt}0{\hskip 3pt}0\hskip 3pt\sf rs~cd:ab;\allowbreak tu~cs:ab;\allowbreak v~a:bcrtu}\\
230)&\BL{12{\hskip 3pt}9{\hskip 3pt}0{\hskip 3pt}29{\hskip 3pt}4{\hskip 3pt}17{\hskip 3pt}17{\hskip 3pt}0{\hskip 3pt}0\hskip 3pt\sf rs~cd:ab;\allowbreak tu~cs:ab;\allowbreak v~a:bcrtu}\\
231)&\BL{12{\hskip 3pt}9{\hskip 3pt}0{\hskip 3pt}34{\hskip 3pt}4{\hskip 3pt}18{\hskip 3pt}11{\hskip 3pt}0{\hskip 3pt}0\hskip 3pt\sf rs~cd:ab;\allowbreak tu~cs:ab;\allowbreak v~a:bcrtu}\\
232)&\BL{12{\hskip 3pt}9{\hskip 3pt}0{\hskip 3pt}34{\hskip 3pt}4{\hskip 3pt}17{\hskip 3pt}12{\hskip 3pt}0{\hskip 3pt}0\hskip 3pt\sf rs~cd:ab;\allowbreak tu~cs:ab;\allowbreak v~a:bcrtu}\\
233)&\BL{12{\hskip 3pt}9{\hskip 3pt}0{\hskip 3pt}34{\hskip 3pt}4{\hskip 3pt}14{\hskip 3pt}18{\hskip 3pt}0{\hskip 3pt}0\hskip 3pt\sf rs~cd:ab;\allowbreak tu~cs:ab;\allowbreak v~a:bcrtu}\\
234)&\BL{12{\hskip 3pt}9{\hskip 3pt}0{\hskip 3pt}36{\hskip 3pt}4{\hskip 3pt}18{\hskip 3pt}10{\hskip 3pt}0{\hskip 3pt}0\hskip 3pt\sf rs~cd:ab;\allowbreak tu~cs:ab;\allowbreak v~a:bcrtu}\\
235)&\BL{12{\hskip 3pt}9{\hskip 3pt}0{\hskip 3pt}36{\hskip 3pt}4{\hskip 3pt}14{\hskip 3pt}17{\hskip 3pt}0{\hskip 3pt}0\hskip 3pt\sf rs~cd:ab;\allowbreak tu~cs:ab;\allowbreak v~a:bcrtu}\\
236)&\BL{12{\hskip 3pt}10{\hskip 3pt}5{\hskip 3pt}8{\hskip 3pt}0{\hskip 3pt}21{\hskip 3pt}9{\hskip 3pt}0{\hskip 3pt}0\hskip 3pt\sf rs~cd:ab;\allowbreak t~(cr):ab;\allowbreak uv~rt:ad}\\
237)&\BL{12{\hskip 3pt}11{\hskip 3pt}10{\hskip 3pt}9{\hskip 3pt}0{\hskip 3pt}14{\hskip 3pt}5{\hskip 3pt}0{\hskip 3pt}0\hskip 3pt\sf rs~cd:ab;\allowbreak t~r:ad;\allowbreak u~a:bdrt\allowbreak/76}\\
238)&\BL{12{\hskip 3pt}12{\hskip 3pt}9{\hskip 3pt}12{\hskip 3pt}0{\hskip 3pt}15{\hskip 3pt}5{\hskip 3pt}0{\hskip 3pt}0\hskip 3pt\sf rs~cd:ab;\allowbreak t~(cr):ab;\allowbreak uv~rt:ad}\\
239)&\BL{12{\hskip 3pt}13{\hskip 3pt}0{\hskip 3pt}9{\hskip 3pt}9{\hskip 3pt}0{\hskip 3pt}0{\hskip 3pt}17{\hskip 3pt}0\hskip 3pt\sf rs~bd:ac;\allowbreak t~c:abs;\allowbreak u~c:abrt;\allowbreak v~c:abrtu}\\
240)&\BL{13{\hskip 3pt}6{\hskip 3pt}4{\hskip 3pt}50{\hskip 3pt}11{\hskip 3pt}9{\hskip 3pt}0{\hskip 3pt}0{\hskip 3pt}0\hskip 3pt\sf rs~cd:ab;\allowbreak tu~cr:ab;\allowbreak v~a:bcstu}\\
241)&\BL{13{\hskip 3pt}6{\hskip 3pt}4{\hskip 3pt}53{\hskip 3pt}11{\hskip 3pt}6{\hskip 3pt}0{\hskip 3pt}0{\hskip 3pt}0\hskip 3pt\sf rs~cd:ab;\allowbreak tu~cr:ab;\allowbreak v~a:bcstu}\\
242)&\BL{13{\hskip 3pt}8{\hskip 3pt}0{\hskip 3pt}15{\hskip 3pt}7{\hskip 3pt}7{\hskip 3pt}9{\hskip 3pt}0{\hskip 3pt}0\hskip 3pt\sf rs~cd:ab;\allowbreak t~b:acs;\allowbreak uv~at:bcr}\\
243)&\BL{13{\hskip 3pt}8{\hskip 3pt}0{\hskip 3pt}16{\hskip 3pt}7{\hskip 3pt}7{\hskip 3pt}8{\hskip 3pt}0{\hskip 3pt}0\hskip 3pt\sf rs~cd:ab;\allowbreak t~b:acs;\allowbreak uv~at:bcr}\\
244)&\BL{13{\hskip 3pt}8{\hskip 3pt}0{\hskip 3pt}18{\hskip 3pt}7{\hskip 3pt}6{\hskip 3pt}10{\hskip 3pt}0{\hskip 3pt}0\hskip 3pt\sf rs~cd:ab;\allowbreak t~b:acs;\allowbreak uv~at:bcr}\\
245)&\BL{13{\hskip 3pt}8{\hskip 3pt}0{\hskip 3pt}32{\hskip 3pt}6{\hskip 3pt}12{\hskip 3pt}14{\hskip 3pt}0{\hskip 3pt}0\hskip 3pt\sf rs~cd:ab;\allowbreak tu~cs:ab;\allowbreak v~a:bcrtu}\\
246)&\BL{13{\hskip 3pt}9{\hskip 3pt}0{\hskip 3pt}37{\hskip 3pt}5{\hskip 3pt}16{\hskip 3pt}10{\hskip 3pt}0{\hskip 3pt}0\hskip 3pt\sf rs~cd:ab;\allowbreak tu~cs:ab;\allowbreak v~a:bcrtu}\\
247)&\BL{13{\hskip 3pt}9{\hskip 3pt}0{\hskip 3pt}37{\hskip 3pt}5{\hskip 3pt}14{\hskip 3pt}14{\hskip 3pt}0{\hskip 3pt}0\hskip 3pt\sf rs~cd:ab;\allowbreak tu~cs:ab;\allowbreak v~a:bcrtu}\\
248)&\BL{13{\hskip 3pt}13{\hskip 3pt}11{\hskip 3pt}10{\hskip 3pt}0{\hskip 3pt}16{\hskip 3pt}5{\hskip 3pt}0{\hskip 3pt}0\hskip 3pt\sf rs~cd:ab;\allowbreak t~c:as;\allowbreak u~s:act\allowbreak/73}\\
249)&\BL{13{\hskip 3pt}24{\hskip 3pt}4{\hskip 3pt}22{\hskip 3pt}3{\hskip 3pt}6{\hskip 3pt}0{\hskip 3pt}0{\hskip 3pt}0\hskip 3pt\sf rs~cd:ab;\allowbreak t~c:ar;\allowbreak u~d:art\allowbreak/124}\\
250)&\BL{14{\hskip 3pt}8{\hskip 3pt}0{\hskip 3pt}16{\hskip 3pt}8{\hskip 3pt}7{\hskip 3pt}11{\hskip 3pt}0{\hskip 3pt}0\hskip 3pt\sf rs~cd:ab;\allowbreak t~b:acs;\allowbreak uv~at:bcr}\\
251)&\BL{14{\hskip 3pt}8{\hskip 3pt}1{\hskip 3pt}15{\hskip 3pt}15{\hskip 3pt}5{\hskip 3pt}0{\hskip 3pt}0{\hskip 3pt}0\hskip 3pt\sf rs~cd:ab;\allowbreak t~a:bcs;\allowbreak uv~bt:acr}\\
252)&\BL{14{\hskip 3pt}8{\hskip 3pt}1{\hskip 3pt}17{\hskip 3pt}13{\hskip 3pt}5{\hskip 3pt}0{\hskip 3pt}0{\hskip 3pt}0\hskip 3pt\sf rs~cd:ab;\allowbreak t~a:bcs;\allowbreak uv~bt:acr}\\
}\Q\tg{%
253)&\BL{14{\hskip 3pt}8{\hskip 3pt}2{\hskip 3pt}10{\hskip 3pt}5{\hskip 3pt}20{\hskip 3pt}7{\hskip 3pt}0{\hskip 3pt}0\hskip 3pt\sf rs~cd:ab;\allowbreak t~r:ad;\allowbreak u~b:adrt\allowbreak/205}\\
254)&\BL{14{\hskip 3pt}9{\hskip 3pt}0{\hskip 3pt}17{\hskip 3pt}8{\hskip 3pt}5{\hskip 3pt}8{\hskip 3pt}0{\hskip 3pt}1\hskip 3pt\sf rs~cd:ab;\allowbreak t~b:acs;\allowbreak uv~at:bcr}\\
255)&\BL{14{\hskip 3pt}9{\hskip 3pt}0{\hskip 3pt}34{\hskip 3pt}6{\hskip 3pt}14{\hskip 3pt}16{\hskip 3pt}0{\hskip 3pt}0\hskip 3pt\sf rs~cd:ab;\allowbreak tu~cs:ab;\allowbreak v~a:bcrtu}\\
256)&\BL{14{\hskip 3pt}10{\hskip 3pt}0{\hskip 3pt}17{\hskip 3pt}7{\hskip 3pt}6{\hskip 3pt}9{\hskip 3pt}0{\hskip 3pt}0\hskip 3pt\sf rs~cd:ab;\allowbreak t~b:acs;\allowbreak uv~at:bcr}\\
257)&\BL{14{\hskip 3pt}10{\hskip 3pt}0{\hskip 3pt}18{\hskip 3pt}7{\hskip 3pt}6{\hskip 3pt}7{\hskip 3pt}0{\hskip 3pt}0\hskip 3pt\sf rs~cd:ab;\allowbreak t~b:acs;\allowbreak uv~at:bcr}\\
258)&\BL{14{\hskip 3pt}10{\hskip 3pt}0{\hskip 3pt}43{\hskip 3pt}5{\hskip 3pt}19{\hskip 3pt}11{\hskip 3pt}0{\hskip 3pt}0\hskip 3pt\sf rs~cd:ab;\allowbreak tu~cs:ab;\allowbreak v~a:bcrtu}\\
259)&\BL{14{\hskip 3pt}10{\hskip 3pt}0{\hskip 3pt}43{\hskip 3pt}5{\hskip 3pt}16{\hskip 3pt}17{\hskip 3pt}0{\hskip 3pt}0\hskip 3pt\sf rs~cd:ab;\allowbreak tu~cs:ab;\allowbreak v~a:bcrtu}\\
260)&\BL{14{\hskip 3pt}16{\hskip 3pt}7{\hskip 3pt}14{\hskip 3pt}0{\hskip 3pt}22{\hskip 3pt}8{\hskip 3pt}0{\hskip 3pt}0\hskip 3pt\sf r~c:ab;\allowbreak st~cd:abr;\allowbreak uv~cr:at}\\
261)&\BL{14{\hskip 3pt}22{\hskip 3pt}0{\hskip 3pt}45{\hskip 3pt}4{\hskip 3pt}17{\hskip 3pt}2{\hskip 3pt}0{\hskip 3pt}0\hskip 3pt\sf rs~ac:bd;\allowbreak t~r:abs;\allowbreak u~(ar):bdt\allowbreak/204}\\
262)&\BL{15{\hskip 3pt}9{\hskip 3pt}0{\hskip 3pt}18{\hskip 3pt}14{\hskip 3pt}0{\hskip 3pt}6{\hskip 3pt}0{\hskip 3pt}0\hskip 3pt\sf rs~cd:ab;\allowbreak t~b:acs;\allowbreak uv~at:bcr}\\
263)&\BL{15{\hskip 3pt}10{\hskip 3pt}0{\hskip 3pt}15{\hskip 3pt}12{\hskip 3pt}3{\hskip 3pt}4{\hskip 3pt}0{\hskip 3pt}0\hskip 3pt\sf rs~cd:ab;\allowbreak t~b:acs;\allowbreak u~b:acst\allowbreak/210}\\
264)&\BL{15{\hskip 3pt}10{\hskip 3pt}0{\hskip 3pt}41{\hskip 3pt}6{\hskip 3pt}21{\hskip 3pt}11{\hskip 3pt}0{\hskip 3pt}0\hskip 3pt\sf rs~cd:ab;\allowbreak tu~cs:ab;\allowbreak v~a:bcrtu}\\
265)&\BL{15{\hskip 3pt}10{\hskip 3pt}0{\hskip 3pt}42{\hskip 3pt}6{\hskip 3pt}24{\hskip 3pt}10{\hskip 3pt}0{\hskip 3pt}0\hskip 3pt\sf rs~cd:ab;\allowbreak tu~cs:ab;\allowbreak v~a:bcrtu}\\
266)&\BL{15{\hskip 3pt}11{\hskip 3pt}0{\hskip 3pt}17{\hskip 3pt}9{\hskip 3pt}3{\hskip 3pt}10{\hskip 3pt}0{\hskip 3pt}0\hskip 3pt\sf rs~cd:ab;\allowbreak t~b:acs;\allowbreak uv~at:bcr}\\
267)&\BL{15{\hskip 3pt}11{\hskip 3pt}0{\hskip 3pt}20{\hskip 3pt}7{\hskip 3pt}7{\hskip 3pt}10{\hskip 3pt}0{\hskip 3pt}0\hskip 3pt\sf rs~cd:ab;\allowbreak t~b:acs;\allowbreak uv~at:bcr}\\
268)&\BL{16{\hskip 3pt}10{\hskip 3pt}0{\hskip 3pt}18{\hskip 3pt}9{\hskip 3pt}7{\hskip 3pt}11{\hskip 3pt}0{\hskip 3pt}0\hskip 3pt\sf rs~cd:ab;\allowbreak t~b:acs;\allowbreak uv~at:bcr}\\
269)&\BL{16{\hskip 3pt}10{\hskip 3pt}0{\hskip 3pt}20{\hskip 3pt}9{\hskip 3pt}6{\hskip 3pt}14{\hskip 3pt}0{\hskip 3pt}0\hskip 3pt\sf rs~cd:ab;\allowbreak t~b:acs;\allowbreak uv~at:bcr}\\
270)&\BL{16{\hskip 3pt}11{\hskip 3pt}0{\hskip 3pt}18{\hskip 3pt}10{\hskip 3pt}4{\hskip 3pt}9{\hskip 3pt}0{\hskip 3pt}0\hskip 3pt\sf rs~cd:ab;\allowbreak t~b:acs;\allowbreak uv~at:bcr}\\
271)&\BL{16{\hskip 3pt}11{\hskip 3pt}0{\hskip 3pt}38{\hskip 3pt}6{\hskip 3pt}18{\hskip 3pt}27{\hskip 3pt}0{\hskip 3pt}0\hskip 3pt\sf rs~cd:ab;\allowbreak tu~cs:ab;\allowbreak v~a:bcrtu}\\
272)&\BL{16{\hskip 3pt}11{\hskip 3pt}0{\hskip 3pt}44{\hskip 3pt}6{\hskip 3pt}18{\hskip 3pt}21{\hskip 3pt}0{\hskip 3pt}0\hskip 3pt\sf rs~cd:ab;\allowbreak tu~cs:ab;\allowbreak v~a:bcrtu}\\
273)&\BL{16{\hskip 3pt}11{\hskip 3pt}0{\hskip 3pt}44{\hskip 3pt}6{\hskip 3pt}21{\hskip 3pt}15{\hskip 3pt}0{\hskip 3pt}0\hskip 3pt\sf rs~cd:ab;\allowbreak tu~cs:ab;\allowbreak v~a:bcrtu}\\
274)&\BL{16{\hskip 3pt}11{\hskip 3pt}0{\hskip 3pt}46{\hskip 3pt}6{\hskip 3pt}18{\hskip 3pt}20{\hskip 3pt}0{\hskip 3pt}0\hskip 3pt\sf rs~cd:ab;\allowbreak tu~cs:ab;\allowbreak v~a:bcrtu}\\
275)&\BL{16{\hskip 3pt}11{\hskip 3pt}0{\hskip 3pt}46{\hskip 3pt}6{\hskip 3pt}21{\hskip 3pt}14{\hskip 3pt}0{\hskip 3pt}0\hskip 3pt\sf rs~cd:ab;\allowbreak tu~cs:ab;\allowbreak v~a:bcrtu}\\
276)&\BL{16{\hskip 3pt}11{\hskip 3pt}0{\hskip 3pt}47{\hskip 3pt}6{\hskip 3pt}24{\hskip 3pt}12{\hskip 3pt}0{\hskip 3pt}0\hskip 3pt\sf rs~cd:ab;\allowbreak tu~cs:ab;\allowbreak v~a:bcrtu}\\
277)&\BL{16{\hskip 3pt}12{\hskip 3pt}0{\hskip 3pt}24{\hskip 3pt}7{\hskip 3pt}8{\hskip 3pt}11{\hskip 3pt}0{\hskip 3pt}0\hskip 3pt\sf rs~cd:ab;\allowbreak t~b:acs;\allowbreak uv~at:bcr}\\
278)&\BL{16{\hskip 3pt}19{\hskip 3pt}9{\hskip 3pt}16{\hskip 3pt}0{\hskip 3pt}24{\hskip 3pt}8{\hskip 3pt}0{\hskip 3pt}0\hskip 3pt\sf r~c:ab;\allowbreak st~cd:abr;\allowbreak uv~cr:at}\\
279)&\BL{16{\hskip 3pt}26{\hskip 3pt}0{\hskip 3pt}51{\hskip 3pt}4{\hskip 3pt}19{\hskip 3pt}4{\hskip 3pt}0{\hskip 3pt}0\hskip 3pt\sf rs~ac:bd;\allowbreak t~a:bcr;\allowbreak u~(ar):bdt\allowbreak/209}\\
280)&\BL{17{\hskip 3pt}10{\hskip 3pt}0{\hskip 3pt}23{\hskip 3pt}10{\hskip 3pt}6{\hskip 3pt}14{\hskip 3pt}0{\hskip 3pt}1\hskip 3pt\sf rs~cd:ab;\allowbreak t~b:acs;\allowbreak uv~at:bcr}\\
281)&\BL{17{\hskip 3pt}10{\hskip 3pt}0{\hskip 3pt}23{\hskip 3pt}10{\hskip 3pt}6{\hskip 3pt}15{\hskip 3pt}0{\hskip 3pt}0\hskip 3pt\sf rs~cd:ab;\allowbreak t~b:acs;\allowbreak uv~at:bcr}\\
282)&\BL{17{\hskip 3pt}11{\hskip 3pt}0{\hskip 3pt}20{\hskip 3pt}9{\hskip 3pt}8{\hskip 3pt}12{\hskip 3pt}0{\hskip 3pt}0\hskip 3pt\sf rs~cd:ab;\allowbreak t~b:acs;\allowbreak uv~at:bcr}\\
283)&\BL{17{\hskip 3pt}12{\hskip 3pt}0{\hskip 3pt}26{\hskip 3pt}8{\hskip 3pt}8{\hskip 3pt}13{\hskip 3pt}0{\hskip 3pt}0\hskip 3pt\sf rs~cd:ab;\allowbreak t~b:acs;\allowbreak uv~at:bcr}\\
284)&\BL{17{\hskip 3pt}12{\hskip 3pt}0{\hskip 3pt}42{\hskip 3pt}6{\hskip 3pt}20{\hskip 3pt}31{\hskip 3pt}0{\hskip 3pt}0\hskip 3pt\sf rs~cd:ab;\allowbreak tu~cs:ab;\allowbreak v~a:bcrtu}\\
285)&\BL{17{\hskip 3pt}12{\hskip 3pt}0{\hskip 3pt}42{\hskip 3pt}6{\hskip 3pt}24{\hskip 3pt}23{\hskip 3pt}0{\hskip 3pt}0\hskip 3pt\sf rs~cd:ab;\allowbreak tu~cs:ab;\allowbreak v~a:bcrtu}\\
286)&\BL{17{\hskip 3pt}12{\hskip 3pt}0{\hskip 3pt}42{\hskip 3pt}6{\hskip 3pt}25{\hskip 3pt}22{\hskip 3pt}0{\hskip 3pt}0\hskip 3pt\sf rs~cd:ab;\allowbreak tu~cs:ab;\allowbreak v~a:bcrtu}\\
287)&\BL{17{\hskip 3pt}12{\hskip 3pt}0{\hskip 3pt}48{\hskip 3pt}6{\hskip 3pt}20{\hskip 3pt}25{\hskip 3pt}0{\hskip 3pt}0\hskip 3pt\sf rs~cd:ab;\allowbreak tu~cs:ab;\allowbreak v~a:bcrtu}\\
288)&\BL{17{\hskip 3pt}12{\hskip 3pt}0{\hskip 3pt}48{\hskip 3pt}6{\hskip 3pt}24{\hskip 3pt}17{\hskip 3pt}0{\hskip 3pt}0\hskip 3pt\sf rs~cd:ab;\allowbreak tu~cs:ab;\allowbreak v~a:bcrtu}\\
289)&\BL{17{\hskip 3pt}12{\hskip 3pt}0{\hskip 3pt}48{\hskip 3pt}6{\hskip 3pt}25{\hskip 3pt}16{\hskip 3pt}0{\hskip 3pt}0\hskip 3pt\sf rs~cd:ab;\allowbreak tu~cs:ab;\allowbreak v~a:bcrtu}\\
290)&\BL{17{\hskip 3pt}12{\hskip 3pt}0{\hskip 3pt}52{\hskip 3pt}6{\hskip 3pt}25{\hskip 3pt}14{\hskip 3pt}0{\hskip 3pt}0\hskip 3pt\sf rs~cd:ab;\allowbreak tu~cs:ab;\allowbreak v~a:bcrtu}\\
291)&\BL{17{\hskip 3pt}12{\hskip 3pt}0{\hskip 3pt}52{\hskip 3pt}6{\hskip 3pt}24{\hskip 3pt}15{\hskip 3pt}0{\hskip 3pt}0\hskip 3pt\sf rs~cd:ab;\allowbreak tu~cs:ab;\allowbreak v~a:bcrtu}\\
292)&\BL{17{\hskip 3pt}12{\hskip 3pt}0{\hskip 3pt}52{\hskip 3pt}6{\hskip 3pt}20{\hskip 3pt}23{\hskip 3pt}0{\hskip 3pt}0\hskip 3pt\sf rs~cd:ab;\allowbreak tu~cs:ab;\allowbreak v~a:bcrtu}\\
293)&\BL{18{\hskip 3pt}4{\hskip 3pt}0{\hskip 3pt}51{\hskip 3pt}30{\hskip 3pt}3{\hskip 3pt}6{\hskip 3pt}0{\hskip 3pt}0\hskip 3pt\sf rs~cd:ab;\allowbreak t~(cr):ab;\allowbreak u~a:bcst\allowbreak/141}\\
294)&\BL{18{\hskip 3pt}6{\hskip 3pt}1{\hskip 3pt}47{\hskip 3pt}27{\hskip 3pt}3{\hskip 3pt}0{\hskip 3pt}0{\hskip 3pt}0\hskip 3pt\sf rs~cd:ab;\allowbreak tu~cr:ab;\allowbreak v~a:bcstu}\\
295)&\BL{18{\hskip 3pt}8{\hskip 3pt}5{\hskip 3pt}66{\hskip 3pt}16{\hskip 3pt}12{\hskip 3pt}0{\hskip 3pt}0{\hskip 3pt}0\hskip 3pt\sf rs~cd:ab;\allowbreak tu~cr:ab;\allowbreak v~a:bcstu}\\
296)&\BL{18{\hskip 3pt}8{\hskip 3pt}5{\hskip 3pt}70{\hskip 3pt}16{\hskip 3pt}8{\hskip 3pt}0{\hskip 3pt}0{\hskip 3pt}0\hskip 3pt\sf rs~cd:ab;\allowbreak tu~cr:ab;\allowbreak v~a:bcstu}\\
297)&\BL{18{\hskip 3pt}9{\hskip 3pt}3{\hskip 3pt}22{\hskip 3pt}19{\hskip 3pt}5{\hskip 3pt}0{\hskip 3pt}0{\hskip 3pt}0\hskip 3pt\sf rs~cd:ab;\allowbreak t~a:bcs;\allowbreak uv~bt:acr}\\
298)&\BL{18{\hskip 3pt}9{\hskip 3pt}3{\hskip 3pt}23{\hskip 3pt}18{\hskip 3pt}5{\hskip 3pt}0{\hskip 3pt}0{\hskip 3pt}0\hskip 3pt\sf rs~cd:ab;\allowbreak t~a:bcs;\allowbreak uv~bt:acr}\\
299)&\BL{18{\hskip 3pt}9{\hskip 3pt}4{\hskip 3pt}24{\hskip 3pt}17{\hskip 3pt}5{\hskip 3pt}0{\hskip 3pt}0{\hskip 3pt}0\hskip 3pt\sf rs~cd:ab;\allowbreak t~a:bcs;\allowbreak uv~bt:acr}\\
300)&\BL{18{\hskip 3pt}10{\hskip 3pt}0{\hskip 3pt}21{\hskip 3pt}18{\hskip 3pt}0{\hskip 3pt}8{\hskip 3pt}0{\hskip 3pt}0\hskip 3pt\sf rs~cd:ab;\allowbreak t~b:acs;\allowbreak uv~at:bcr}\\
301)&\BL{18{\hskip 3pt}10{\hskip 3pt}0{\hskip 3pt}21{\hskip 3pt}18{\hskip 3pt}8{\hskip 3pt}0{\hskip 3pt}0{\hskip 3pt}0\hskip 3pt\sf rs~cd:ab;\allowbreak t~a:bcs;\allowbreak uv~bt:acr}\\
302)&\BL{18{\hskip 3pt}11{\hskip 3pt}0{\hskip 3pt}22{\hskip 3pt}10{\hskip 3pt}8{\hskip 3pt}14{\hskip 3pt}0{\hskip 3pt}0\hskip 3pt\sf rs~cd:ab;\allowbreak t~b:acs;\allowbreak uv~at:bcr}\\
303)&\BL{18{\hskip 3pt}12{\hskip 3pt}0{\hskip 3pt}26{\hskip 3pt}9{\hskip 3pt}8{\hskip 3pt}17{\hskip 3pt}0{\hskip 3pt}0\hskip 3pt\sf rs~cd:ab;\allowbreak t~b:acs;\allowbreak uv~at:bcr}\\
304)&\BL{18{\hskip 3pt}12{\hskip 3pt}0{\hskip 3pt}30{\hskip 3pt}9{\hskip 3pt}8{\hskip 3pt}15{\hskip 3pt}0{\hskip 3pt}0\hskip 3pt\sf rs~cd:ab;\allowbreak t~b:acs;\allowbreak uv~at:bcr}\\
305)&\BL{18{\hskip 3pt}14{\hskip 3pt}0{\hskip 3pt}16{\hskip 3pt}13{\hskip 3pt}2{\hskip 3pt}13{\hskip 3pt}0{\hskip 3pt}0\hskip 3pt\sf rs~cd:ab;\allowbreak t~b:acs;\allowbreak u~(at):bcs\allowbreak/143}\\
306)&\BL{18{\hskip 3pt}18{\hskip 3pt}0{\hskip 3pt}21{\hskip 3pt}10{\hskip 3pt}0{\hskip 3pt}0{\hskip 3pt}0{\hskip 3pt}8\hskip 3pt\sf rs~ad:bc;\allowbreak t~c:abs;\allowbreak uv~bt:acr}\\
307)&\BL{18{\hskip 3pt}19{\hskip 3pt}12{\hskip 3pt}15{\hskip 3pt}0{\hskip 3pt}20{\hskip 3pt}12{\hskip 3pt}0{\hskip 3pt}0\hskip 3pt\sf rs~cd:ab;\allowbreak t~c:as;\allowbreak u~b:acrt\allowbreak/149}\\
308)&\BL{18{\hskip 3pt}20{\hskip 3pt}0{\hskip 3pt}31{\hskip 3pt}6{\hskip 3pt}6{\hskip 3pt}6{\hskip 3pt}0{\hskip 3pt}0\hskip 3pt\sf rs~cd:ab;\allowbreak t~a:bcr;\allowbreak u~b:acrt\allowbreak/37}\\
309)&\BL{18{\hskip 3pt}21{\hskip 3pt}0{\hskip 3pt}18{\hskip 3pt}10{\hskip 3pt}0{\hskip 3pt}0{\hskip 3pt}0{\hskip 3pt}8\hskip 3pt\sf rs~ad:bc;\allowbreak t~c:abs;\allowbreak uv~bt:acr}\\
}\Q\th{%
310)&\BL{19{\hskip 3pt}11{\hskip 3pt}0{\hskip 3pt}21{\hskip 3pt}20{\hskip 3pt}0{\hskip 3pt}7{\hskip 3pt}0{\hskip 3pt}0\hskip 3pt\sf rs~cd:ab;\allowbreak t~b:acs;\allowbreak uv~at:bcr}\\
311)&\BL{19{\hskip 3pt}11{\hskip 3pt}0{\hskip 3pt}22{\hskip 3pt}19{\hskip 3pt}7{\hskip 3pt}0{\hskip 3pt}0{\hskip 3pt}0\hskip 3pt\sf rs~cd:ab;\allowbreak t~a:bcs;\allowbreak uv~bt:acr}\\
312)&\BL{19{\hskip 3pt}11{\hskip 3pt}0{\hskip 3pt}23{\hskip 3pt}11{\hskip 3pt}9{\hskip 3pt}12{\hskip 3pt}0{\hskip 3pt}1\hskip 3pt\sf rs~cd:ab;\allowbreak t~b:acs;\allowbreak uv~at:bcr}\\
313)&\BL{19{\hskip 3pt}12{\hskip 3pt}0{\hskip 3pt}22{\hskip 3pt}10{\hskip 3pt}10{\hskip 3pt}14{\hskip 3pt}0{\hskip 3pt}0\hskip 3pt\sf rs~cd:ab;\allowbreak t~b:acs;\allowbreak uv~at:bcr}\\
314)&\BL{19{\hskip 3pt}12{\hskip 3pt}0{\hskip 3pt}25{\hskip 3pt}10{\hskip 3pt}9{\hskip 3pt}15{\hskip 3pt}0{\hskip 3pt}0\hskip 3pt\sf rs~cd:ab;\allowbreak t~b:acs;\allowbreak uv~at:bcr}\\
315)&\BL{19{\hskip 3pt}13{\hskip 3pt}0{\hskip 3pt}45{\hskip 3pt}7{\hskip 3pt}29{\hskip 3pt}23{\hskip 3pt}0{\hskip 3pt}0\hskip 3pt\sf rs~cd:ab;\allowbreak tu~cs:ab;\allowbreak v~a:bcrtu}\\
316)&\BL{19{\hskip 3pt}13{\hskip 3pt}0{\hskip 3pt}49{\hskip 3pt}7{\hskip 3pt}29{\hskip 3pt}19{\hskip 3pt}0{\hskip 3pt}0\hskip 3pt\sf rs~cd:ab;\allowbreak tu~cs:ab;\allowbreak v~a:bcrtu}\\
317)&\BL{19{\hskip 3pt}13{\hskip 3pt}0{\hskip 3pt}52{\hskip 3pt}7{\hskip 3pt}30{\hskip 3pt}17{\hskip 3pt}0{\hskip 3pt}0\hskip 3pt\sf rs~cd:ab;\allowbreak tu~cs:ab;\allowbreak v~a:bcrtu}\\
318)&\BL{19{\hskip 3pt}13{\hskip 3pt}0{\hskip 3pt}56{\hskip 3pt}7{\hskip 3pt}30{\hskip 3pt}15{\hskip 3pt}0{\hskip 3pt}0\hskip 3pt\sf rs~cd:ab;\allowbreak tu~cs:ab;\allowbreak v~a:bcrtu}\\
319)&\BL{19{\hskip 3pt}21{\hskip 3pt}13{\hskip 3pt}17{\hskip 3pt}0{\hskip 3pt}21{\hskip 3pt}12{\hskip 3pt}0{\hskip 3pt}0\hskip 3pt\sf rs~cd:ab;\allowbreak t~c:as;\allowbreak u~b:acrt\allowbreak/164}\\
320)&\BL{20{\hskip 3pt}4{\hskip 3pt}0{\hskip 3pt}51{\hskip 3pt}34{\hskip 3pt}5{\hskip 3pt}8{\hskip 3pt}0{\hskip 3pt}0\hskip 3pt\sf rs~cd:ab;\allowbreak tu~cr:ab;\allowbreak v~a:bcstu}\\
321)&\BL{20{\hskip 3pt}10{\hskip 3pt}4{\hskip 3pt}15{\hskip 3pt}14{\hskip 3pt}7{\hskip 3pt}8{\hskip 3pt}0{\hskip 3pt}0\hskip 3pt\sf rs~cd:ab;\allowbreak t~a:bcs;\allowbreak u~a:bcst\allowbreak/214}\\
322)&\BL{20{\hskip 3pt}12{\hskip 3pt}1{\hskip 3pt}22{\hskip 3pt}20{\hskip 3pt}0{\hskip 3pt}8{\hskip 3pt}0{\hskip 3pt}0\hskip 3pt\sf rs~cd:ab;\allowbreak t~b:acs;\allowbreak uv~at:bcr}\\
323)&\BL{20{\hskip 3pt}12{\hskip 3pt}1{\hskip 3pt}23{\hskip 3pt}19{\hskip 3pt}8{\hskip 3pt}0{\hskip 3pt}0{\hskip 3pt}0\hskip 3pt\sf rs~cd:ab;\allowbreak t~a:bcs;\allowbreak uv~bt:acr}\\
324)&\BL{20{\hskip 3pt}13{\hskip 3pt}0{\hskip 3pt}24{\hskip 3pt}11{\hskip 3pt}8{\hskip 3pt}13{\hskip 3pt}0{\hskip 3pt}1\hskip 3pt\sf rs~cd:ab;\allowbreak t~b:acs;\allowbreak uv~at:bcr}\\
325)&\BL{20{\hskip 3pt}13{\hskip 3pt}0{\hskip 3pt}25{\hskip 3pt}11{\hskip 3pt}8{\hskip 3pt}11{\hskip 3pt}0{\hskip 3pt}1\hskip 3pt\sf rs~cd:ab;\allowbreak t~b:acs;\allowbreak uv~at:bcr}\\
326)&\BL{20{\hskip 3pt}14{\hskip 3pt}0{\hskip 3pt}18{\hskip 3pt}17{\hskip 3pt}4{\hskip 3pt}5{\hskip 3pt}0{\hskip 3pt}0\hskip 3pt\sf rs~cd:ab;\allowbreak t~b:acs;\allowbreak u~b:acst\allowbreak/189}\\
327)&\BL{20{\hskip 3pt}14{\hskip 3pt}0{\hskip 3pt}20{\hskip 3pt}16{\hskip 3pt}3{\hskip 3pt}6{\hskip 3pt}0{\hskip 3pt}0\hskip 3pt\sf rs~cd:ab;\allowbreak t~b:acs;\allowbreak u~b:acst\allowbreak/208}\\
328)&\BL{20{\hskip 3pt}16{\hskip 3pt}0{\hskip 3pt}18{\hskip 3pt}17{\hskip 3pt}2{\hskip 3pt}7{\hskip 3pt}0{\hskip 3pt}0\hskip 3pt\sf rs~cd:ab;\allowbreak t~b:acs;\allowbreak u~b:acst;\allowbreak v~b:acstu}\\
329)&\BL{21{\hskip 3pt}11{\hskip 3pt}2{\hskip 3pt}24{\hskip 3pt}23{\hskip 3pt}0{\hskip 3pt}7{\hskip 3pt}0{\hskip 3pt}0\hskip 3pt\sf rs~cd:ab;\allowbreak t~b:acs;\allowbreak uv~at:bcr}\\
330)&\BL{21{\hskip 3pt}12{\hskip 3pt}0{\hskip 3pt}24{\hskip 3pt}12{\hskip 3pt}11{\hskip 3pt}15{\hskip 3pt}0{\hskip 3pt}0\hskip 3pt\sf rs~cd:ab;\allowbreak t~b:acs;\allowbreak uv~at:bcr}\\
331)&\BL{21{\hskip 3pt}12{\hskip 3pt}0{\hskip 3pt}25{\hskip 3pt}12{\hskip 3pt}10{\hskip 3pt}19{\hskip 3pt}0{\hskip 3pt}0\hskip 3pt\sf rs~cd:ab;\allowbreak t~b:acs;\allowbreak uv~at:bcr}\\
332)&\BL{21{\hskip 3pt}12{\hskip 3pt}0{\hskip 3pt}26{\hskip 3pt}12{\hskip 3pt}10{\hskip 3pt}16{\hskip 3pt}0{\hskip 3pt}1\hskip 3pt\sf rs~cd:ab;\allowbreak t~b:acs;\allowbreak uv~at:bcr}\\
333)&\BL{21{\hskip 3pt}12{\hskip 3pt}0{\hskip 3pt}26{\hskip 3pt}12{\hskip 3pt}10{\hskip 3pt}17{\hskip 3pt}0{\hskip 3pt}0\hskip 3pt\sf rs~cd:ab;\allowbreak t~b:acs;\allowbreak uv~at:bcr}\\
334)&\BL{21{\hskip 3pt}13{\hskip 3pt}0{\hskip 3pt}29{\hskip 3pt}12{\hskip 3pt}7{\hskip 3pt}18{\hskip 3pt}0{\hskip 3pt}0\hskip 3pt\sf rs~cd:ab;\allowbreak t~b:acs;\allowbreak uv~at:bcr}\\
335)&\BL{21{\hskip 3pt}14{\hskip 3pt}0{\hskip 3pt}26{\hskip 3pt}11{\hskip 3pt}9{\hskip 3pt}15{\hskip 3pt}0{\hskip 3pt}0\hskip 3pt\sf rs~cd:ab;\allowbreak t~b:acs;\allowbreak uv~at:bcr}\\
336)&\BL{21{\hskip 3pt}14{\hskip 3pt}0{\hskip 3pt}48{\hskip 3pt}8{\hskip 3pt}34{\hskip 3pt}24{\hskip 3pt}0{\hskip 3pt}0\hskip 3pt\sf rs~cd:ab;\allowbreak tu~cs:ab;\allowbreak v~a:bcrtu}\\
337)&\BL{21{\hskip 3pt}14{\hskip 3pt}0{\hskip 3pt}50{\hskip 3pt}8{\hskip 3pt}34{\hskip 3pt}22{\hskip 3pt}0{\hskip 3pt}0\hskip 3pt\sf rs~cd:ab;\allowbreak tu~cs:ab;\allowbreak v~a:bcrtu}\\
338)&\BL{21{\hskip 3pt}14{\hskip 3pt}0{\hskip 3pt}56{\hskip 3pt}8{\hskip 3pt}36{\hskip 3pt}18{\hskip 3pt}0{\hskip 3pt}0\hskip 3pt\sf rs~cd:ab;\allowbreak tu~cs:ab;\allowbreak v~a:bcrtu}\\
339)&\BL{21{\hskip 3pt}14{\hskip 3pt}0{\hskip 3pt}60{\hskip 3pt}8{\hskip 3pt}36{\hskip 3pt}16{\hskip 3pt}0{\hskip 3pt}0\hskip 3pt\sf rs~cd:ab;\allowbreak tu~cs:ab;\allowbreak v~a:bcrtu}\\
340)&\BL{22{\hskip 3pt}4{\hskip 3pt}0{\hskip 3pt}53{\hskip 3pt}38{\hskip 3pt}7{\hskip 3pt}10{\hskip 3pt}0{\hskip 3pt}0\hskip 3pt\sf rs~cd:ab;\allowbreak tu~cr:ab;\allowbreak v~a:bcstu}\\
341)&\BL{22{\hskip 3pt}11{\hskip 3pt}0{\hskip 3pt}24{\hskip 3pt}14{\hskip 3pt}11{\hskip 3pt}17{\hskip 3pt}0{\hskip 3pt}0\hskip 3pt\sf rs~cd:ab;\allowbreak t~b:acs;\allowbreak uv~at:bcr}\\
342)&\BL{22{\hskip 3pt}13{\hskip 3pt}0{\hskip 3pt}24{\hskip 3pt}13{\hskip 3pt}9{\hskip 3pt}19{\hskip 3pt}0{\hskip 3pt}0\hskip 3pt\sf rs~cd:ab;\allowbreak t~b:acs;\allowbreak uv~at:bcr}\\
343)&\BL{22{\hskip 3pt}13{\hskip 3pt}0{\hskip 3pt}27{\hskip 3pt}13{\hskip 3pt}8{\hskip 3pt}19{\hskip 3pt}0{\hskip 3pt}1\hskip 3pt\sf rs~cd:ab;\allowbreak t~b:acs;\allowbreak uv~at:bcr}\\
344)&\BL{22{\hskip 3pt}14{\hskip 3pt}0{\hskip 3pt}25{\hskip 3pt}12{\hskip 3pt}10{\hskip 3pt}16{\hskip 3pt}0{\hskip 3pt}0\hskip 3pt\sf rs~cd:ab;\allowbreak t~b:acs;\allowbreak uv~at:bcr}\\
345)&\BL{22{\hskip 3pt}15{\hskip 3pt}0{\hskip 3pt}28{\hskip 3pt}11{\hskip 3pt}11{\hskip 3pt}12{\hskip 3pt}0{\hskip 3pt}0\hskip 3pt\sf rs~cd:ab;\allowbreak t~b:acs;\allowbreak uv~at:bcr}\\
346)&\BL{22{\hskip 3pt}23{\hskip 3pt}0{\hskip 3pt}19{\hskip 3pt}14{\hskip 3pt}0{\hskip 3pt}0{\hskip 3pt}27{\hskip 3pt}0\hskip 3pt\sf rs~bd:ac;\allowbreak t~c:abs;\allowbreak u~c:abrt;\allowbreak v~c:abrtu}\\
347)&\BL{22{\hskip 3pt}26{\hskip 3pt}9{\hskip 3pt}22{\hskip 3pt}0{\hskip 3pt}39{\hskip 3pt}15{\hskip 3pt}0{\hskip 3pt}0\hskip 3pt\sf r~c:ab;\allowbreak st~cd:abr;\allowbreak uv~cr:at}\\
348)&\BL{23{\hskip 3pt}13{\hskip 3pt}0{\hskip 3pt}25{\hskip 3pt}14{\hskip 3pt}10{\hskip 3pt}18{\hskip 3pt}0{\hskip 3pt}0\hskip 3pt\sf rs~cd:ab;\allowbreak t~b:acs;\allowbreak uv~at:bcr}\\
349)&\BL{23{\hskip 3pt}13{\hskip 3pt}0{\hskip 3pt}27{\hskip 3pt}23{\hskip 3pt}0{\hskip 3pt}9{\hskip 3pt}0{\hskip 3pt}0\hskip 3pt\sf rs~cd:ab;\allowbreak t~b:acs;\allowbreak uv~at:bcr}\\
350)&\BL{23{\hskip 3pt}13{\hskip 3pt}0{\hskip 3pt}30{\hskip 3pt}14{\hskip 3pt}8{\hskip 3pt}22{\hskip 3pt}0{\hskip 3pt}0\hskip 3pt\sf rs~cd:ab;\allowbreak t~b:acs;\allowbreak uv~at:bcr}\\
351)&\BL{23{\hskip 3pt}13{\hskip 3pt}0{\hskip 3pt}30{\hskip 3pt}14{\hskip 3pt}8{\hskip 3pt}20{\hskip 3pt}0{\hskip 3pt}2\hskip 3pt\sf rs~cd:ab;\allowbreak t~b:acs;\allowbreak uv~at:bcr}\\
352)&\BL{23{\hskip 3pt}14{\hskip 3pt}0{\hskip 3pt}27{\hskip 3pt}13{\hskip 3pt}11{\hskip 3pt}14{\hskip 3pt}0{\hskip 3pt}0\hskip 3pt\sf rs~cd:ab;\allowbreak t~b:acs;\allowbreak uv~at:bcr}\\
353)&\BL{23{\hskip 3pt}15{\hskip 3pt}0{\hskip 3pt}57{\hskip 3pt}10{\hskip 3pt}23{\hskip 3pt}20{\hskip 3pt}0{\hskip 3pt}0\hskip 3pt\sf rs~cd:ab;\allowbreak tu~cs:ab;\allowbreak v~a:bcrtu}\\
354)&\BL{23{\hskip 3pt}15{\hskip 3pt}0{\hskip 3pt}57{\hskip 3pt}10{\hskip 3pt}22{\hskip 3pt}22{\hskip 3pt}0{\hskip 3pt}0\hskip 3pt\sf rs~cd:ab;\allowbreak tu~cs:ab;\allowbreak v~a:bcrtu}\\
355)&\BL{23{\hskip 3pt}17{\hskip 3pt}0{\hskip 3pt}27{\hskip 3pt}14{\hskip 3pt}4{\hskip 3pt}13{\hskip 3pt}0{\hskip 3pt}0\hskip 3pt\sf rs~cd:ab;\allowbreak t~b:acs;\allowbreak uv~at:bcr}\\
356)&\BL{23{\hskip 3pt}17{\hskip 3pt}0{\hskip 3pt}71{\hskip 3pt}8{\hskip 3pt}34{\hskip 3pt}17{\hskip 3pt}0{\hskip 3pt}0\hskip 3pt\sf rs~cd:ab;\allowbreak tu~cs:ab;\allowbreak v~a:bcrtu}\\
357)&\BL{23{\hskip 3pt}24{\hskip 3pt}0{\hskip 3pt}20{\hskip 3pt}15{\hskip 3pt}0{\hskip 3pt}0{\hskip 3pt}25{\hskip 3pt}0\hskip 3pt\sf rs~bd:ac;\allowbreak t~c:abs;\allowbreak u~c:abrt;\allowbreak v~c:abrtu}\\
358)&\BL{24{\hskip 3pt}13{\hskip 3pt}3{\hskip 3pt}28{\hskip 3pt}25{\hskip 3pt}7{\hskip 3pt}0{\hskip 3pt}0{\hskip 3pt}0\hskip 3pt\sf rs~cd:ab;\allowbreak t~a:bcs;\allowbreak uv~bt:acr}\\
359)&\BL{24{\hskip 3pt}13{\hskip 3pt}3{\hskip 3pt}29{\hskip 3pt}24{\hskip 3pt}7{\hskip 3pt}0{\hskip 3pt}0{\hskip 3pt}0\hskip 3pt\sf rs~cd:ab;\allowbreak t~a:bcs;\allowbreak uv~bt:acr}\\
360)&\BL{24{\hskip 3pt}14{\hskip 3pt}1{\hskip 3pt}28{\hskip 3pt}23{\hskip 3pt}0{\hskip 3pt}10{\hskip 3pt}0{\hskip 3pt}0\hskip 3pt\sf rs~cd:ab;\allowbreak t~b:acs;\allowbreak uv~at:bcr}\\
361)&\BL{24{\hskip 3pt}16{\hskip 3pt}0{\hskip 3pt}31{\hskip 3pt}12{\hskip 3pt}12{\hskip 3pt}17{\hskip 3pt}0{\hskip 3pt}0\hskip 3pt\sf rs~cd:ab;\allowbreak t~b:acs;\allowbreak uv~at:bcr}\\
362)&\BL{24{\hskip 3pt}19{\hskip 3pt}0{\hskip 3pt}77{\hskip 3pt}8{\hskip 3pt}36{\hskip 3pt}17{\hskip 3pt}0{\hskip 3pt}0\hskip 3pt\sf rs~cd:ab;\allowbreak tu~cs:ab;\allowbreak v~a:bcrtu}\\
363)&\BL{24{\hskip 3pt}27{\hskip 3pt}15{\hskip 3pt}24{\hskip 3pt}0{\hskip 3pt}33{\hskip 3pt}11{\hskip 3pt}0{\hskip 3pt}0\hskip 3pt\sf r~c:ab;\allowbreak st~cd:abr;\allowbreak uv~cr:at}\\
364)&\BL{25{\hskip 3pt}14{\hskip 3pt}1{\hskip 3pt}29{\hskip 3pt}25{\hskip 3pt}0{\hskip 3pt}10{\hskip 3pt}0{\hskip 3pt}0\hskip 3pt\sf rs~cd:ab;\allowbreak t~b:acs;\allowbreak uv~at:bcr}\\
365)&\BL{25{\hskip 3pt}16{\hskip 3pt}0{\hskip 3pt}36{\hskip 3pt}14{\hskip 3pt}8{\hskip 3pt}21{\hskip 3pt}0{\hskip 3pt}0\hskip 3pt\sf rs~cd:ab;\allowbreak t~b:acs;\allowbreak uv~at:bcr}\\
}\Q\ti{%
366)&\BL{25{\hskip 3pt}41{\hskip 3pt}0{\hskip 3pt}81{\hskip 3pt}6{\hskip 3pt}30{\hskip 3pt}7{\hskip 3pt}0{\hskip 3pt}0\hskip 3pt\sf rs~ac:bd;\allowbreak t~r:abs;\allowbreak u~(ar):bdt\allowbreak/213}\\
367)&\BL{26{\hskip 3pt}15{\hskip 3pt}0{\hskip 3pt}31{\hskip 3pt}15{\hskip 3pt}12{\hskip 3pt}19{\hskip 3pt}0{\hskip 3pt}1\hskip 3pt\sf rs~cd:ab;\allowbreak t~b:acs;\allowbreak uv~at:bcr}\\
368)&\BL{26{\hskip 3pt}16{\hskip 3pt}0{\hskip 3pt}31{\hskip 3pt}14{\hskip 3pt}13{\hskip 3pt}21{\hskip 3pt}0{\hskip 3pt}0\hskip 3pt\sf rs~cd:ab;\allowbreak t~b:acs;\allowbreak uv~at:bcr}\\
369)&\BL{26{\hskip 3pt}17{\hskip 3pt}0{\hskip 3pt}25{\hskip 3pt}21{\hskip 3pt}6{\hskip 3pt}7{\hskip 3pt}0{\hskip 3pt}0\hskip 3pt\sf rs~cd:ab;\allowbreak t~b:acs;\allowbreak u~b:acst\allowbreak/197}\\
370)&\BL{26{\hskip 3pt}18{\hskip 3pt}0{\hskip 3pt}32{\hskip 3pt}13{\hskip 3pt}12{\hskip 3pt}17{\hskip 3pt}0{\hskip 3pt}0\hskip 3pt\sf rs~cd:ab;\allowbreak t~b:acs;\allowbreak uv~at:bcr}\\
371)&\BL{26{\hskip 3pt}19{\hskip 3pt}0{\hskip 3pt}61{\hskip 3pt}9{\hskip 3pt}40{\hskip 3pt}33{\hskip 3pt}0{\hskip 3pt}0\hskip 3pt\sf rs~cd:ab;\allowbreak tu~cs:ab;\allowbreak v~a:bcrtu}\\
372)&\BL{26{\hskip 3pt}19{\hskip 3pt}0{\hskip 3pt}80{\hskip 3pt}9{\hskip 3pt}39{\hskip 3pt}20{\hskip 3pt}0{\hskip 3pt}0\hskip 3pt\sf rs~cd:ab;\allowbreak tu~cs:ab;\allowbreak v~a:bcrtu}\\
373)&\BL{27{\hskip 3pt}15{\hskip 3pt}0{\hskip 3pt}32{\hskip 3pt}17{\hskip 3pt}10{\hskip 3pt}18{\hskip 3pt}0{\hskip 3pt}2\hskip 3pt\sf rs~cd:ab;\allowbreak t~b:acs;\allowbreak uv~at:bcr}\\
374)&\BL{27{\hskip 3pt}15{\hskip 3pt}0{\hskip 3pt}32{\hskip 3pt}16{\hskip 3pt}12{\hskip 3pt}26{\hskip 3pt}0{\hskip 3pt}0\hskip 3pt\sf rs~cd:ab;\allowbreak t~b:acs;\allowbreak uv~at:bcr}\\
375)&\BL{27{\hskip 3pt}16{\hskip 3pt}0{\hskip 3pt}33{\hskip 3pt}15{\hskip 3pt}13{\hskip 3pt}22{\hskip 3pt}0{\hskip 3pt}0\hskip 3pt\sf rs~cd:ab;\allowbreak t~b:acs;\allowbreak uv~at:bcr}\\
376)&\BL{27{\hskip 3pt}24{\hskip 3pt}15{\hskip 3pt}24{\hskip 3pt}0{\hskip 3pt}39{\hskip 3pt}16{\hskip 3pt}0{\hskip 3pt}0\hskip 3pt\sf r~c:ab;\allowbreak st~cd:abr;\allowbreak uv~cr:at}\\
377)&\BL{27{\hskip 3pt}29{\hskip 3pt}0{\hskip 3pt}23{\hskip 3pt}17{\hskip 3pt}0{\hskip 3pt}0{\hskip 3pt}31{\hskip 3pt}0\hskip 3pt\sf rs~bd:ac;\allowbreak t~c:abs;\allowbreak u~c:abrt;\allowbreak v~c:abrtu}\\
378)&\BL{28{\hskip 3pt}16{\hskip 3pt}1{\hskip 3pt}31{\hskip 3pt}29{\hskip 3pt}0{\hskip 3pt}11{\hskip 3pt}0{\hskip 3pt}0\hskip 3pt\sf rs~cd:ab;\allowbreak t~b:acs;\allowbreak uv~at:bcr}\\
379)&\BL{28{\hskip 3pt}16{\hskip 3pt}1{\hskip 3pt}33{\hskip 3pt}27{\hskip 3pt}11{\hskip 3pt}0{\hskip 3pt}0{\hskip 3pt}0\hskip 3pt\sf rs~cd:ab;\allowbreak t~a:bcs;\allowbreak uv~bt:acr}\\
380)&\BL{28{\hskip 3pt}19{\hskip 3pt}0{\hskip 3pt}35{\hskip 3pt}14{\hskip 3pt}14{\hskip 3pt}17{\hskip 3pt}0{\hskip 3pt}0\hskip 3pt\sf rs~cd:ab;\allowbreak t~b:acs;\allowbreak uv~at:bcr}\\
381)&\BL{28{\hskip 3pt}20{\hskip 3pt}0{\hskip 3pt}38{\hskip 3pt}13{\hskip 3pt}14{\hskip 3pt}21{\hskip 3pt}0{\hskip 3pt}0\hskip 3pt\sf rs~cd:ab;\allowbreak t~b:acs;\allowbreak uv~at:bcr}\\
382)&\BL{28{\hskip 3pt}32{\hskip 3pt}13{\hskip 3pt}28{\hskip 3pt}0{\hskip 3pt}45{\hskip 3pt}17{\hskip 3pt}0{\hskip 3pt}0\hskip 3pt\sf r~c:ab;\allowbreak st~cd:abr;\allowbreak uv~cr:at}\\
383)&\BL{29{\hskip 3pt}19{\hskip 3pt}0{\hskip 3pt}70{\hskip 3pt}12{\hskip 3pt}30{\hskip 3pt}36{\hskip 3pt}0{\hskip 3pt}0\hskip 3pt\sf rs~cd:ab;\allowbreak tu~cs:ab;\allowbreak v~a:bcrtu}\\
384)&\BL{30{\hskip 3pt}8{\hskip 3pt}0{\hskip 3pt}81{\hskip 3pt}48{\hskip 3pt}3{\hskip 3pt}6{\hskip 3pt}0{\hskip 3pt}0\hskip 3pt\sf rs~cd:ab;\allowbreak tu~cr:ab;\allowbreak v~a:bcstu}\\
385)&\BL{30{\hskip 3pt}23{\hskip 3pt}0{\hskip 3pt}91{\hskip 3pt}10{\hskip 3pt}44{\hskip 3pt}23{\hskip 3pt}0{\hskip 3pt}0\hskip 3pt\sf rs~cd:ab;\allowbreak tu~cs:ab;\allowbreak v~a:bcrtu}\\
386)&\BL{31{\hskip 3pt}16{\hskip 3pt}0{\hskip 3pt}36{\hskip 3pt}19{\hskip 3pt}16{\hskip 3pt}24{\hskip 3pt}0{\hskip 3pt}1\hskip 3pt\sf rs~cd:ab;\allowbreak t~b:acs;\allowbreak uv~at:bcr}\\
387)&\BL{31{\hskip 3pt}17{\hskip 3pt}0{\hskip 3pt}34{\hskip 3pt}19{\hskip 3pt}14{\hskip 3pt}24{\hskip 3pt}0{\hskip 3pt}0\hskip 3pt\sf rs~cd:ab;\allowbreak t~b:acs;\allowbreak uv~at:bcr}\\
388)&\BL{31{\hskip 3pt}17{\hskip 3pt}0{\hskip 3pt}37{\hskip 3pt}20{\hskip 3pt}11{\hskip 3pt}19{\hskip 3pt}0{\hskip 3pt}3\hskip 3pt\sf rs~cd:ab;\allowbreak t~b:acs;\allowbreak uv~at:bcr}\\
389)&\BL{31{\hskip 3pt}18{\hskip 3pt}0{\hskip 3pt}38{\hskip 3pt}18{\hskip 3pt}14{\hskip 3pt}20{\hskip 3pt}0{\hskip 3pt}2\hskip 3pt\sf rs~cd:ab;\allowbreak t~b:acs;\allowbreak uv~at:bcr}\\
390)&\BL{31{\hskip 3pt}24{\hskip 3pt}0{\hskip 3pt}97{\hskip 3pt}11{\hskip 3pt}53{\hskip 3pt}19{\hskip 3pt}0{\hskip 3pt}0\hskip 3pt\sf rs~cd:ab;\allowbreak tu~cs:ab;\allowbreak v~a:bcrtu}\\
391)&\BL{32{\hskip 3pt}18{\hskip 3pt}1{\hskip 3pt}37{\hskip 3pt}32{\hskip 3pt}0{\hskip 3pt}13{\hskip 3pt}0{\hskip 3pt}0\hskip 3pt\sf rs~cd:ab;\allowbreak t~b:acs;\allowbreak uv~at:bcr}\\
392)&\BL{32{\hskip 3pt}21{\hskip 3pt}0{\hskip 3pt}39{\hskip 3pt}17{\hskip 3pt}14{\hskip 3pt}21{\hskip 3pt}0{\hskip 3pt}1\hskip 3pt\sf rs~cd:ab;\allowbreak t~b:acs;\allowbreak uv~at:bcr}\\
393)&\BL{32{\hskip 3pt}22{\hskip 3pt}0{\hskip 3pt}70{\hskip 3pt}12{\hskip 3pt}55{\hskip 3pt}36{\hskip 3pt}0{\hskip 3pt}0\hskip 3pt\sf rs~cd:ab;\allowbreak tu~cs:ab;\allowbreak v~a:bcrtu}\\
394)&\BL{33{\hskip 3pt}21{\hskip 3pt}0{\hskip 3pt}43{\hskip 3pt}18{\hskip 3pt}13{\hskip 3pt}28{\hskip 3pt}0{\hskip 3pt}0\hskip 3pt\sf rs~cd:ab;\allowbreak t~b:acs;\allowbreak uv~at:bcr}\\
395)&\BL{33{\hskip 3pt}25{\hskip 3pt}0{\hskip 3pt}97{\hskip 3pt}11{\hskip 3pt}46{\hskip 3pt}29{\hskip 3pt}0{\hskip 3pt}0\hskip 3pt\sf rs~cd:ab;\allowbreak tu~cs:ab;\allowbreak v~a:bcrtu}\\
396)&\BL{33{\hskip 3pt}25{\hskip 3pt}0{\hskip 3pt}97{\hskip 3pt}11{\hskip 3pt}47{\hskip 3pt}28{\hskip 3pt}0{\hskip 3pt}0\hskip 3pt\sf rs~cd:ab;\allowbreak tu~cs:ab;\allowbreak v~a:bcrtu}\\
397)&\BL{33{\hskip 3pt}25{\hskip 3pt}0{\hskip 3pt}99{\hskip 3pt}11{\hskip 3pt}47{\hskip 3pt}27{\hskip 3pt}0{\hskip 3pt}0\hskip 3pt\sf rs~cd:ab;\allowbreak tu~cs:ab;\allowbreak v~a:bcrtu}\\
398)&\BL{33{\hskip 3pt}25{\hskip 3pt}0{\hskip 3pt}99{\hskip 3pt}11{\hskip 3pt}46{\hskip 3pt}28{\hskip 3pt}0{\hskip 3pt}0\hskip 3pt\sf rs~cd:ab;\allowbreak tu~cs:ab;\allowbreak v~a:bcrtu}\\
399)&\BL{34{\hskip 3pt}20{\hskip 3pt}0{\hskip 3pt}44{\hskip 3pt}19{\hskip 3pt}16{\hskip 3pt}25{\hskip 3pt}0{\hskip 3pt}2\hskip 3pt\sf rs~cd:ab;\allowbreak t~b:acs;\allowbreak uv~at:bcr}\\
400)&\BL{34{\hskip 3pt}20{\hskip 3pt}0{\hskip 3pt}50{\hskip 3pt}19{\hskip 3pt}16{\hskip 3pt}25{\hskip 3pt}0{\hskip 3pt}0\hskip 3pt\sf rs~cd:ab;\allowbreak t~b:acs;\allowbreak uv~at:bcr}\\
401)&\BL{34{\hskip 3pt}21{\hskip 3pt}0{\hskip 3pt}42{\hskip 3pt}19{\hskip 3pt}15{\hskip 3pt}20{\hskip 3pt}0{\hskip 3pt}2\hskip 3pt\sf rs~cd:ab;\allowbreak t~b:acs;\allowbreak uv~at:bcr}\\
402)&\BL{35{\hskip 3pt}22{\hskip 3pt}0{\hskip 3pt}45{\hskip 3pt}19{\hskip 3pt}15{\hskip 3pt}28{\hskip 3pt}0{\hskip 3pt}0\hskip 3pt\sf rs~cd:ab;\allowbreak t~b:acs;\allowbreak uv~at:bcr}\\
403)&\BL{35{\hskip 3pt}37{\hskip 3pt}0{\hskip 3pt}30{\hskip 3pt}23{\hskip 3pt}0{\hskip 3pt}0{\hskip 3pt}37{\hskip 3pt}0\hskip 3pt\sf rs~bd:ac;\allowbreak t~c:abs;\allowbreak u~c:abrt;\allowbreak v~c:abrtu}\\
}\Q\tj{%
404)&\BL{36{\hskip 3pt}21{\hskip 3pt}0{\hskip 3pt}47{\hskip 3pt}21{\hskip 3pt}14{\hskip 3pt}29{\hskip 3pt}0{\hskip 3pt}3\hskip 3pt\sf rs~cd:ab;\allowbreak t~b:acs;\allowbreak uv~at:bcr}\\
405)&\BL{36{\hskip 3pt}25{\hskip 3pt}0{\hskip 3pt}35{\hskip 3pt}29{\hskip 3pt}6{\hskip 3pt}11{\hskip 3pt}0{\hskip 3pt}0\hskip 3pt\sf rs~cd:ab;\allowbreak t~b:acs;\allowbreak u~b:acst\allowbreak/201}\\
406)&\BL{36{\hskip 3pt}26{\hskip 3pt}0{\hskip 3pt}42{\hskip 3pt}21{\hskip 3pt}9{\hskip 3pt}21{\hskip 3pt}0{\hskip 3pt}0\hskip 3pt\sf rs~cd:ab;\allowbreak t~b:acs;\allowbreak uv~at:bcr}\\
407)&\BL{38{\hskip 3pt}23{\hskip 3pt}0{\hskip 3pt}44{\hskip 3pt}21{\hskip 3pt}19{\hskip 3pt}26{\hskip 3pt}0{\hskip 3pt}0\hskip 3pt\sf rs~cd:ab;\allowbreak t~b:acs;\allowbreak uv~at:bcr}\\
408)&\BL{38{\hskip 3pt}23{\hskip 3pt}0{\hskip 3pt}45{\hskip 3pt}23{\hskip 3pt}12{\hskip 3pt}29{\hskip 3pt}0{\hskip 3pt}3\hskip 3pt\sf rs~cd:ab;\allowbreak t~b:acs;\allowbreak uv~at:bcr}\\
409)&\BL{38{\hskip 3pt}24{\hskip 3pt}0{\hskip 3pt}46{\hskip 3pt}21{\hskip 3pt}18{\hskip 3pt}21{\hskip 3pt}0{\hskip 3pt}0\hskip 3pt\sf rs~cd:ab;\allowbreak t~b:acs;\allowbreak uv~at:bcr}\\
410)&\BL{40{\hskip 3pt}21{\hskip 3pt}0{\hskip 3pt}44{\hskip 3pt}25{\hskip 3pt}19{\hskip 3pt}30{\hskip 3pt}0{\hskip 3pt}0\hskip 3pt\sf rs~cd:ab;\allowbreak t~b:acs;\allowbreak uv~at:bcr}\\
411)&\BL{40{\hskip 3pt}23{\hskip 3pt}0{\hskip 3pt}47{\hskip 3pt}25{\hskip 3pt}14{\hskip 3pt}27{\hskip 3pt}0{\hskip 3pt}3\hskip 3pt\sf rs~cd:ab;\allowbreak t~b:acs;\allowbreak uv~at:bcr}\\
412)&\BL{40{\hskip 3pt}23{\hskip 3pt}0{\hskip 3pt}51{\hskip 3pt}23{\hskip 3pt}18{\hskip 3pt}29{\hskip 3pt}0{\hskip 3pt}3\hskip 3pt\sf rs~cd:ab;\allowbreak t~b:acs;\allowbreak uv~at:bcr}\\
413)&\BL{40{\hskip 3pt}24{\hskip 3pt}0{\hskip 3pt}47{\hskip 3pt}22{\hskip 3pt}20{\hskip 3pt}31{\hskip 3pt}0{\hskip 3pt}0\hskip 3pt\sf rs~cd:ab;\allowbreak t~b:acs;\allowbreak uv~at:bcr}\\
414)&\BL{40{\hskip 3pt}25{\hskip 3pt}0{\hskip 3pt}44{\hskip 3pt}24{\hskip 3pt}14{\hskip 3pt}31{\hskip 3pt}0{\hskip 3pt}0\hskip 3pt\sf rs~cd:ab;\allowbreak t~b:acs;\allowbreak uv~at:bcr}\\
415)&\BL{40{\hskip 3pt}25{\hskip 3pt}0{\hskip 3pt}49{\hskip 3pt}22{\hskip 3pt}18{\hskip 3pt}25{\hskip 3pt}0{\hskip 3pt}2\hskip 3pt\sf rs~cd:ab;\allowbreak t~b:acs;\allowbreak uv~at:bcr}\\
416)&\BL{40{\hskip 3pt}31{\hskip 3pt}0{\hskip 3pt}40{\hskip 3pt}32{\hskip 3pt}3{\hskip 3pt}15{\hskip 3pt}0{\hskip 3pt}0\hskip 3pt\sf rs~cd:ab;\allowbreak t~b:adr;\allowbreak u~b:adrt;\allowbreak v~b:adrtu}\\
417)&\BL{41{\hskip 3pt}22{\hskip 3pt}0{\hskip 3pt}45{\hskip 3pt}26{\hskip 3pt}18{\hskip 3pt}29{\hskip 3pt}0{\hskip 3pt}0\hskip 3pt\sf rs~cd:ab;\allowbreak t~b:acs;\allowbreak uv~at:bcr}\\
418)&\BL{42{\hskip 3pt}21{\hskip 3pt}8{\hskip 3pt}32{\hskip 3pt}29{\hskip 3pt}14{\hskip 3pt}18{\hskip 3pt}0{\hskip 3pt}0\hskip 3pt\sf rs~cd:ab;\allowbreak t~a:bcs;\allowbreak u~a:bcst;\allowbreak v~a:bcstu}\\
419)&\BL{42{\hskip 3pt}45{\hskip 3pt}21{\hskip 3pt}42{\hskip 3pt}0{\hskip 3pt}63{\hskip 3pt}25{\hskip 3pt}0{\hskip 3pt}0\hskip 3pt\sf r~c:ab;\allowbreak st~cd:abr;\allowbreak uv~cr:at}\\
420)&\BL{42{\hskip 3pt}45{\hskip 3pt}24{\hskip 3pt}42{\hskip 3pt}0{\hskip 3pt}60{\hskip 3pt}22{\hskip 3pt}0{\hskip 3pt}0\hskip 3pt\sf r~c:ab;\allowbreak st~cd:abr;\allowbreak uv~cr:at}\\
421)&\BL{45{\hskip 3pt}31{\hskip 3pt}0{\hskip 3pt}43{\hskip 3pt}37{\hskip 3pt}8{\hskip 3pt}12{\hskip 3pt}0{\hskip 3pt}0\hskip 3pt\sf rs~cd:ab;\allowbreak t~b:adr;\allowbreak u~b:adrt;\allowbreak v~b:adrtu}\\
422)&\BL{46{\hskip 3pt}28{\hskip 3pt}0{\hskip 3pt}62{\hskip 3pt}25{\hskip 3pt}22{\hskip 3pt}33{\hskip 3pt}0{\hskip 3pt}0\hskip 3pt\sf rs~cd:ab;\allowbreak t~b:acs;\allowbreak uv~at:bcr}\\
423)&\BL{48{\hskip 3pt}27{\hskip 3pt}0{\hskip 3pt}57{\hskip 3pt}31{\hskip 3pt}16{\hskip 3pt}29{\hskip 3pt}0{\hskip 3pt}5\hskip 3pt\sf rs~cd:ab;\allowbreak t~b:acs;\allowbreak uv~at:bcr}\\
424)&\BL{54{\hskip 3pt}48{\hskip 3pt}27{\hskip 3pt}48{\hskip 3pt}0{\hskip 3pt}81{\hskip 3pt}35{\hskip 3pt}0{\hskip 3pt}0\hskip 3pt\sf r~c:ab;\allowbreak st~cd:abr;\allowbreak uv~cr:at}\\
425)&\BL{55{\hskip 3pt}34{\hskip 3pt}0{\hskip 3pt}65{\hskip 3pt}30{\hskip 3pt}27{\hskip 3pt}36{\hskip 3pt}0{\hskip 3pt}0\hskip 3pt\sf rs~cd:ab;\allowbreak t~b:acs;\allowbreak uv~at:bcr}\\
426)&\BL{58{\hskip 3pt}44{\hskip 3pt}0{\hskip 3pt}78{\hskip 3pt}27{\hskip 3pt}26{\hskip 3pt}27{\hskip 3pt}0{\hskip 3pt}0\hskip 3pt\sf rs~cd:ab;\allowbreak t~b:acs;\allowbreak uv~at:bcr}\\
427)&\BL{61{\hskip 3pt}36{\hskip 3pt}0{\hskip 3pt}71{\hskip 3pt}34{\hskip 3pt}31{\hskip 3pt}45{\hskip 3pt}0{\hskip 3pt}0\hskip 3pt\sf rs~cd:ab;\allowbreak t~b:acs;\allowbreak uv~at:bcr}\\
428)&\BL{62{\hskip 3pt}36{\hskip 3pt}0{\hskip 3pt}77{\hskip 3pt}35{\hskip 3pt}30{\hskip 3pt}47{\hskip 3pt}0{\hskip 3pt}0\hskip 3pt\sf rs~cd:ab;\allowbreak t~b:acs;\allowbreak uv~at:bcr}\\
429)&\BL{66{\hskip 3pt}44{\hskip 3pt}0{\hskip 3pt}86{\hskip 3pt}33{\hskip 3pt}34{\hskip 3pt}41{\hskip 3pt}0{\hskip 3pt}0\hskip 3pt\sf rs~cd:ab;\allowbreak t~b:acs;\allowbreak uv~at:bcr}\\
430)&\BL{67{\hskip 3pt}42{\hskip 3pt}0{\hskip 3pt}79{\hskip 3pt}36{\hskip 3pt}33{\hskip 3pt}46{\hskip 3pt}0{\hskip 3pt}0\hskip 3pt\sf rs~cd:ab;\allowbreak t~b:acs;\allowbreak uv~at:bcr}\\
431)&\BL{70{\hskip 3pt}35{\hskip 3pt}12{\hskip 3pt}54{\hskip 3pt}49{\hskip 3pt}22{\hskip 3pt}32{\hskip 3pt}0{\hskip 3pt}0\hskip 3pt\sf rs~cd:ab;\allowbreak t~a:bcs;\allowbreak u~a:bcst;\allowbreak v~a:bcstu}\\
432)&\BL{72{\hskip 3pt}39{\hskip 3pt}0{\hskip 3pt}85{\hskip 3pt}47{\hskip 3pt}26{\hskip 3pt}43{\hskip 3pt}0{\hskip 3pt}5\hskip 3pt\sf rs~cd:ab;\allowbreak t~b:acs;\allowbreak uv~at:bcr}\\
433)&\BL{74{\hskip 3pt}44{\hskip 3pt}0{\hskip 3pt}91{\hskip 3pt}41{\hskip 3pt}36{\hskip 3pt}57{\hskip 3pt}0{\hskip 3pt}0\hskip 3pt\sf rs~cd:ab;\allowbreak t~b:acs;\allowbreak uv~at:bcr}\\
434)&\BL{80{\hskip 3pt}55{\hskip 3pt}5{\hskip 3pt}70{\hskip 3pt}62{\hskip 3pt}14{\hskip 3pt}29{\hskip 3pt}0{\hskip 3pt}0\hskip 3pt\sf rs~cd:ab;\allowbreak t~b:adr;\allowbreak u~b:adrt;\allowbreak v~b:adrtu}\\
435)&\BL{80{\hskip 3pt}57{\hskip 3pt}0{\hskip 3pt}92{\hskip 3pt}48{\hskip 3pt}18{\hskip 3pt}51{\hskip 3pt}0{\hskip 3pt}0\hskip 3pt\sf rs~cd:ab;\allowbreak t~b:acs;\allowbreak uv~at:bcr}\\
}
\hbox to 0.96\linewidth{\hss\ta{ }\tb\hss}
\hbox to 0.96\linewidth{\hss\tc{ }\td\hss}
\hbox to 0.96\linewidth{\hss\te{ }\tf\hss}
\hbox to 0.96\linewidth{\hss\tg{ }\th\hss}
\hbox to 0.96\linewidth{\hss\ti{ }\tj\hss}
\end{center}

\end{document}